\declaretheorem[name=Theorem]{theorem}
\declaretheorem[name=Construction]{construction}
\declaretheorem[name=Definition]{definition}
\declaretheorem[name=Corollary]{corollary}
\declaretheorem[name=Lemma]{lemma}
\declaretheorem[name=Proposition]{proposition}
\definecolor{mgreen}{rgb}{.1,.7,0}
\newcommand{\N}{\mathbb{N}}
\newcommand{\R}{\mathbb{R}}
\newcommand{\C}{\mathbb{C}}
\newcommand{\ket}[1]{|#1\rangle} %
\newcommand{\bra}[1]{\langle #1|} %
\newcommand{\tr}[1]{\mathrm{Tr}\left(#1\right)} %
\newcommand\calH{\mathcal{H}}
\newcommand\calO{\mathcal{O}}
\newcommand\veps{\varepsilon}
\newcommand{\one}{\mathds 1}
\newcommand{\bad}{\textsf{BAD}}
\definecolor{webgreen}{rgb}{0,.5,0}
\definecolor{webblue}{rgb}{0,0,.5}
\newcommand\algo{\mathcal}
\newcommand{\KeyGen}{\ensuremath{\mathsf{KeyGen}}}
\newcommand{\Mac}{\ensuremath{\mathsf{Mac}}}
\newcommand{\Ver}{\ensuremath{\mathsf{Ver}}}
\newcommand{\Sign}{\ensuremath{\mathsf{Sign}}}
\newcommand{\expref}[2]{\texorpdfstring{\hyperref[#2]{#1~\ref{#2}}}{#1~\ref{#2}}}
\newcommand{\bits}{\{0,1\}}
\newcommand{\inrand}{\in_R}
\newcommand{\from}{\leftarrow}
\newcommand{\BU}{\ensuremath{\mathsf{BU}}\xspace}
\newcommand{\mBU}{\ensuremath{\mathsf{mBU}}\xspace}
\newcommand{\sBU}{\ensuremath{\mathsf{sBU}}\xspace}
\newcommand{\EUFCMA}{\ensuremath{\mathsf{EUF\mbox{-}CMA}}\xspace}
\newcommand{\SUFCMA}{\ensuremath{\mathsf{SUF\mbox{-}CMA}}\xspace}
\newcommand{\BZ}{\ensuremath{\mathsf{PO}}\xspace}
\newcommand{\GYZ}{\ensuremath{\mathsf{GYZ}}\xspace}
\newcommand{\PRF}{\ensuremath{\mathsf{PRF}}\xspace}
\newcommand{\qPRF}{\ensuremath{\mathsf{qPRF}}\xspace}
\newcommand{\pkreg}{\ensuremath{P\!K}\xspace}
\newcommand{\skreg}{\ensuremath{S\!K}\xspace}
\newcommand{\bfilter}{Bernoulli-preserving hash\xspace}
\newcommand{\blindforge}{\ensuremath{\mathsf{BlindForge}}\xspace}
\newcommand{\acc}{\ensuremath{\mathsf{acc}}}
\newcommand{\rej}{\ensuremath{\mathsf{rej}}}
\newcommand{\win}{\ensuremath{\mathsf{win}}}
\newcommand{\supp}{\ensuremath{\mathsf{supp}}}
\newcommand{\outerprod}[2]{|#1\rangle\langle #2|}
\newcommand{\LWE}{\ensuremath{\mathsf{LWE}}\xspace}
\DeclareMathOperator*{\Exp}{\mathbb{E}}
\newcommand{\proj}[1]{\ensuremath{|#1\rangle \langle #1|}}
\newcommand{\adver}{\ensuremath{\mathcal{A}}\xspace}
\newcommand{\ketbra}[2]{\left|#1\right\rangle\!\!\left\langle #2\right|}
\newcommand{\braket}[2]{\left\langle #1 \mid #2 \right\rangle}
\newcommand{\Hi}{\mathcal{H}}
\newcommand{\Tr}{\mathrm{Tr}}
\newcommand{\CNOT}{\mathrm{CNOT}}
\newif\ifsubmission
\newcommand{\fs}[1]{}
\newcommand{\ga}[1]{}
\newcommand{\cm}[1]{}
\newcommand{\fs}[1]{[\textcolor{blue}{FS: {#1}}]}
\newcommand{\ga}[1]{{\noindent \textcolor{purple}{\emph{(GA:  #1)}}}{}}
\newcommand{\cm}[1]{{\noindent \textcolor{mgreen}{\emph{(CM:  #1)}}}{}}
\title{Quantum-access-secure message authentication\\ via blind-unforgeability}
\author[1]{Gorjan Alagic}
\author[2]{Christian Majenz}
\author[3]{Alexander Russell}
\author[4]{Fang Song}
\affil[1]{{\small QuICS, University of Maryland, and NIST, Gaithersburg, Maryland}}
\affil[2]{{\small QuSoft and Centrum Wiskunde \& Informatica, Amsterdam}}
\affil[3]{{\small Department of Computer Science and Engineering,
    University of Connecticut}} \affil[4]{{\small Department of
    Computer Science, Portland State University}}
\begin{document}

\maketitle

\begin{abstract}

  Formulating and designing authentication of classical messages in
  the presence of adversaries with quantum query access has been a
  longstanding challenge, as the familiar classical notions of
  unforgeability do not directly translate into meaningful notions in
  the quantum setting. A particular difficulty is how to fairly
  capture the notion of ``predicting an unqueried value'' when the
  adversary can query in quantum superposition.

  We propose a natural definition of unforgeability against quantum
  adversaries called \emph{blind unforgeability}. This notion defines
  a function to be predictable if there exists an adversary who can
  use ``partially blinded'' oracle access to predict values in the
  blinded region. We support the proposal with a number of technical
  results. We begin by establishing that the notion coincides with
  EUF-CMA in the classical setting and go on to demonstrate that the
  notion is satisfied by a number of simple guiding examples, such as
  random functions and quantum-query-secure pseudorandom functions. We
  then show the suitability of blind unforgeability for supporting
  canonical constructions and reductions. We prove that the
  ``hash-and-MAC'' paradigm and the Lamport one-time digital signature
  scheme are indeed unforgeable according to the definition. To
  support our analysis, we additionally define and study a new variety
  of quantum-secure hash functions called \emph{Bernoulli-preserving}.

Finally, we demonstrate that blind unforgeability is
stronger than a previous definition of Boneh and Zhandry [EUROCRYPT
'13, CRYPTO '13] in the sense that we can construct an explicit function family
which is forgeable by an attack that is recognized by blind-unforgeability, yet satisfies the definition by Boneh and Zhandry.

\end{abstract}
\vspace{1cm}

\begin{center}
	\fbox{\begin{minipage}{.98\textwidth}
			Note: An earlier version of this article contained a theorem that the new security notion ``blind-unforgeability'' (\BU) we introduce implies the notion of ``plus-one'' unforgeability (\BZ) that was hitherto the only proposed generalization of \EUFCMA to $>1$ quantum chosen-message queries. Unfortunately the proof contained an error. We thank Shih-Han Hung for discovering the error. We have now removed the claim and the question whether the implication holds is currently open. We would like to emphasize that the example presented in Section \ref{sec:BZisbroken} disqualifies \BZ as a quantum-access generalization of \EUFCMA, so currently there are two reasonable possible definitions of  ``quantum-access \EUFCMA'': the notion of blind-unforgeability proposed in this article, and its conjunction with plus-one unforgeability.
	\end{minipage}}	
\end{center}

\section{Introduction}

Large-scale quantum computers will break widely-deployed public-key
cryptography, and may even threaten certain post-quantum
candidates~\cite{Shor97,NIST16,CDPR16,EHKS14,BS16}. %
Even elementary symmetric-key constructions like Feistel ciphers and
CBC-MACs become vulnerable in quantum attack models where the
adversary is presumed to have quantum query access to some part of the
cryptosystem~\cite{KM10,KM12,KLLN16,SS17}. As an example, consider
encryption in the setting where the adversary has access to the
unitary operator $\ket{x}\ket{y} \mapsto \ket{x}\ket{y \oplus f_k(x)}$,
where $f_k$ is the encryption or decryption function with secret key
$k$. While it is debatable if this model reflects physical
implementations of symmetric-key cryptography, it appears necessary in
a number of generic settings, such as public-key encryption and
hashing with public hash functions. It could also be relevant when
private-key primitives are composed in larger protocols, e.g., by
exposing circuits via obfuscation~\cite{SW14}. Setting down
appropriate security definitions in this quantum attack model is the
subject of several threads of recent research~\cite{BZ13,GHS16}.

In this article, we study authentication of classical information in
the quantum-secure model. Here, the adversary is granted quantum query
access to the signing algorithm of a message authentication code (MAC)
or a digital signature scheme, and is tasked with producing valid
forgeries. In the purely classical setting, we insist that the
forgeries are fresh, i.e., distinct from previous queries to the
oracle. %
When the function may be queried in superposition, however, it's
unclear how to meaningfully reflect this constraint that a forgery was
previously
``unqueried.'' %
For example, it is clear that an adversary that simply queries with a
uniform superposition and then measures a forgery---a feasible attack
against any function---should not be considered successful. On the
other hand, an adversary that uses the same query to discover some
structural property (e.g., a superpolynomial-size period in the MAC)
should be considered a break. Examples like these indicate the
difficulty of the problem. How do we correctly ``price'' the queries?
How do we decide if a forgery is fresh? Furthermore, how can this be
done in a manner that is consistent with these guiding examples?  In
fact, this problem has a natural interpretation that goes well beyond
cryptography: \emph{What does it mean for a classical function to
  appear unpredictable to a quantum oracle algorithm?}\footnote{The
  related notion of ``appearing \emph{random} to quantum oracle
  algorithms'' has a satisfying definition, which can be fulfilled
  efficiently~\cite{Zhandry12}.}

\subsection{Previous approaches}
The first approach to this problem was suggested by Boneh and
Zhandry~\cite{BZ13a}.  %
They define a MAC to be unforgeable if, after making $q$ queries to
the MAC, no adversary can produce $q+1$ valid input-output pairs
except with negligible probability. We will refer to this notion as
``\BZ security'' (\BZ for ``plus one,'' and $k$-\BZ when the adversary
is permitted a maximum of $k$ queries). Among a number of results,
Boneh and Zhandry prove that this notion can be realized by a
quantum-secure pseudorandom function (\qPRF).

Another approach, due to Garg, Yuen and Zhandry~\cite{GYZ17} (\GYZ),
considers a function \emph{one-time} unforgeable if only a trivial
``query, measure in computational basis, output result''
attack\footnote{Technically, the \emph{Stinespring dilation} \cite{Stinespring1955} of a
  computational basis measurement is the most general attack.} is
allowed. Unfortunately, it is not clear how to extend \GYZ to two or
more queries. Furthermore, the single query is allowed in a limited
query model with an non-standard restriction.\footnote{Compared to the
  standard quantum oracle for a classical function, GYZ require the
  output register to be empty prior to the query.} Zhandry recently
showed a separation between \BZ and \GYZ by means of the
powerful tool of obfuscation~\cite{Zhandry17}.

It is interesting to note that similar problems arise in encryption
schemes of \emph{quantum} data and a convincing solution was recently
found~\cite{AGM18,SignCrypt}. However, it relies on the fact that for
quantum messages, \emph{authentication implies secrecy}. This enables
``tricking'' the adversary by replacing their queries with ``trap''
plaintexts to detect replays. As unforgeability and secrecy are
orthogonal in the classical world, adversaries would easily recognize
the spoofed oracle. This renders the approach
of~\cite{AGM18,SignCrypt} inapplicable in this case.

\subsection{Unresolved issues} 

\BZ security, the only candidate definition of quantum-secure
unforgeability in the general, multi-query setting, appears to be
insufficient for several reasons. First, as observed in~\cite{GYZ17},
it is a priori unclear if \BZ security rules out forging on a message
region $A$ while making queries to a signing oracle supported on a
disjoint message region $B$. Second, there may be unique features of
quantum information, such as the destructiveness of quantum
measurement, which \BZ does not capture. In particular, quantum
algorithms must sometimes ``consume'' (i.e., fully measure) a state to
extract some useful information, such as a symmetry in the
oracle. There might be an adversary that makes one or more quantum
queries but then must consume the post-query states completely in
order to make a single, but convincing, forgery.

Surprisingly, prior to this work none of these plausible attack strategies have been exploited to give a separation between \BZ and ``intuitive security.''

\section{Summary of results}

\subsection{A new definition: Blind-unforgeability}

To address the abovementioned issues, and in light of the concrete
``counterexample'' presented below as
\expref{Construction}{con:real-BZ-killer-summary}, we develop a new
definition of many-time unforgeability we call
``blind-unforgeability'' (or \BU). In this approach we examine the
behavior of adversaries in the following experiment. The adversary is
granted quantum oracle access to the MAC, ``blinded'' at a random
region $B$. Specifically, we set $B$ to be a random
$\epsilon$-fraction of the message space, and declare that the oracle
function will output $\bot$ on all of $B$.
\[
  B_\epsilon \Mac_k (x) := \begin{cases} 
    \bot & \text{if $x \in B_\epsilon$},\\
    \Mac_k (x) & \text{otherwise.}
  \end{cases}
\]
Given a MAC $(\Mac, \Ver)$, an adversary $\algo A$, and $\algo A$-selected parameter $\epsilon$, the ``blind forgery experiment'' is:
\begin{enumerate}
\item Generate key $k$ and random blinding $B_\epsilon$; 
\item Produce candidate forgery $(m, t) \from \algo A^{B_\epsilon \Mac_k} (1^n)$.
\item Output $\win$ if $\Ver_k(m, t) = \acc$ and $m \in B_\epsilon$; otherwise output $\rej$.
\end{enumerate}
\begin{definition} A MAC is blind-unforgeable \emph{(\BU)} if for every adversary $(\algo A, \epsilon)$, the probability of winning the blind forgery experiment is negligible.
\end{definition}

In this work, \BU will typically refer to the case where $\algo A$ is an efficient quantum algorithm (QPT) and the oracle is quantum, i.e., $\ket{x}\ket{y} \mapsto \ket{x}\ket{y \oplus B_\epsilon \Mac_k(x)}$. We will also consider $q$-\BU, the information-theoretic variant where the total number of queries is a priori fixed to $q$. We remark that the above definition is also easy to adapt to other settings, e.g., classical security against PPT adversaries, quantum or classical security for digital signatures, etc. 

We remark that one could define a variant of the above where the adversary is allowed to describe the blinding distribution, rather than it being uniform. However, this is not a stronger notion. By a straightforward argument, an adversary wins in the chosen-blinding \BU game if and only if it wins with a uniform $\epsilon$-blinding for inverse-polynomial $\epsilon$. Indeed, the adversary can just simulate its chosen blinding herself, and this still succeeds with inverse polynomial probability when interacting with a standard-blinded oracle (see \expref{Theorem}{thm:simulation} below).

\subsection{Results about blind-unforgeability}

To solidify our confidence in the new notion, we collect a series of
results which we believe establish \BU as a definition of
unforgeability that captures the desired intuitive security
requirement. In particular, we show that
\BU %
classifies a host of representative examples (in fact, all examples
examined thus far) as either forgeable or unforgeable in a way that
agrees with our cryptographic intuition. We show that \BZ certifies
certain MACs as secure but are completely broken by a quantum-access
attack in a \emph{strong and intuitive sense}
(Section~\ref{sec:intro-bz-killer} below). While we cannot currently
prove that \BU is strictly stronger than \BZ as a security notion, we
can show that \BU implies a weaker version of \BZ in
Section~\ref{sec:bu2qbz}.

\paragraph{Relations and characterizations.}

One key technical ingredient that informs our intuition and characterization results about \BU is a general simulation theorem, which tightly controls the deviation in the behavior of an algorithm when subjected to the \BU experiment. 

\begin{theorem}\label{thm:simulation}
	Let $\algo A$ be a quantum query algorithm making at most $T$ queries. Let $f:X \to Y$ be a function, $B_\epsilon$ a random $\epsilon$-blinding subset of $X$, and for each $B\subset X$, let $g_B$ be a function with support $B$. Then
	$$
	\Exp_{B_\epsilon}\left\|\algo A^f(1^n) - \algo A^{f \oplus g_{B_\epsilon}}(1^n)\right\|_1 \leq 2T\sqrt{\epsilon}\,.
	$$
      \end{theorem}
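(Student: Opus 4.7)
The plan is to prove this via a hybrid argument in the spirit of Bennett-Bernstein-Brassard-Vazirani, combined with the observation that $F$ and $F \oplus P$ agree outside $B_\epsilon$, so the two runs can only diverge to the extent that queries place amplitude on $B_\epsilon$; averaging the query weight over a random density-$\epsilon$ subset then produces the claimed bound.

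Write the standard oracle unitaries as $O_F$ and $O_{F \oplus P}$, and let $\Pi_B = \sum_{x \in B_\epsilon}|x\rangle\langle x| \otimes I$ denote the projector onto the query register lying in $B_\epsilon$. A direct calculation on basis states shows that $(O_F - O_{F\oplus P})|\psi\rangle = (O_F - O_{F\oplus P})\Pi_B|\psi\rangle$ (since the oracles agree off of $B_\epsilon$) and consequently $\|(O_F - O_{F\oplus P})|\psi\rangle\| \le 2\,\|\Pi_B|\psi\rangle\|$ by the operator-norm bound on unitaries.

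Now let $|\psi_t\rangle$ denote the state just before the $t$-th query in the run of $\algo A^F$. This anchoring choice is the critical one: since we run with $F$ (not $F \oplus P$), the state $|\psi_t\rangle$ is a deterministic function of the adversary, independent of both $B_\epsilon$ and $P$. A telescoping hybrid that swaps the oracle one query at a time from $O_F$ to $O_{F \oplus P}$, together with the triangle inequality and Cauchy-Schwarz, yields
$$\bigl\||\Phi^F\rangle - |\Phi^{F \oplus P}\rangle\bigr\| \;\le\; 2\sum_{t=1}^T \|\Pi_B|\psi_t\rangle\| \;\le\; 2\sqrt{T \sum_{t=1}^T \|\Pi_B|\psi_t\rangle\|^2}.$$
Now take expectations over $B_\epsilon$. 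Because each input $x$ lies in $B_\epsilon$ with probability exactly $\epsilon$ and $|\psi_t\rangle$ is independent of $B_\epsilon$, we have $\Exp_{B_\epsilon}\|\Pi_B|\psi_t\rangle\|^2 = \epsilon$. Jensen's inequality for the concave $\sqrt{\cdot}$ lets us pull the expectation inside the square root, giving $\Exp_{B_\epsilon}\||\Phi^F\rangle - |\Phi^{F\oplus P}\rangle\| \le 2\sqrt{T \cdot T\epsilon} = 2T\sqrt{\epsilon}$. The passage from final-state distance to $\|\cdot\|_1$ on the output register (trace distance / total variation, up to the usual conventions) is routine and preserves the bound.

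The one nontrivial point is the anchoring in the hybrid. If one instead used the $F \oplus P$-run query states, those states would themselves be random variables depending on $B_\epsilon$ (and $P$), and a query that returned $\bot$ could in principle steer the adversary's subsequent state toward $B_\epsilon$, making $\Exp_{B_\epsilon}\|\Pi_B|\psi_t^{F \oplus P}\rangle\|^2$ much larger than $\epsilon$; the clean averaging step would then collapse. Beyond guarding against this, the argument is a standard BBBV calculation, so I expect this is the only step requiring real care.
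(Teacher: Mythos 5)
Your proposal is correct and follows essentially the same route as the paper's proof in the appendix: a BBBV-style hybrid anchored on the query states of the unblinded $F$-run (so that they are independent of $B_\epsilon$), the bound $\|(O_F-O_{F\oplus P})\ket{\psi}\|\le 2\|\Pi_B\ket{\psi}\|$, and $\Exp_{B_\epsilon}\langle\psi|\Pi_B|\psi\rangle=\epsilon$ followed by Jensen. The only (immaterial) difference is bookkeeping at the end --- you aggregate the $T$ terms via Cauchy--Schwarz before applying Jensen, while the paper bounds each term by a maximum over states and applies Jensen term-by-term; both yield $2T\sqrt{\epsilon}$.
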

      
      \label{sec:bu-explain} This result can be viewed as strong
      evidence that algorithms that produce ``good forgeries'' in any
      reasonable sense %
      will not be disturbed too much by blinding, and will thus also
      win the \BU experiment. We can formulate and prove this
      intuition explicitly for a wide class of adversaries, as
      follows. Given an oracle algorithm $\algo A$, we let
      $\supp(\algo A)$ denote the union of the supports of all the
      queries of $\algo A$, taken over all choices of oracle function.

\begin{theorem}[informal]\label{intro:thm-support}
	Let $\algo A$ be QPT and $\supp(\algo A) \cap R = \emptyset$ for
	some $R \neq \emptyset$. Let $\Mac$ be a MAC, and suppose
	$\algo A^{\Mac_k}(1^n)$ outputs a valid pair $(m, \Mac_k(m))$ with
	$m \in R$ with noticeable probability. Then $\Mac$ is not \BU
	secure.
\end{theorem}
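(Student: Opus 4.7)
My plan is to construct a \BU adversary $\advb$ that runs $\adva$ as a subroutine, forwarding queries to its blinded oracle and echoing $\adva$'s output. Writing $p=p(n)$ for the noticeable lower bound on the probability that $\adva^{\Mac_k}(1^n)$ outputs a valid pair $(m,\Mac_k(m))$ with $m\in R$, and $T=T(n)$ for $\adva$'s query count, $\advb$ picks blinding fraction $\epsilon:=p^2/(4T^2)$, which is itself noticeable in $n$.

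The key step is a decoupling argument. Since each $x\in X$ lies in $B_\epsilon$ independently with probability $\epsilon$, decompose $B_\epsilon = B \sqcup B'$ with $B := B_\epsilon \setminus R$ and $B' := B_\epsilon \cap R$, so that $B$ and $B'$ are independent and $B'$ is itself a Bernoulli-$\epsilon$ random subset of $R$. Because $\supp(\adva)\cap R = \emptyset$, the oracles $B_\epsilon\Mac_k$ and $B\Mac_k$ act identically on the support of every query of $\adva$, so the joint distribution of $\adva$'s transcript and output $(m,t)$ depends only on $(k,B)$ and is independent of $B'$. Applying Theorem~\ref{thm:simulation} with $F=\Mac_k$ and $P$ the $B_\epsilon$-supported perturbation that sends $\Mac_k$ to $\bot$ gives $\Exp_{B_\epsilon}\|\adva^{\Mac_k}(1^n)-\adva^{B_\epsilon\Mac_k}(1^n)\|_1\le 2T\sqrt\epsilon$, so $\adva^{B_\epsilon\Mac_k}$ still outputs some valid $(m,\Mac_k(m))$ with $m\in R$ with probability at least $p - T\sqrt\epsilon = p/2$.

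Conditioned on that event, $m\in R$ forces $m\notin B$, so $m\in B_\epsilon$ iff $m\in B'$; by the independence above this conditional probability is exactly $\epsilon$. Hence
$$
\Pr[\advb \text{ wins}] \;\ge\; \epsilon\cdot(p/2) \;=\; \frac{p^3}{8T^2},
$$
which is noticeable in $n$, contradicting \BU security. The principal obstacle is the calibration of $\epsilon$: choosing it too large makes the simulation error $T\sqrt\epsilon$ swamp $\adva$'s internal success $p$, while choosing it too small makes the conditional probability $\Pr[m\in B_\epsilon\mid m\in R]=\epsilon$ itself negligible. The decomposition $B_\epsilon=B\sqcup B'$ is what lets both requirements hold simultaneously: blinding inside $R$ is free for the reduction (invisible to $\adva$) while supplying an independent $\epsilon$-chance of catching $m$ in the blinded region.
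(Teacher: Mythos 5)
Your proof is correct and follows essentially the same route as the paper's proof of the formal version (Theorem~\ref{thm:buinsecure}): split $B_\epsilon$ into the part visible to $\adva$ and an independent part containing the forgery region, invoke the simulation theorem to bound the disturbance by $O(T\sqrt{\epsilon})$, and collect the independent $\epsilon$ chance that $m$ lands in the blinded set, optimizing $\epsilon \approx (p/T)^2$ to get an $\Omega(p^3/T^2)$ win probability. The only cosmetic differences are that the paper conditions on $\overline{\supp(\adva)}$ rather than on $R$ and uses the trace-distance normalization, yielding $\tilde p^3/(27T^2)$ instead of your $p^3/(8T^2)$.
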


\paragraph{Blind-unforgeable MACs.}

Next, we show that several natural constructions satisfy \BU. We first show that a random function is blind-unforgeable.

\begin{theorem}
Let $R : X \to Y$ be a random function such that $1/|Y|$ is negligible. Then $R$ is a blind-unforgeable MAC.
\end{theorem}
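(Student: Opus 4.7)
The plan is to exploit the fact that in the blind-forgery experiment the blinded oracle hides the values of the random function $R$ on $B_\epsilon$ \emph{perfectly}, so that the winning probability reduces to a single uniform guess in $Y$. The resulting bound $1/|Y|$ is negligible by assumption. The argument will be information-theoretic and hold against arbitrary (unbounded, unlimited-query) adversaries, subsuming both the QPT and $q$-\BU versions of the statement.

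First, I would fix an arbitrary adversary $\algo A$ and blinding rate $\epsilon$, and examine the joint distribution of $(R, B_\epsilon, m, t)$ produced by the experiment. The key observation is structural: the blinded quantum oracle
\[
	U_{B_\epsilon R} : \ket{x}\ket{y} \;\longmapsto\; \ket{x}\ket{y \oplus B_\epsilon R(x)}
\]
writes the fixed symbol $\bot$ on every $x \in B_\epsilon$, irrespective of $R(x)$. Hence, viewed as a function of $R$, the unitary $U_{B_\epsilon R}$ depends only on $B_\epsilon$ and $R|_{X\setminus B_\epsilon}$. It follows that $\algo A$'s entire quantum transcript, and so its classical output $(m,t)$, is determined by $\algo A$'s internal randomness together with $B_\epsilon$ and $R|_{X\setminus B_\epsilon}$ alone.

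Next, I would use that $R$ is uniformly random: the restrictions $R|_{B_\epsilon}$ and $R|_{X\setminus B_\epsilon}$ are independent given $B_\epsilon$, and $R|_{B_\epsilon}$ is uniform over $Y^{B_\epsilon}$. Combined with the previous step, conditioning on any fixed values of $B_\epsilon$, $R|_{X\setminus B_\epsilon}$, and $(m,t)$ with $m \in B_\epsilon$, the value $R(m)$ remains uniform over $Y$ and independent of $t$. Therefore
\[
	\Pr[\win] \;=\; \Pr[\,m \in B_\epsilon \;\wedge\; R(m) = t\,] \;\leq\; \Pr[R(m) = t \mid m \in B_\epsilon] \;=\; \frac{1}{|Y|},
\]
which is negligible by hypothesis.

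The main subtlety is not a true obstacle but a point of care: one must verify that superposition queries whose support meets $B_\epsilon$ cannot leak information about $R|_{B_\epsilon}$. This holds because the oracle literally does not reference the values $\{R(x): x \in B_\epsilon\}$ when computing its output---it writes $\bot$ on those branches---so no information about them can be entangled into $\algo A$'s registers. In particular, \expref{Theorem}{thm:simulation} is not needed here, since the blinded oracle is \emph{exactly} (not merely approximately) independent of $R|_{B_\epsilon}$. This clean independence is what makes the \BU analysis of a random function so much more tractable than the analogous \BZ analysis, where one must carefully account for the adversary's query budget.
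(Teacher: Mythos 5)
Your proof is correct and follows essentially the same route as the paper's: both arguments observe that the blinded oracle $B_\epsilon R$ is a function of $B_\epsilon$ and $R|_{X\setminus B_\epsilon}$ only, so the output $(m,t)$ is independent of the uniform restriction $R|_{B_\epsilon}$ conditioned on $B_\epsilon$, giving a winning probability of at most $1/|Y|$. The only difference is cosmetic---you make the perfect (rather than approximate) independence from $R|_{B_\epsilon}$ slightly more explicit at the level of the oracle unitary, which the paper states in one line.
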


This together with results of Zhandry~\cite{Zhandry12} and Boneh and
Zhandry~\cite{BZ13a} leads to efficient \BU-secure constructions.

\begin{corollary}\label{cor:qprfandk-wise}
Quantum-secure pseudorandom functions \emph{(\qPRF)} are \BU-secure MACs, and (4q+1)-wise independent functions are $q$-\BU-secure MACs.
\end{corollary}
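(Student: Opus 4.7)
The plan is to reduce both claims to the preceding theorem stating that a uniformly random function $R : X \to Y$ with $1/|Y|$ negligible is a \BU-secure MAC. Both reductions are query-efficient simulations: we show that replacing a truly random MAC with either a \qPRF or a $(4q+1)$-wise independent function cannot change the outcome of the blind-forgery experiment by a non-negligible amount.

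For the \qPRF part, I would run a standard indistinguishability reduction. Suppose some efficient adversary $(\algo A, \epsilon)$ wins the blind-forgery experiment against a \qPRF $F_k$ with noticeable probability $p$. Build a QPT distinguisher $\algo D$ which, given quantum oracle access to a function $g : X \to Y$, proceeds as follows: sample a random blinding set $B_\epsilon$, simulate the blinded oracle $B_\epsilon g$ by classically computing the indicator of $B_\epsilon$ and combining with a query to $g$, run $\algo A^{B_\epsilon g}(1^n)$ to obtain $(m,t)$, then classically check whether $m \in B_\epsilon$ and whether $g(m)=t$, outputting $1$ iff both hold. When $g = F_k$, the simulation is identical to the \BU experiment and $\algo D$ outputs $1$ with probability $p$; when $g$ is a uniformly random function, the previous theorem guarantees $\algo D$ outputs $1$ with negligible probability. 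Hence $\algo D$ distinguishes $F_k$ from random, contradicting the \qPRF assumption.

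For the $(4q+1)$-wise independent part, the strategy is to invoke Zhandry's result~\cite{Zhandry12} which says that a $2t$-wise independent function is perfectly indistinguishable from a uniformly random function against any $t$-query quantum algorithm. The entire $q$-\BU experiment --- sampling $B_\epsilon$, running $\algo A$ with up to $q$ superposition queries to $B_\epsilon \Mac_k$, and finally verifying the output $(m,t)$ via one classical evaluation of $\Mac_k(m)$ --- can be recast as an oracle algorithm that accesses $\Mac_k$ at a bounded number of points. Evaluating the blinded oracle at each query does not require any additional access to the MAC (the blinding bits are sampled independently), so the $q$ quantum queries correspond to $q$ queries to $\Mac_k$, and verification contributes one more. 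Zhandry's bound with $(4q+1)$-wise independence (which comfortably absorbs the $2q+1$ actually required, as well as any overhead from purification of the final measurement or from computing the indicator of $B_\epsilon$ as part of query preparation) shows the output distribution of the whole experiment is identical to the one obtained by replacing $\Mac_k$ with a truly random function. The random-function theorem then yields negligible winning probability.

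The main obstacle is bookkeeping: one must verify that the blinding mechanism, the winning condition $m \in B_\epsilon$, and the verification check $\Ver_k(m,t)=\acc$ can all be implemented without inflating the number of oracle calls to $\Mac_k$ beyond what Zhandry's bound can absorb. Once this accounting is handled, the reduction to the random-function case from the previous theorem is immediate in both the computational (\qPRF) and information-theoretic ($(4q+1)$-wise) settings.
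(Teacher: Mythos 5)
Your overall strategy---reduce both claims to the theorem that a random function is \BU-secure, via \qPRF-indistinguishability in the first case and Zhandry's $k$-wise-independence theorem in the second---is exactly the paper's. But there are two concrete gaps in the execution. First, in the \qPRF reduction your distinguisher ``samples a random blinding set $B_\epsilon$'' and evaluates its indicator inside superposition queries. A truly random $\epsilon$-Bernoulli subset of an exponentially large domain cannot be sampled, stored, or evaluated by a QPT, and lazy sampling is unavailable against quantum queries. The paper replaces the random blinding with a $(4k+1)$-wise independent hash $h$ (setting $B_h = h^{-1}(0)$ with range of size $1/\epsilon$) and invokes \expref{Theorem}{thm:k-wise} to argue that this is \emph{perfectly} equivalent to true Bernoulli blinding for the $2k$ quantum and $1$ classical accesses the experiment makes to $h$. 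Without some such device your distinguisher is not efficient and the reduction does not go through.

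Second, your query accounting in the $(4q+1)$-wise part is off in a way that matters. Implementing the blinded oracle $\ket{x}\ket{y}\mapsto\ket{x}\ket{y\oplus B_\epsilon\Mac_k(x)}$ from a quantum oracle for $\Mac_k$ requires computing $\Mac_k(x)$ into an ancilla \emph{and uncomputing it}, i.e., two quantum queries per adversary query; the full experiment therefore makes $2q$ quantum queries plus one classical verification query. By \expref{Theorem}{thm:k-wise} this needs exactly $(2\cdot 2q+1)=(4q+1)$-wise independence---the constant in the statement is tight, not ``comfortably absorbed,'' and your count of $q$ quantum queries would wrongly suggest that $(2q+1)$-wise independence suffices. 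You flag this bookkeeping as the remaining obstacle, but it is precisely the content that justifies the stated parameters; once both points are repaired, your argument coincides with the paper's proofs for these two claims.
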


We can then invoke a recent result about the quantum-security of
domain-extension schemes such as NMAC and HMAC~\cite{SY17}, and obtain
variable-length \BU-secure MACs from any \qPRF.

In the setting of public verification, we show that the one-time Lamport signature scheme \cite{Lam79} is \BU-secure, provided that the underlying hash function family $\mathcal R: X \to Y$ is modeled as a random oracle. %

\begin{theorem}\label{thm:lamport-summary}
	Let $\mathcal R : X \to Y$ be a random function family. Then the Lamport scheme $L_\mathcal R$ is \BU against adversaries which make one quantum query to $L_\mathcal R$ and poly-many quantum queries to $\mathcal R$.
\end{theorem}

\paragraph{Hash-and-MAC.} 

Consider the following natural variation on the blind-forgery
experiment. To blind $F : X \to Y$, we first select a hash function
$h : X \to Z$ and a blinding set $B_\epsilon \subseteq Z$; we then
declare that $F$ will be blinded on $x \in X$ whenever
$h(x) \in B_\epsilon$. We refer to this as ``hash-blinding.'' We call
$h$ a \bfilter if, for every oracle function $F$, no QPT can
distinguish between an oracle that has been hash-blinded with $h$, and
an oracle that has been blinded in the usual sense. Recall the notion
of \emph{collapsing} from \cite{Unruh16a}.

\begin{theorem}
Let $h : X \to Y$ be a hash function. If $h$ is \bfilter, then it is also collapsing. Moreover, against adversaries with classical oracle access, $h$ is a \bfilter if and only if it is collision-resistant.
\end{theorem}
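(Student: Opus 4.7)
The plan is to handle the two implications separately: (i) the quantum implication that a Bernoulli-preserving hash is collapsing, and (ii) the classical equivalence between Bernoulli-preserving and collision-resistant.

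For (i), I would argue by contrapositive. Let $(\mathcal{A}_1,\mathcal{A}_2)$ be a QPT adversary breaking the collapsing property of $h$ with non-negligible advantage: $\mathcal{A}_1$ outputs a value $y$ and a bipartite state $|\Psi\rangle_{MS}$ with $h(m)=y$ for every $m$ in the support of $M$, and $\mathcal{A}_2$, on input $(M,S)$, distinguishes whether $M$ was measured in the computational basis. From this I would construct a BP distinguisher $\mathcal{D}$ that fixes $F \equiv 0$ and $\epsilon = 1/2$, runs $\mathcal{A}_1$, appends an ancilla $O$ initialized to $|0\rangle$, makes one quantum query to the blinded oracle $\ket{m,o}\mapsto\ket{m,o\oplus\tilde F(m)}$ on $(M,O)$, traces out $O$, and hands $(M,S)$ to $\mathcal{A}_2$, echoing its guess. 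The analysis comes down to the reduced state on $(M,S)$. Under hash-blinding, every $m$ in the support shares the hash $y$, so $\tilde F$ is constant on the support and tracing $O$ leaves $|\Psi\rangle\langle\Psi|$ intact. Under standard blinding, each $m$ is blinded independently with probability $\epsilon$, and a direct computation shows that every off-diagonal $|m\rangle\langle m'|$ of $|\Psi\rangle\langle\Psi|$ is attenuated by the factor $p := \epsilon^2 + (1-\epsilon)^2$, yielding $\rho = p|\Psi\rangle\langle\Psi| + (1-p)D$, where $D$ is the state resulting from a computational-basis measurement of $M$. Consequently $\mathcal{D}$ inherits a $(1-p)=1/2$ fraction of the original collapsing-distinguishing advantage, which is still non-negligible.

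For (ii), both directions turn on the observation that hash-blinding and standard blinding agree on the marginal distribution of each per-query response, and differ only in the \emph{correlation} between the blinding flags of colliding inputs. Concretely, CR $\Rightarrow$ BP follows by conditioning on the event that no pair of queries by a classical PPT distinguisher collides under $h$: on that event, the blinding flags are i.i.d.\ Bernoulli$(\epsilon)$ in both experiments, so the two views are identically distributed, and collision-resistance bounds the excluded event's probability by a negligible function. Conversely, BP $\Rightarrow$ CR is established by contrapositive: given a classical collision-finder $\mathcal{B}$ that outputs $x_1 \neq x_2$ with $h(x_1)=h(x_2)$ with non-negligible probability, I would build a BP distinguisher that runs $\mathcal{B}$, queries the blinded oracle (with $F\equiv 0$, $\epsilon=1/2$) on $x_1$ and $x_2$, and outputs ``standard'' exactly when one response is $\bot$ and the other is $0$. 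Under hash-blinding the two statuses must agree; under standard blinding they disagree with probability $2\epsilon(1-\epsilon)=1/2$, giving overall advantage at least $\tfrac{1}{2}\Pr[\mathcal{B}\text{ succeeds}]$.

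The main obstacle is the quantum direction (i). Tracing out the blinding response $O$ does not perform a full computational-basis measurement of $M$; it only resolves cohorts of inputs that share a blinding status, so the reduction must convert this ``coarse'' partial measurement into the computational-basis signal that the non-collapsing adversary is designed to exploit. The density-matrix computation above is exactly the gadget that makes this work, because the resulting state is an honest convex combination of the measured and unmeasured states; choosing $\epsilon$ bounded away from $0$ and $1$ keeps the mixing coefficient $(1-p)$ a constant, preventing any erosion of the inherited advantage. The classical direction, by contrast, is essentially combinatorial bookkeeping once one identifies the correlation of blinding flags as the sole point of divergence between the two experiments.
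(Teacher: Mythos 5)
Your proof is correct, and while the classical equivalence (part (ii)) is essentially the paper's own argument---the same two-query agreement test for the ``Bernoulli-preserving implies collision-resistant'' direction, and the same conditioning on collision-freeness of the query transcript for the converse---your quantum implication takes a genuinely different route. The paper runs a hybrid argument over $cn$ channels $H_0,\dots,H_{cn}$, where $H_j$ appends $j$ independent random two-outcome measurements $\Xi_{S}$ to the channel $\calO_h$; a separate combinatorial lemma shows that the composition of $cn$ such random measurements coincides with the full computational-basis measurement except with probability $2^{-\veps n}$, and the reduction then simulates a distinguishing pair $(H_j,H_{j+1})$ using a single query to $\chi_B$. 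You collapse all of this into one step: averaged over the uniform blinding set, the reduced state after the single $\chi_B$-query is exactly the convex combination $p\ketbra{\Psi}{\Psi}+(1-p)D$ with $p=\epsilon^2+(1-\epsilon)^2$, where $D$ is the computational-basis-measured state, while hash-blinding leaves $\ketbra{\Psi}{\Psi}$ untouched because the support of $M$ lies in a single fiber of $h$. By linearity of the acceptance probability this transfers the collapsing advantage with a clean factor $1-p=1/2$, whereas the paper's hybrid only guarantees a $1/(cn)$ fraction at the best index; your reduction is shorter, quantitatively tighter, and dispenses entirely with the paper's lemma on composing random two-outcome measurements. One presentational caveat: the attenuation of off-diagonals by $p$ holds in expectation over the blinding set (for a fixed set each off-diagonal term is either kept or killed outright), but since the distinguisher's output probability is linear in the state and the Bernoulli-preserving game samples the blinding set at random, arguing with the averaged state is exactly what the reduction requires.
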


We apply this new notion to show security of the Hash-and-MAC construction $\Pi^h = (\Mac^h, \Ver^h)$ with $\Mac^h_k (m): = \Mac_k(h(m))$.

\begin{theorem}
Let $\Pi = (\Mac_k, \Ver_k)$ be a \BU-secure MAC with $\Mac_k : X \to Y$, and let $h: Z \to X$ a \bfilter. Then $\Pi^h$ is a \BU-secure MAC.
\end{theorem}

We also show that the Bernoulli-preserving property can be satisfied by pseudorandom constructions, as well as a (public-key) hash based on \emph{lossy functions} from \LWE~\cite{PW08,UnruhAC16}.

\subsection{A concrete ``counterexample''  for  \BZ}\label{sec:intro-bz-killer}

Supporting our motivation to devise a new unforgeability definition, we present a construction of a MAC which is forgeable (in a
strong intuitive sense) and yet is classified by \BZ as secure.

\begin{construction}\label{con:real-BZ-killer-summary}
		Given $k = (p, f, g,h)$ where $p\in\bits^{n}$ is a random period and $f,g,h: \bits^{n}\to\bits^{n}$ are random functions, define $M_k : \bits^{n+1} \to \bits^{2n}$ by
		\vspace{-.01cm}
		$$
		M_k(x) =
		\begin{cases}
		g(x' \bmod p)\|f(x')& x=1\|x'\,,\\
		0^n\|h(x')& x=0\|x', \ x'\neq p\,,\\
		0^{2n}& x=0\|p\,.
		\end{cases}
		$$
\end{construction}
Define $g_p(x) := g(x \bmod p)$ and consider an adversary that queries only on messages starting with $1$, as follows:
\begin{equation}\label{eq:attack}
\sum_{x, y} \ket{1, x}_X \ket{0^n}_{Y_1} \ket{y}_{Y_2}
\longmapsto 
\sum_{x, y} \ket{1, x}_X \ket{g_p(x)}_{Y_1} \ket{y \oplus f(x)}_{Y_2}\,;
\end{equation}
discarding the first qubit and $Y_2$ then yields
$\sum_x \ket{x}\ket{g_p(x)}$, as $\sum_{ y}  \ket{y \oplus f(x)}_{Y_2}=\sum_{ y}  \ket{y }_{Y_2}$. One can then recover $p$ via
period-finding and output
$(0\|p,0^{2n})$. %
We emphasize that the forgery was queried with \emph{zero}
amplitude. In practice, we can interpret it as, e.g., the attacker
queries only on messages starting with ``\textsf{From: Alice}'' and
then forges a message starting with ``\textsf{From: Bob}''. Despite
this, we can show that it is \BZ-secure.

\begin{theorem}
	The family $M_k$ (for uniformly random $k = (p,f,g,h)$) is \BZ-secure.
\end{theorem}

The \BZ security of $M$ relies on a dilemma the adversary faces at each
query: either learn an output of $f$, or obtain a superposition of
$(x, g(x))$-pairs for Fourier sampling. Our proof shows that, once the
adversary commits to one of these two choices, the other option is
irrevocably lost. Our result can thus be understood as a refinement of
an observation of Aaronson: quantumly learning a property sometimes requires \emph{uncomputing} some
information~\cite{Aaronson2003}. Note that, while Aaronson could rely
on standard (asymptotic) query complexity techniques, our problem is
quite fragile: \BZ security describes a task which should be hard
with $q$ queries, but is completely trivial given $q+1$ queries. Our
proof makes use of  a new quantum random
oracle technique of Zhandry~\cite{Zhandry2018}.

A straightforward application of \expref{Theorem}{intro:thm-support} shows that \expref{Construction}{con:real-BZ-killer-summary} is \BU-insecure. In particular, we have the following.

\begin{corollary}\label{cor:separation}
	There exists a \BZ-secure MAC which is \BU-insecure.
\end{corollary}
The relationship between \BU, \BZ and a few other notions are
visualized in \expref{Figure}{fig:rel}.

\begin{figure}[h]
	\vskip -3ex
	\centering
	\begin{subfigure}{.49\linewidth}
		\centering
		\[
		\begin{matrix}
		\EUFCMA & \overset{\cite{BZ13a}}{\Longleftrightarrow} & \BZ &
		\overset{\text{\expref{Proposition}{prop:MAC-EUF=BU}}}{\Longleftrightarrow} & \BU
		\end{matrix}  
		\]
		\caption*{Unforgeability against classical
			adversaries}\label{fig:relationclassical}
	\end{subfigure}
	\hfill
	\begin{subfigure}{.49\linewidth}
		\centering
		\[
		\begin{matrix}
		\BZ &
		\overset{\text{\expref{Corollary}{cor:separation}}}{\not\Longrightarrow}%
		& \BU &
		\substack{\overset{\text{Observation}}{\not\Longrightarrow} \\ {\underset{\text{\expref{Corollary}{cor:qprfandk-wise}}}{\Longleftarrow}}}&
		\qPRF
		\end{matrix}  
		\]
		\caption*{Unforgeability against quantum
			adversaries}\label{fig:relationquantum}
	\end{subfigure}
	\caption{Relationship between different unforgeability notions}\label{fig:rel}
	\hfill
	\vskip -3ex
\end{figure}

\subsection{Acknowledgements}

The authors thanks Shih-Han Hung for discovering an error in an earlier version of this paper. CM thanks Ronald de Wolf for helpful discussions on query
complexity. 
GA acknowledges support from the NSF under grant CCF-1763736, from the U.S. Army Research Office under Grant Number W911NF-20-1-0015, and from the U.S. Department of Energy under Award Number DE-SC0020312. 
CM was funded by a NWO VIDI grant (Project No.  639.022.519) and a NWO VENI grant (Project No. VI.Veni.192.159). 
FS acknowledges support from NSF grant CCF-2042414. 
AR acknowledges support from NSF grant CCF-1763773.

\section{Preliminaries}\label{sec:prelims}

\paragraph{Basic notation and conventions.}
Given a finite set $X$, the notation $x \inrand X$ will mean that $x$ is a uniformly random element of $X$. Given a subset $B$ of a set $X$, let $\chi_B : X \to \bits$ denote the characteristic function of $B$, i.e., $\chi_B(x) = 1$ if $x \in B$ and $\chi_B(x)=0$ otherwise. %
When we say that a classical function $F$ is efficiently computable, we mean that there exists a uniform family of deterministic classical circuits which computes $F$. We will consider three classes of algorithms: (i.) unrestricted algorithms, modeling computationally unbounded adversaries, (ii.) probabilistic poly-time algorithms (PPTs), modeling classical adversaries, and (iii.) quantum poly-time algorithms (QPTs), modeling quantum adversaries. We assume that the latter two are given as polynomial-time uniform families of circuits. For PPTs, these are probabilistic circuits. For QPTs, they are quantum circuits, which may contain both unitary gates and measurements. We will often assume (without loss of generality) that the measurements are postponed to the end of the circuit, and that they take place in the computational basis. Given an algorithm $\algo A$, we let $\algo A(x)$ denote the (in general, mixed) state output by $\algo A$ on input $x$. In particular, if $\algo A$ has classical output, then $\algo A(x)$ denotes a probability distribution. Unless otherwise stated, the probability is taken over all random coins and measurements of $\algo A$ and any randomness used to select the input $x$. If $\algo A$ is an oracle algorithm and $F$ a classical function, then $\algo A^F(x)$ is the mixed state output by $\algo A$ equipped with oracle $F$ and input $x$; the probability is now also taken over any randomness used to generate $F$. 

We will distinguish between two ways of presenting a function $F: \bits^n \to \bits^m$ as an oracle. First, the usual ``classical oracle access'' simply means that each oracle call grants one classical invocation $x \mapsto F(x)$. This will always be the oracle model for PPTs. Second, ``quantum oracle access'' will mean that each oracle call grants an invocation of the $(n+m)$-qubit unitary gate 
$
\ket{x}\ket{y} \mapsto \ket{x}\ket{y \oplus F(x)}\,.
$
This will always be the oracle model for QPTs. Note that both QPTs and unrestricted algorithms could in principle receive either oracle type.

We will need the following lemma. We use the formulation from \cite[Lemma 2.1]{BZ13}, which is a special case of a more general ``pinching lemma'' of Hayashi~\cite{H02}.
\begin{lemma}\label{lem:add-measure}
Let $\algo A$ be a quantum algorithm and $x \in \bits^*$. Let $\algo A_0$ be another quantum algorithm obtained from $\algo A$ by pausing $\algo A$ at an arbitrary stage of execution, performing a partial measurement that obtains one of $k$ outcomes, and then resuming $\algo A$. Then $\Pr[\algo A_0(1^n) = x] \geq \Pr[\algo A(1^n) = x]/k$.
\end{lemma}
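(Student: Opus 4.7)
The plan is to reduce this to the standard pinching inequality: for any density operator $\rho$ on a Hilbert space and any projective measurement $\{P_i\}_{i=1}^k$ with $\sum_i P_i = \id$, one has the operator inequality
\begin{equation*}
\rho \;\leq\; k \sum_{i=1}^k P_i \rho P_i\,.
\end{equation*}
This is exactly Hayashi's pinching lemma, and it follows from a short calculation: writing $\rho = \sum_{i,j} P_i \rho P_j$ and applying Cauchy–Schwarz / AM-GM to the off-diagonal terms $P_i \rho P_j + P_j \rho P_i \leq P_i \rho P_i + P_j \rho P_j$, one obtains $\rho \leq k \sum_i P_i \rho P_i$ after collecting like terms.

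Given the pinching inequality, I would first normalize the setup. Without loss of generality we postpone all measurements of $\algo A$ to the end and place them in the computational basis, so the execution of $\algo A$ on input $1^n$ consists of a unitary $U_1$ acting on an initial all-zero state, producing a pure state $\ket{\psi}$, followed by more unitary evolution $U_2$ and a final measurement whose outcome labelled $x$ is obtained by projecting onto $\ket{x}\bra{x}$ (after tracing out auxiliary registers). The algorithm $\algo A_0$ is identical except that after $U_1$ we insert a projective measurement $\{P_i\}_{i=1}^k$, so that the state handed to $U_2$ is $\sum_i P_i \kb{\psi} P_i$ instead of $\kb{\psi}$. Let $\Phi$ denote the completely positive (in fact, unital-preserving on Hermitian operators) map that represents the remainder of the computation followed by the partial trace over auxiliary registers, so that
\begin{equation*}
\Pr[\algo A(1^n) = x] = \bra{x} \Phi(\kb{\psi}) \ket{x}, \qquad \Pr[\algo A_0(1^n) = x] = \bra{x} \Phi\!\left(\textstyle\sum_i P_i \kb{\psi} P_i\right)\!\ket{x}.
\end{equation*}

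Applying the pinching inequality to $\kb{\psi}$ and using the fact that completely positive maps are monotone with respect to the operator order, we obtain
\begin{equation*}
\Phi(\kb{\psi}) \;\leq\; k\,\Phi\!\left(\textstyle\sum_i P_i \kb{\psi} P_i\right).
\end{equation*}
Sandwiching with $\bra{x}\cdot\ket{x}$ (a positive linear functional) preserves the inequality, which yields exactly $\Pr[\algo A(1^n) = x] \leq k \Pr[\algo A_0(1^n) = x]$, as desired.

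The statement allows the measurement to be inserted at ``an arbitrary stage,'' and to handle measurements or intermediate non-unitary operations already present in $\algo A$, I would first apply the standard principle of deferred measurements to re-express everything up to the insertion point as a single unitary $U_1$ on a sufficiently large purified space; this is routine and does not change any output probabilities. The main subtlety, and the only genuine content of the lemma, is the pinching inequality itself; once that is in hand the proof is a one-line application of CP-monotonicity. I do not expect any real obstacle beyond writing the pinching inequality argument cleanly and verifying that the standard deferred-measurement reduction covers the ``arbitrary stage'' phrasing without loss of generality.
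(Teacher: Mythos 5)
Your proposal is correct. The paper does not actually prove this lemma---it imports it verbatim from Boneh--Zhandry (\cite{BZ13}, Lemma 2.1) and notes that it is a special case of Hayashi's pinching lemma \cite{H02}---and your argument (establish $\rho \leq k\sum_i P_i\rho P_i$, then push it through the remaining CP map and the positive functional $\bra{x}\cdot\ket{x}$) is precisely the route that citation points to, so it matches the intended proof in substance.
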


We denote the trace distance between states $\rho$ and $\sigma$ by $\delta(\rho, \sigma)$. Recall that this is simply half the trace norm of the difference, i.e., $\delta(\rho, \sigma) = ({1}/{2})\|\rho - \sigma\|_1$. When $\rho$ and $\sigma$ are classical probability distributions, the trace distance is equal to the total variation distance. 

\subsection{Quantum-secure pseudorandomness}
A quantum-secure pseudorandom function (\qPRF) is a family of classical, deterministic, efficiently-computable functions which appear random to QPT adversaries with quantum oracle access.
\begin{definition}\label{def:qPRF}
An efficiently computable function family $f : K \times X \to Y$ is a quantum-secure pseudorandom function \emph{($\qPRF$)} if, for all QPTs $\algo D$,
\[
\Bigl|\underset{k \inrand K}{\Pr}\bigl[\mathcal D^{f_k}(1^n) = 1\bigr] - 
\underset{g \inrand \mathcal F_X^Y}{\Pr}\bigl[\mathcal D^g(1^n) = 1\bigr] \Bigr|\,\le \mathrm{negl}(n)\,.
\]
\end{definition}
Here $\mathcal F_X^Y$ denotes the set of all functions from $X$ to $Y$. The standard ``GGM+GL'' construction of a \PRF yields a \qPRF when instantiated with a quantum-secure one-way function~\cite{Zhandry12}. One can also construct a \qPRF directly from the Learning with Errors assumption~\cite{Zhandry12}. If we have an a priori bound on the number of allowed queries, then a computational assumption is not needed. 

\begin{theorem}[Lemma 6.4 in~\cite{BZ13a}] \label{thm:k-wise}
Let $q, c \geq 0$ be integers, and $f: K \times X \to Y$ a $(2q + c)$-wise independent family of functions. Let $\algo D$ be an algorithm making no more than $q$ quantum oracle queries and $c$ classical oracle queries. Then
\[
\underset{k \inrand K}{\Pr}\bigl[\mathcal D^{f_k}(1^n) = 1\bigr] = 
\underset{g \inrand \mathcal F_X^Y}{\Pr}\bigl[\mathcal D^g(1^n) = 1\bigr]\,.
\]
\end{theorem}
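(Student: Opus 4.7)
The plan is to adapt Zhandry's polynomial method for quantum query lower bounds to the mixed quantum/classical query setting, with the crucial accounting being that each quantum query contributes degree $2$ while each classical query contributes only degree $1$ in a polynomial representation of the acceptance probability.

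First I would set up the polynomial representation, viewing the oracle as the collection of real-valued indicator variables $\{\mathds{1}[H(x) = y]\}_{x \in X, y \in Y}$. Following Zhandry, for a purely quantum algorithm making $q$ queries, the final state can be written as a vector whose amplitudes are polynomials of degree at most $q$ in these variables, with each monomial supported on at most $q$ distinct positions $x \in X$. The acceptance probability $|\langle 1 | \psi_H\rangle|^2$ is then a polynomial of degree at most $2q$, touching at most $2q$ positions. Next I would handle the classical queries: a classical query at an adaptively chosen $x_i$ reveals $y_i = H(x_i)$ deterministically (given $H$), and the event that "the $i$-th classical transcript entry equals $(x_i, y_i)$" is captured exactly by the single variable $\mathds{1}[H(x_i) = y_i]$, contributing degree $1$ and one position.

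To combine the two, I would expand the acceptance probability as a sum over all possible classical transcripts:
\[
\Pr[\algo D^H(1^n) = 1] = \sum_{(x_1,y_1),\ldots,(x_c,y_c)} p_{\text{trans}}(H)\cdot p_{\text{qacc}\mid\text{trans}}(H),
\]
where $p_{\text{trans}}(H) = \prod_{i=1}^c \mathds{1}[H(x_i) = y_i]$ enforces the classical outcomes and $p_{\text{qacc}\mid\text{trans}}(H)$ is the acceptance probability of the residual quantum subroutine conditioned on that fixed transcript. By the first step, $p_{\text{qacc}\mid\text{trans}}$ is a polynomial of degree at most $2q$ touching at most $2q$ positions, so the product inside the sum is a polynomial of degree at most $2q+c$ touching at most $2q+c$ positions of $X$. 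Finally, I would invoke the hypothesis: both $f_k$ for $k \inrand K$ and a truly random $g \inrand \mathcal F_X^Y$ induce identical joint distributions on the values of the oracle at any fixed set of $2q+c$ positions, so the expectation of any polynomial in the indicator variables that touches at most $2q+c$ positions agrees in the two cases. Taking expectations on both sides of the displayed identity yields the claim.

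The main obstacle I anticipate is the bookkeeping in Step 3: because each $x_i$ is itself a (randomized) function of the prior transcript and of the quantum state, one must justify the clean decomposition into a fixed classical transcript followed by a residual quantum subroutine. The right way to manage this is to fix the entire classical transcript $(x_1,y_1),\ldots,(x_c,y_c)$ up front and then observe that, on the event that the algorithm's adaptive choices and oracle responses produce this exact transcript, the remaining computation is a well-defined $q$-query quantum algorithm in the sense of Zhandry's lemma. Summing the branch probabilities (weighted by the transcript indicators) and keeping careful track that each branch's polynomial touches distinct positions bounded by $2q+c$ closes the argument.
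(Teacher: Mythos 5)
The paper does not actually prove this statement---it is imported verbatim from Boneh--Zhandry (Lemma 6.4 of \cite{BZ13a})---so there is no internal proof to compare against. Your argument is correct and is essentially the one in that cited source (itself an extension of Zhandry's polynomial-method proof that $2q$-wise independence fools $q$ quantum queries): monomials of the amplitudes touch at most $q$ positions after $q$ quantum queries, so the acceptance probability touches at most $2q$; fixing a classical transcript multiplies in one indicator per classical query for at most $c$ further positions (and your postselection handles interleaved quantum/classical queries correctly); and $(2q+c)$-wise independence equates the expectations of all polynomials touching at most $2q+c$ distinct inputs.
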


\subsection{\BZ-unforgeability}

Boneh and Zhandry define unforgeability (against quantum queries) for classical MACs as follows~\cite{BZ13a}. They also show that random functions satisfy this notion.

\begin{definition}\label{def:BZ}
Let $\Pi = (\KeyGen, \Mac, \Ver)$ be a MAC with message set $X$. Consider the following experiment with an algorithm $\algo A$:
\begin{enumerate}
\item \emph{Generate key:} $k \from \KeyGen(1^n)$.
\item \emph{Generate forgeries:} $\algo A$ receives quantum oracle for $\Mac_k$, makes $q$ queries, and outputs a string $s$;
\item \emph{Outcome:} output \textsf{win} if $s$ contains $q+1$ distinct input-output pairs of $\Mac_k$, and \textsf{fail} otherwise.
\end{enumerate}
We say that $\Pi$ is \BZ-secure if no adversary can succeed at the above experiment with better than negligible probability.
\end{definition}

\subsection{The Fourier Oracle}\label{sec:superoracles}

Our separation proof will make use of a new technique of Zhandry~\cite{Zhandry2018} for analyzing random oracles. We briefly describe this framework.

A random function $f$ from $n$ bits to $m$ bits can be viewed as the outcome of a quantum measurement. More precisely, let $\Hi_F=\bigotimes_{x\in\bits^n}\Hi_{F_x}$, where $\Hi_{F_x}\cong\C^{2^m}$. Then set
$f(x)\leftarrow \mathcal M_{F_x}(\eta_F)$, where
\[
  \eta_F=\proj{\phi_0}^{\otimes 2^n}, \qquad \ket{\phi_0}=2^{-\frac{m}{2}}\sum_{y\in\bits^m}\ket y\,,
\]
and $\mathcal M_{F_x}$ denotes the measurement of the register $F_x$
in the computational basis. This measurement commutes with any
$\CNOT_{A:B}$ gate with control qubit $A$ in $F_x$ and target qubit
$B$ outside $F_x$. It follows that, for any quantum algorithm making
queries to a random oracle, the output distribution is identical if
the algorithm is instead run with the following oracle:
\begin{enumerate}
	\item Setup: Prepare the state $\eta_F$.
	\item Upon a query with query registers $X$ and $Y$, controlled on $X$ being in state $\ket x$, apply $(\CNOT^{\otimes m})_{F_x:Y}$. 
	\item After the algorithm has finished, measure $F$ to determine the success of the computation.
\end{enumerate}

We denote the oracle unitary defined in step 2 above by $U^{\mathrm O}_{XYF}$.
Having defined this oracle representation, we are free to apply any unitary $U_H $ to the oracle state, so long as we then also apply the conjugated query unitary
\[
  U_H(\CNOT^{\otimes m})_{F_x:Y}U_H^\dagger
\]
in place of $U^{\mathrm O}_{XYF}$. We choose $U_H=H^{\otimes m2^n}$,
which means that the oracle register starts in the all-zero state
now. Applying Hadamard to both qubits reverses the direction of CNOT,
i.e.,
\[
  H_A\otimes H_B\CNOT_{A:B}H_A\otimes H_B=\CNOT_{B:A}\,,
\]
so the adversary-oracle-state after a first query with query state
$\ket x_X\ket{\phi_y}_Y$ is
\begin{equation}\label{eq:example1}
	\ket x_X\ket{\phi_y}_Y\ket{0^m}^{\otimes 2^n}
	\longmapsto
	\ket x_X\ket{\phi_y}_Y\ket{0^m}^{\otimes (\textsf{lex}(x)-1)}\ket{y}_{F_x}\ket{0^m}^{\otimes (2^n-\textsf{lex}(x))},
\end{equation}
where $\textsf{lex}(x)$ denotes the position of $x$ in the lexicographic ordering of $\{0,1\}^n$, and we defined the Fourier basis state $\ket{\phi_y}=H^{\otimes m}\ket y$. In the rest of this section, we freely change the order in which tensor products are written, and keep track of the tensor factors through the use of subscripts. This adjusted representation is called the \emph{Fourier oracle} (FO), and we denote its oracle unitary by
\[
  U^{\mathrm{FO}}_{XYF}=\left(H^{\otimes m2^n}\right)_FU^{\mathrm O}_{XYF}\left(H^{\otimes m2^n}\right)_F\,.
\]

An essential fact about the FO is that each query can only change the number of non-zero entries in the FO's register by at most one. To formalize this idea, we define the ``number operator''
\[
  N_F=\sum_{x\in\bits^n}(\one-\proj 0)_{F_x}\otimes \one^{\otimes (2^n-1)}\,.
\]
The number operator can also be written in its spectral decomposition,
\begin{align*}
	N_F&=\sum_{l=0}^{2^n}l P_l \qquad \text{where} \qquad P_l=\sum_{r\in S_l}\proj r\,,\\
	S_l&=\left\{r\in\left(\bits^m\right)^{2^n}\Big||\{x\in\bits^n|r_x\neq 0\}|=l \right\}.
\end{align*}
Note that the initial joint state of a quantum query algorithm and the
oracle (in the FO-oracle picture described above) is in the image of
$P_0$. The following fact is essential for working with the Fourier
Oracle; to avoid disrupting the flow of the article, the proof is
given in \expref{Appendix}{sec:proof-zhancrement}.

\begin{restatable}{lemma}{numberoplem}
  \label{lem:Zhancrement}
  The number operator satisfies
  $
  \bigl\|\left[N_F,U_{XYF}^{FO}\right]\bigr\|_\infty=1.
  $
  In particular, the joint state of a quantum query algorithm and the oracle after the $q$-th query is in the kernel of $P_l$ for all $l>q$.
\end{restatable}

\section{The new notion: Blind-Unforgeability}\label{sec:def}

\subsection{Formal definition}
For ease of exposition, we begin by introducing our new security notion in a form analogue to the standard notion of existential unforgeability under chosen-message attacks, \EUFCMA. We will also later show how to extend our approach to obtain a corresponding analogue of strong unforgeability. We begin by defining a ``blinding'' operation. Let $f: X \rightarrow Y$ and $B \subseteq X$. We let 
\[
  Bf(x) = \begin{cases} 
    \bot & \text{if $x \in B$},\\
    f(x) & \text{otherwise.}
  \end{cases}
\]
We say that $f$ has been ``blinded'' by $B$. In this context, we will be particularly interested in the setting where elements of $X$ are placed in $B$ independently at random with a particular probability $\epsilon$; we let $B_\epsilon$ denote this random variable. (It will be easy to infer $X$ from context so we do not reflect it in the notation.)

Next, we define a security game in which an adversary is tasked with using a blinded MAC oracle to produce a valid input-output pair in the blinded set.

\begin{definition}\label{def:MAC-BU-game} Let $\Pi = (\KeyGen, \Mac, \Ver)$ be a MAC with message set $X$. Let $\algo A$ be an algorithm, and $\epsilon : \N \to \R_{\geq 0}$ an efficiently computable function. The blind forgery experiment $\blindforge_{\algo A, \Pi}(n, \epsilon)$ proceeds as follows:
\begin{enumerate}
\item \emph{Generate key:} $k \from \KeyGen(1^n)$.
\item \emph{Generate blinding:} select $B_\epsilon \subseteq X$ by placing each $m$ into $B_\epsilon$ independently with probability $\epsilon(n)$.
\item \emph{Produce forgery:} $(m, t) \from \algo A^{B_\epsilon \Mac_k} (1^n)$.
\item \emph{Outcome:} output $1$ if $\Ver_k(m, t) = \acc$ and $m \in B_\epsilon$; otherwise output $0$.
\end{enumerate}
\end{definition}

We say that a scheme is blind-unforgeable if, for any efficient adversary, the probability of winning the game is negligible. The probability is taken over the choice of key, the choice of blinding set, and any internal randomness of the adversary. We remark that specifying an adversary requires specifying (in a uniform fashion) both the algorithm $\algo A$ and the blinding fraction $\epsilon$.

\begin{definition}\label{def:MAC-BU} A MAC $\Pi$ is blind-unforgeable (BU) if for every polynomial-time uniform adversary $(\algo A, \epsilon)$,
\[
\Pr \bigl[\blindforge_{\algo A, \Pi}(n, \epsilon(n)) = 1] \leq \negl\,.
\]
\end{definition}

We also define the ``$q$-time'' variant of the blinded forgery game, which is identical to \expref{Definition}{def:MAC-BU-game} except that the adversary is only allowed to make $q$ queries to $B_\epsilon \Mac_k$ in step (3). We call the resulting game $\blindforge^q_{\algo A, \Pi}(n, \epsilon)$, and give the corresponding definition of $q$-time security (now against computationally unbounded adversaries).

\begin{definition}\label{def:q-MAC-BU} A MAC $\Pi$ is $q$-time blind-unforgeable ($q$-BU) if for every $q$-query adversary $(\algo A, \epsilon)$, we have
\[
\Pr \bigl[\blindforge^q_{\algo A, \Pi}(n, \epsilon(n)) = 1] \leq \negl\,.
\]
\end{definition}

The above definitions are agnostic regarding the computational power of the adversary and the type of oracle provided. For example, selecting PPT adversaries and classical oracles in \expref{Definition}{def:MAC-BU} yields a definition of classical unforgeability; we will later show that this is equivalent to standard \EUFCMA. The main focus of our work will be on \BU against QPTs with quantum oracle access, and $q$-\BU against unrestricted adversaries with quantum oracle access.

\subsection{Some technical details}

We now remark on a few details in the usage of \BU. First, strictly speaking, the blinding sets in the
security games above cannot be generated efficiently. However, a
pseudorandom blinding set will suffice. Pseudorandom blinding sets can
be generated straightforwardly using an appropriate pseudorandom
function, such as a $\PRF$ against PPTs or a $\qPRF$ against QPT. A
precise description of how to perform this pseudorandom blinding is
given in the proof of \expref{Corollary}{cor:qPRF-MAC}. Note that simulating the blinding requires computing and uncomputing the random function, so we must make two quantum queries for each quantum query of the adversary. Moreover, verifying whether the forgery is in the blinding set at the end requires one additional classical query. This means that $(4q+1)$-wise independent
functions are both necessary and sufficient for generating blinding
sets for $q$-query adversaries (see~\cite[Lemma 6.4]{BZ13a}). In any
case, an adversary which behaves differently in the random-blinding
game versus the pseudorandom-blinding game immediately yields a
distinguisher against the corresponding pseudorandom function.

\vspace{5pt}\noindent\emph{The blinding symbol.} There is some flexibility in how one defines the blinding symbol $\bot$. In situations where the particular instantiation of the blinding symbol might matter, we will adopt the convention that the blinded version $Bf$ of $f : \bits^n \to \bits^\ell$ is defined by setting $Bf : \bits^n \to \bits^{\ell+1}$, where $Bf(m) = 0^\ell || 1$ if $m \in B$ and $Bf(m) = f(m) || 0$ otherwise. One advantage of this convention (i.e., that $\bot = 0^\ell || 1$) is that we can compute on and/or measure the blinded bit (i.e., the $(\ell+1)$-st bit) without affecting the output register of the function. This will also turn out to be convenient for uncomputation.

\vspace{5pt}\noindent\emph{Strong blind-unforgeability.}\label{sec:strong}
The security notion \BU given in \expref{Definition}{def:MAC-BU} is an analogue of simple unforgeability, i.e., \EUFCMA, for the case of a quantum-accessible MAC/Signing oracle. It is, however, straightforward to define a corresponding analogue of strong unforgeability, i.e., \SUFCMA, as well.

The notion of strong blind-unforgeability, \sBU, is obtained by a  simple adjustment compared to \BU: we blind (message, tag) pairs rather than just messages. We briefly describe this for the case of MACs. Let $\Pi = (\KeyGen, \Mac, \Ver)$ be a MAC with message set $M$, randomness set $R$ and tag set $T$, so that $\Mac_k : M \times R \to T$ and $\Ver_k : M \times T \to \{\acc, \rej\}$ for every $k \from \KeyGen$. Given a parameter $\epsilon$ and an adversary $\algo A$, the strong blind forgery game proceeds as follows:
\begin{enumerate}
\item Generate key: $k \from \KeyGen$; generate blinding: select $B_\epsilon \subseteq M \times T$ by placing pairs $(m, t)$ in $B_\epsilon$ independently with probability $\epsilon$;
\item Produce forgery: produce $(m, t)$ by executing $\algo A(1^n)$ with quantum oracle access to the function
\[
B_\epsilon \Mac_{k;r} (m) := 
\begin{cases} 
\bot &\text{ if } (m, \Mac_k(m; r)) \in B_\epsilon,\\
\Mac_k(m;r) &\text{ otherwise.}
\end{cases}
\]
where $r$ is sampled uniformly for each oracle call.
\item Outcome: output $1$ if $\Ver_k(m, t) = \acc \wedge (m, t) \in B_\epsilon$; otherwise output $0$.
\end{enumerate}
Security is then defined as before: $\Pi$ is \sBU-secure if
for all adversaries $\algo A$ (and their declared $\epsilon$), the
success probability at winning the above game is negligible. Note that, for the case of canonical MACs, this definition coincides with \expref{Definition}{def:MAC-BU}, just as \EUFCMA and \SUFCMA coincide in this case.

\section{Intuitive security and the meaning of \BU}

In this section, we gather a number of results which build confidence in \BU as a satisfactory definition of unforgeability in our setting. We begin by showing that a wide range of ``intuitively forgeable'' MACs (indeed, all such examples we have examined) are correctly characterized by \BU as insecure.

\subsection{Intuitively forgeable schemes}

As indicated earlier, \BU security rules out any MAC schemes where
an attacker can query a subset of the message space and forge outside
that region. To make this claim precise, we first define the \emph{query support} $\supp(\adver)$
of an oracle algorithm $\adver$. Let $\adver$ be a quantum query
algorithm with oracle access to the quantum oracle $O$ for a classical
function from $n$ to $m$ bits. Without loss of generality $\adver$ proceeds by applying the
sequence of unitaries $\calO U_q \calO U_{q-1}...U_1$ to the initial state
$\ket{0}_{XYZ}$, followed by a POVM $\mathcal E $. Here, $X$ and $Y$
are the input and output registers of the function and $Z$ is the
algorithm's workspace. Let $\ket{\psi_i}$ be the intermediate state of
of $\adver$ after the application of $U_i$. Then $\supp(\adver)$ is
defined to be the set of input strings $x$ such that there exists
a function $f:\bits^n\to \bits^m$ such that
$\bra x_X\ket{\psi_i}\neq 0$ for at least one $i\in\{1,...,q\}$ when
$\calO = \calO_f$. 

\begin{theorem}\label{thm:buinsecure}
  Let $\algo A$ be a QPT such that $\supp(\algo A) \cap R = \emptyset$
  for some $R \neq \emptyset$. Let $\Mac$ be a MAC, and suppose
  $\algo A^{\Mac_k}(1^n)$ outputs a valid pair $(m, \Mac_k(m))$ with
  $m \in R$ with non-negligible probability. Then $\Mac$ is not \BU-secure.
\end{theorem}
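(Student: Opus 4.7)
The plan is to exhibit a \BU-adversary $\algo B$ that simulates $\algo A$ using its blinded oracle and outputs exactly what $\algo A$ outputs. Given the hypothesized adversary $\algo A$ (making at most $T(n)$ queries and succeeding with probability $p(n)$ that is non-negligible), $\algo B$ commits to a parameter $\epsilon(n) = n^{-K}$ for a sufficiently large constant $K$ to be determined. In the blind-forgery experiment, $\algo B$ receives oracle access to $B_\epsilon\Mac_k$, runs $\algo A^{B_\epsilon\Mac_k}(1^n)$, and outputs the pair $(m,t)$ returned by $\algo A$. Winning amounts to two events occurring simultaneously: $\algo A$ produces a valid forgery in $R$ while running on the blinded oracle, and the forged message lies in $B_\epsilon$.

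The first step is to control the cost of the blinding. Using the convention $\bot = 0^\ell\|1$, the blinded oracle can be written as $\Mac_k \oplus P$ for a function $P$ whose support is exactly $B_\epsilon$. Theorem~\ref{thm:simulation} then gives
\[
  \Exp_{B_\epsilon}\bigl\|\algo A^{\Mac_k}(1^n) - \algo A^{B_\epsilon\Mac_k}(1^n)\bigr\|_1 \leq 2T\sqrt{\epsilon}\,,
\]
so for any measurable event the corresponding probabilities differ by at most $T\sqrt{\epsilon}$ in expectation over $B_\epsilon$. Applying this to the event ``output $(m,t)$ satisfies $\Ver_k(m,t)=\acc$ and $m\in R$'' shows that $\algo A^{B_\epsilon\Mac_k}$ still produces a valid pair in $R$ with probability at least $p - T\sqrt\epsilon$ in expectation over $B_\epsilon$.

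The second step uses the support hypothesis to decouple $B_\epsilon\cap R$ from $\algo A$'s behaviour. Because $\supp(\algo A)\cap R = \emptyset$, $\algo A$'s queries have zero amplitude on any input in $R$, so the oracle's values on $R$, and in particular the random set $B_\epsilon \cap R$, never enter $\algo A$'s state. Thus the joint distribution of $\algo A$'s view and output is a function of $\Mac_k|_{\supp(\algo A)}$ and $B_\epsilon\cap \supp(\algo A)$ only, and is independent of $B_\epsilon \cap R$. Since each element of $R$ lies in $B_\epsilon$ independently with probability $\epsilon$, for every $m\in R$ we have $\Pr[m\in B_\epsilon \mid \text{$\algo A$ outputs }m]= \epsilon$. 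Combining both steps,
\[
  \Pr\bigl[\blindforge_{\algo B,\Mac}(n,\epsilon)=1\bigr] \;\geq\; \epsilon \bigl(p - T\sqrt{\epsilon}\bigr)\,.
\]

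Finally, choose $K$ large enough that $T(n)\sqrt{\epsilon(n)} \le p(n)/2$ for infinitely many $n$; this is possible because $T$ is polynomial and $p$ is non-negligible (say $p(n)\ge n^{-c}$ infinitely often and $T(n)\le n^{d}$, so any $K \ge 2(c+d+1)$ works). Then $\algo B$ wins with probability at least $\tfrac{1}{2}\epsilon(n) p(n)$ infinitely often, which is non-negligible, contradicting \BU-security. The only delicate point is the independence argument in the second step: one must verify that ``$\algo A$'s output depends only on $B_\epsilon \cap \supp(\algo A)$'' is honest under quantum superposition queries, which is exactly what the definition of $\supp(\algo A)$ as the union of query supports across \emph{all} oracle functions is designed to guarantee.
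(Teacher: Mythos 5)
Your proof is correct and takes essentially the same route as the paper's: both arguments invoke the $2T\sqrt{\epsilon}$ hybrid bound of Theorem~\ref{thm:blinded-algo} to show that the forgery in $R$ survives blinding, and then use the independence of the blinding bits on $R$ (which is disjoint from $\supp(\algo A)$) to collect an extra factor of $\epsilon$, giving a winning probability of roughly $\epsilon(\tilde p - O(T\sqrt{\epsilon}))$. The only cosmetic difference is your choice $\epsilon = n^{-K}$ versus the paper's $\epsilon = (\tilde p/3T)^2$, which does not affect the argument.
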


To prove \expref{Theorem}{thm:buinsecure}, we will need the following theorem, which controls the change in the output state of an algorithm resulting from applying a blinding to its oracle. Given an oracle algorithm $\algo A$ and two oracles $F$ and $G$, the trace distance between the output of $\algo A$ with oracle $F$ and $\algo A$ with oracle $G$ is denoted by $\delta(\algo A^F(1^n), \algo A^G(1^n))$. Given two functions $F, P : \bits^n \to \bits^m$, we define the function $F \oplus P$ by $(F \oplus P) (x) = F(x) \oplus P(x)$.

\begin{restatable}{theorem}{randomblinding}\label{thm:blinded-algo}
	Let $\algo A$ be a quantum query algorithm making at most $T$ queries, and $F: \bits^n \rightarrow \bits^m$ a function. Let $B \subseteq \{0,1\}^n$ be a subset chosen by independently including each element of $\{0,1\}^n$ with probability $\epsilon$, and $P: \{0,1\}^n \rightarrow \{0,1\}^m$ be any function with support $B$. Then
        \[
          \Exp_B\bigl[ \delta\bigl(\algo A^F(1^n), \algo A^{F \oplus P}(1^n)\bigr)\bigr] \leq 2T\sqrt{\epsilon}.
        \]
\end{restatable}

The proof is a relatively straightforward hybrid argument in the
spirit of the lower bound for Grover search~\cite{BBBV}. We provide
the complete proof in
\expref{Appendix}{sec:simulation-theorem-proof}. We are now ready to
prove \expref{Theorem}{thm:buinsecure}.

\begin{proof}[Proof of {\expref{Theorem}{thm:buinsecure}}] 
  Let $\adver$ be a quantum algorithm with $\supp(\adver)$ for any
  oracle. By our hypothesis,
  \begin{equation*}
    \tilde{p}:={\Pr}_{k,
      (m,t)\leftarrow\adver^{\Mac_k}(1^n)}\left[\Mac_k(m)=t\wedge
      m\notin \supp(\adver)\right]  \geq n^{-c} \, ,    
  \end{equation*}
  for some $c > 0$ and sufficiently large $n$.  Since $\supp(A)$ is a
  fixed set, we can think of sampling a random $B_\veps$ as picking
  $B_0 : = B_\veps \cap\supp(A)$ and
  $B_1: = B_\veps \cap \overline{\supp(A)}$ independently. Let
  ``\textsf{blind}'' denote the random experiment of $\adver$ running
  on $\Mac_k$ blinded by a random $B_\veps$:
  $k,B_\veps,(m,t)\leftarrow\adver^{B_\veps \Mac_k}(1^n)$, which is
  equivalent to $k,B_0, B_1,(m,t)\leftarrow\adver^{B_0 \Mac_k}(1^n)$.
  The probability that $\adver$ wins the BU game is
\begin{align*}
p :=& \Pr_{\text{blind}}[f(m)=t\wedge m\in B_\varepsilon]\geq \Pr_{\text{blind}}[f(m)=t \wedge m\in B']\\
\geq & \Pr_{\text{blind}}[f(m)=t \wedge m\in B' \mid m \notin \supp(A)] \cdot \Pr_{\text{blind}}[m \notin \supp(A)] \\
= &\Pr_{\substack{f,B_0\\(m,t)\gets \adver^{Bf}}}[f(m)=t \wedge m \notin \supp(A)]\cdot \Pr_{\substack{f,B'\\(m,t)\gets \adver^{Bf}}}[m\in B' | m \notin \supp(A)] \\
\geq  &\left(\tilde p - 2T \sqrt \veps \right) \veps\geq \frac{\tilde p^3}{27 T^2} \,.
\end{align*}
Here the second-to-last step follows from \expref{Theorem}{thm:blinded-algo}; in the last step, we chose $\veps = (\tilde p/3T)^2$. We conclude that $\adver$ breaks the \BU security of the MAC.
 \end{proof}

\subsection{Relationship to other definitions}

\subsubsection{Classical \BU is equivalent to \EUFCMA}

In the purely classical setting, our notion is equivalent to EUF-CMA. In the strong unforgeability case, this means \BU with blinding on message-tag pairs, as described in \expref{Section}{sec:strong}.

\begin{proposition}\label{prop:MAC-EUF=BU}
A MAC is EUF-CMA if and only if it is blind-unforgeable against classical adversaries. 
\end{proposition}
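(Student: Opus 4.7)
The plan is to prove both directions via direct reductions using a simple coupling between the EUF-CMA and blind-forgery experiments.

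For $\BU \Rightarrow \EUFCMA$: starting from a PPT $\algo A$ that wins the EUF-CMA game with probability $p$ using at most $q$ classical queries, I would build a BU adversary $(\algo B, \epsilon)$ with $\epsilon := 1/q$ that runs $\algo A$ internally, forwarding each query to its own blinded oracle $B_\epsilon \Mac_k$ and aborting on the first $\bot$-response, outputting whatever $\algo A$ eventually emits. Viewing $B_\epsilon$ via lazy sampling, the event $E$ that no query of $\algo A$ lands in $B_\epsilon$ has probability at least $(1 - 1/q)^q \geq 1/(2e)$, and conditioned on $E$ the view of $\algo A$ is distributed exactly as in the real EUF-CMA experiment. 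Hence with conditional probability $p$, $\algo A$ outputs a valid pair $(m,t)$ with $m$ unqueried; for such an $m$, independence of the unsampled blinding bit then gives $\Pr[m \in B_\epsilon] = \epsilon$, so $\algo B$ wins BU with probability at least $p/(2eq)$, which is non-negligible.

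For $\EUFCMA \Rightarrow \BU$: starting from a classical BU adversary $(\algo A, \epsilon)$ with non-negligible winning probability, I would construct an EUF-CMA adversary $\algo B$ that simulates the blinded oracle by lazy sampling: for each classical query $m$ of $\algo A$, flip an independent $\epsilon(n)$-biased coin and either return $\bot$ or forward $m$ to the real $\Mac_k$ oracle and return the answer, outputting $\algo A$'s final $(m,t)$. Because $\algo A$ is classical its view depends only on the values of $B_\epsilon$ at queried points, so the simulation is perfectly faithful. Whenever $\algo A$ wins---so $m \in B_\epsilon$ and $\Ver_k(m,t) = \acc$---the membership $m \in B_\epsilon$ forces $\algo B$ never to have forwarded $m$ to its own oracle (either $\algo A$ did not query $m$ at all, or the simulation answered $\bot$), making $(m,t)$ fresh; hence $\algo B$ wins EUF-CMA with the same probability.

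The main subtlety is the lazy-sampling step: it is correct in the classical-query setting precisely because the adversary's view is determined by $B_\epsilon$ only at its (classical) query points, and this is the feature that genuinely fails for quantum adversaries---which is why \BU and \EUFCMA separate in the quantum setting and why the paper needs the quantitative control of \expref{Theorem}{thm:blinded-algo} elsewhere. For the strong-unforgeability variant with blinding on message-tag pairs described in \expref{Section}{sec:strong}, both reductions go through after flipping the blinding coin on $(m, \Mac_k(m;r))$ rather than on $m$ alone, and interpreting freshness in EUF-CMA as strong freshness.
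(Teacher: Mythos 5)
Your proof is correct and follows essentially the same route as the paper's: the same lazy-sampling simulation of the blinded oracle for the \EUFCMA $\Rightarrow$ \BU direction, and for the converse the same choice of $\epsilon$ inversely proportional to the number of queries together with the observation that no query hits $B_\epsilon$ with constant probability while the unqueried forgery point is blinded independently with probability $\epsilon$. The only nit is the degenerate case $q\le 1$, where $\epsilon=1/q$ makes $(1-1/q)^q$ vanish; taking $\epsilon=1/(q+1)$, or as the paper does $\epsilon=1/p(n)$ for the running time $p(n)$, removes this.
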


\begin{proof}
  Set $F_k = \Mac_k$. Consider an adversary $\mathcal{A}$ which violates EUF-CMA. Such an adversary, given $1^n$ and oracle
  access to $F_k$ (for $k \in_R \{0,1\}^n$), produces a forgery
  $(m, t)$ with non-negligible probability $s(n)$; in particular,
  $|m| \geq n$ and $m$ is not among the messages queried by
  $\mathcal{A}$. This same adversary (when coupled with an appropriate
  $\epsilon$) breaks the system under the blind-forgery
  definition. Specifically, let $p(n)$ be the running time of
  $\mathcal{A}$, in which case $\mathcal{A}$ clearly makes no more
  than $p(n)$ queries, and define $\epsilon(n) = {1}/{p(n)}$. Consider
  now a particular $k \in \{0,1\}^n$ and a particular sequence $r$ of
  random coins for $\mathcal{A}^{F_k}(1^n)$. If this run of
  $\mathcal{A}$ results in a forgery $(m,t)$, observe that with
  probability at least $(1 - \epsilon)^{p(n)} \approx e^{-1}$ in the
  choice of $B_\epsilon$, we have $F_k(q) = B_\epsilon F_k(q)$ for
  every query $q$ made by $\mathcal{A}$. On the other hand,
  $B_\epsilon(m) = \bot$ with (independent) probability $\epsilon$. It
  follows $\phi(n,\epsilon_n)$ is at least
  $\epsilon s(n)/e = \Omega(s(n) / p(n))$.

  On the the other hand, suppose that $(\mathcal{A},\epsilon)$ is an
  adversary that breaks blind-unforgeability. Consider now
  the EUF-CMA adversary $\mathcal{A'}^{F_k}(1^n)$ which simulates the adversary
  $\mathcal{A}^{(\cdot)}(1^n)$ by answering oracle queries according to a
  locally-simulated version of $B_\epsilon F_k$; specifically, the
  adversary $\mathcal{A'}$ proceeds by drawing a subset
  $B_{\epsilon(n)} \subseteq \{0,1\}^*$ as described above and answering
  queries made by $\mathcal{A}$ according to $B_\epsilon F$. Two remarks are in order:
  \begin{itemize}
  \item  When $x \in B_\epsilon$, this query is answered without an
    oracle call to $F(x)$.
  \item $\mathcal{A}'$ can construct the set $B_\epsilon$ ``on the fly,'' by determining, when a particular query $q$ is made by $\mathcal{A}$, whether $q \in B_\epsilon$ and 
``remembering'' this information in case the query is asked again (``lazy sampling'').
  \end{itemize}
  With probability $\phi(n,\epsilon(n))$
  $\mathcal{A}$ produces a forgery on a point which was not queried by
  $\mathcal{A'}$, as desired. It follows that $\mathcal{A}$ produces a
  (conventional) forgery with non-negligible probability when given
  $F_k$ for $k \in_R \{0,1\}^n$. 
  \end{proof}

\subsubsection{\BU implies \GYZ}\label{app:BU=>GYZ}

In this section, we sketch how our new security notion, \BU, implies the one-time security notion put forward by Garg, Yuen and Zhandry \cite{GYZ17}. We do this for unitary adversaries without loss of generality. We expect that the ideas carry over to the fully general case. For the special case we consider here, \GYZ unforgeability can be defined as follows.
\begin{definition}[\GYZ unforgeability]
	Let $\Pi = (\KeyGen, \Mac, \Ver)$ be a MAC. $\Pi$ is called $\varepsilon$-\GYZ-unforgeable, if for attack unitary $V_{MTB}$, there exists a simulator sub-unitary $W_{MTB}$\footnote{A matrix $V$ is sub-unitary if $V^\dagger V\le \mathds 1$.} such that $\bra i_C W\ket j_C=0$ for $i\neq j$ and for all initial states $\ket\psi_{MB}$
	\begin{equation}
		\mathbb E_{r,k}\left\|\left(\Pi_{\Ver, k}\right)_{MT}(V-W)_{MTB}\left(U_{\Mac,r ,k}\right)_{MT}\ket\psi_{MB}\ket 0_T\right\|^2\le \varepsilon,
	\end{equation} 
	where $U_{\Mac,r,k}\ket m_M\ket y_T=\ket m_M\ket{y\oplus\Mac_{r;k}(m)}_T$ and 
	\begin{equation}
	\Pi_{\Ver, k}=\sum_{(m,t)\mathrm{\ valid}}\proj m\otimes \proj{t}
	\end{equation}
	 is the projector onto the subspace of valid message-tag-pairs.
\end{definition}
We are now ready to state the desired theorem. Here, $\delta$-1-\BU denotes the one-time version of \BU that allows for maximal adversarial advantage $\delta$, and similarly for $\delta'$-\GYZ. The theorem is easiest proven using the measured version of $\BU$, \mBU.
\begin{theorem}\label{thm:BU=>GYZ}
  Let $\Pi = (\KeyGen, \Mac, \Ver)$ be unconditionally\footnote{Here, ``unconditionally'' means without any assumption on the computational complexity of the adversary.} $\delta$-1-\mBU-secure. Then it is $16\delta$-\GYZ-unforgeable.
\end{theorem}
\mBU implies \BU, so we immediately obtain the following

\begin{corollary}
	Let $\Pi = (\KeyGen, \Mac, \Ver)$ be $\delta$-1-\BU-secure. Then it is $16\delta$-\GYZ-unforgeable.
\end{corollary}

For the proof of \expref{Theorem}{thm:BU=>GYZ}, we need the following lemma stating that an algorithm's success probability does not degrade too much if some operation that never leaves a computational basis state unchanged is sandwiched between a blinding projector and its complement.

\begin{lemma}\label{lem:change-is-change}
	Let $\Pi^{(\varepsilon)}_A$ be the projector onto the subspace spanned by the computational basis states corresponding to strings in the blinding set $B_\varepsilon$ sampled as for the definition of \BU. Let further $M_{AB}$ be a matrix such that $\bra x_AM\ket x_A=0$ for all $x$. Then for all $\rho\ge 0$
	\begin{equation}
		\mathbb E_{B_\varepsilon}\left[\Pi^{(\varepsilon)}_AM_{AB}\bar\Pi^{(\varepsilon)}_A\rho_{AB} \bar\Pi^{(\varepsilon)}_AM_{AB}^\dagger\Pi^{(\varepsilon)}_A\right]\ge \varepsilon^2(1-\varepsilon)^2M_{AB}\rho_{AB} M_{AB}^\dagger,
	\end{equation}
	where $\bar\Pi^{(\varepsilon)}_A=\mathds 1-\Pi^{(\varepsilon)}_A$. 
\end{lemma}
\begin{proof}
	Let
	\begin{equation}
		M_{AB}=\sum_{x\neq y}\ketbra{x}{y}_{A}\otimes M^{(xy)}_B
	\end{equation}
	and
	\begin{equation}
	\rho_{AB}=\sum_{xy}\ketbra{x}{y}_{A}\otimes \rho^{(xy)}_B.
	\end{equation}
	We calculate
	\begin{align}
		&\mathbb E_{B_\varepsilon}\left[\Pi^{(\varepsilon)}_AM_{AB}\bar\Pi^{(\varepsilon)}_A\rho \bar\Pi^{(\varepsilon)}_AM_{AB}^\dagger\Pi^{(\varepsilon)}_A\right]\\
		&=\sum_{x\neq y,z\neq t}\Pr\left[y,z\in B_\varepsilon, x,t\notin B_\varepsilon\right]\ketbra{x}{t}_A\otimes \left(M^{(xy)}_B\rho^{(yz)}_B\left(M^{(tz)}_B\right)^\dagger \right)\\
		&=\varepsilon^2(1-\varepsilon)^2M_{AB}\rho_{AB} M_{AB}^\dagger+\varepsilon(1-\varepsilon)^3M_{AB}\mathcal M(\rho_{AB} )M_{AB}^\dagger\\
		&+\varepsilon^3(1-\varepsilon)\mathcal M(M_{AB}\rho_{AB}M_{AB}^\dagger )\\
		&+\varepsilon^2(1-\varepsilon)^2\sum_x \proj x_AM_{AB}\proj x_A\rho_{AB}\proj x_AM_{AB}^\dagger \proj x_A.
	\end{align}
	In the last three lines, 
	\begin{equation}
		\mathcal M(X)=\sum_x \proj x X\proj x
	\end{equation} 
	is the computational basis pinching channel,  we used the condition $\bra i_AM\ket i_B=0$ so we can add these terms for free, and  the second, third and fourth term arise because of the increase of the probability $\Pr\left[y,z\in B_\varepsilon, x,t\notin B_\varepsilon\right]$ if $y=z$, $x=t$, or both. But the second, third and fourth term are positive semidefinite, so the claimed operator inequality follows.
\end{proof}

\begin{proof}[Proof of {\expref{Theorem}{thm:BU=>GYZ}}]
  Let 
  \begin{equation}
    V_{MTB} =\sum_{x y}\ketbra{x}{y}_{M}\otimes V^{(xy)}_{TB}
  \end{equation}
  be an attack unitary that breaks $\delta'$-\GYZ, and define the simulator sub-unitary
  \begin{equation}
    W_{MTB} =\sum_{x }\ketbra{x}{x}_{M}\otimes V^{(xx)}_{TB}
  \end{equation}
  to be the diagonal part of $V$. By assumption there exists a
  state $\ket\psi_{MB}$ such that
  \begin{equation}\label{eq:assumpt}
    \mathbb E_k\left\|\left(\Pi_{\Ver, k}\right)_{MT}(V-W)_{MTB}\left(U_{\Mac,r,k}\right)_{MT}\ket\psi_{MB}\ket 0_T\right\|^2> \delta'.
  \end{equation}

  Now we use the initial state $\ket\psi_{MB}\ket 0_T$ and the
  unitary $V$ as attacker $\adver$ for 1-\mBU: $\adver$ queries
  the \mBU-type oracle for $\Mac_k$ on $\ket\psi_{MB}\ket 0_T$,
  applies $V$, measures $MT$ and outputs the result. The success
  probability $p_{\mathrm{succ}}$ of this adversary can be lower
  bounded by the probability that the first measurement turns up
  ``not blinded'' and the adversary is successful. So for a
  fixed key $k$ and randomness $r$, and defining
  $\rho=\proj \phi$ with
  $\ket \phi=\left(U_{\Mac,r,k}\right)_{MT}\ket\psi_{MB}\ket
  0_T$ we get
  \begin{align}
  p_{\mathrm{succ}}(k,r)&\ge \mathbb E_{B_\varepsilon}\left\|\left(\Pi_{\Ver, k}\right)_{MT}\Pi^{(\varepsilon)}_MV_{MTB}\bar\Pi^{(\varepsilon)}_M\left(U_{\Mac,r,k}\right)_{MT}\ket\psi_{MB}\ket 0_T\right\|^2\\
  &= \mathbb E_{B_\varepsilon}\left\|\left(\Pi_{\Ver, k}\right)_{MT}\Pi^{(\varepsilon)}_M(V-W)_{MTB}\bar\Pi^{(\varepsilon)}_M\left(U_{\Mac,r,k}\right)_{MT}\ket\psi_{MB}\ket 0_T\right\|^2\\
  &=\mathbb E_{B_\varepsilon}\Tr\left[\left(\Pi_{\Ver, k}\right)_{MT}\Pi^{(\varepsilon)}_M(V-W)_{MTB}\bar\Pi^{(\varepsilon)}_M\rho\bar\Pi^{(\varepsilon)}_M(V-W)_{MTB}^\dagger\Pi^{(\varepsilon)}_M\right]\\
  &\ge \varepsilon^2(1-\varepsilon)^2\Tr\left[\left(\Pi_{\Ver,k}\right)_{MT}(V-W)_{MTB}\rho(V-W)_{MTB}^\dagger\right]\\
  &=\varepsilon^2(1-\varepsilon)^2\left\|\left(\Pi_{\Ver,k}\right)_{MT}(V-W)_{MTB}\left(U_{\Mac,r,k}\right)_{MT}\ket\psi_{MB}\ket 0_T\right\|^2.
  \end{align}
  
  Here we have used the fact that $\Pi^{(\varepsilon)}_MW_{MTB}\bar\Pi^{(\varepsilon)}_M=0$ for all blinding sets in the second line, and Lemma \ref{lem:change-is-change} in the second-to-last line.
  Taking the expectation over $k$ and $r$ and using Equation \eqref{eq:assumpt}, we arrive at
  \begin{equation}
    p_{\mathrm{succ}}\ge \varepsilon^2(1-\varepsilon)^2\delta'.
  \end{equation}
  Choosing $\epsilon=1/2$ we obtain
  \begin{equation}
    p_{\mathrm{succ}}\ge\frac 1 {16}\delta'.
  \end{equation}
\end{proof}

\subsubsection{\BU implies quadratic \BZ}
\label{sec:bu2qbz}

It is interesting to ask if \BU-security implies \BZ-security, as the
\BZ definition certainly captures a natural family of attacks that one
would like to rule out. We are unable to settle this question
completely, but provide some weaker connection. Specifically, we show
that if a function is BU-secure, then it is \BZ-secure with a weaker
definition of \BZ-security that forbids an adversary from producing
$c k^2$ forgeries from $k$ queries with high probability.

For this purpose, consider a function $M: X \rightarrow Y$ and a
\BZ-type adversary $\mathcal{A}$ which, given oracle access to $M$,
makes some $k$ queries and produces $ck^2$ forgeries (with probability
1); here $c \geq 1$ is a constant we set later in the discussion. We
consider the behavior of this adversary $\mathcal{A}^{B_\epsilon M}$
supplied with an oracle $B_\epsilon M$ blinded at a random set
$B_\epsilon$. We will show that for an appropriate value of $c$ and
$\epsilon$, this adversary produces a family of forgeries which
includes at least one blinded forgery with constant
probability. Finally selecting one of these forgeries at random
produces an adversary that breaks the BU security definition.

Returning to the \BZ-adversary $\mathcal{A}$, we say that a particular
blinding set $B$ is \emph{$\gamma$-evasive} if
\[
\Pr_{\mathcal{A}}[\text{$\mathcal{A}^M$ outputs no elements of $B$}] \geq \gamma\,.
\]
(Note that this event is determined by running $\mathcal{A}$ with the
\emph{unblinded} oracle $M$.)  Observing that
\[
\Pr_{\mathcal{A}, B_\epsilon}[\text{$\mathcal{A}^M$ outputs no
	elements of $B_\epsilon$}] \leq (1 - \epsilon)^{ck^2} \leq e^{-c
	\epsilon k^2}\, . 
\]
We note that (by Markov's inequality),
\[
\Pr_{B_\epsilon}[\text{$B_\epsilon$ is $\gamma$-evasive}] \leq e^{-c\epsilon k}/\gamma\,.
\]
Similarly, we say that a particular blinding set $B$ is
$\gamma$-divergent if
\[
\| D_{\mathcal{A}^M} - D_{\mathcal{A}^{BM}} \|_{\operatorname{t.v.}} \geq \gamma\,,
\]
where $D_{M}$ is the distribution of outputs of $\mathcal{A}^M$ and
$D_{BM}$ is the distribution of outputs of $\mathcal{A}^{BM}$ when $M$
is blinded on set $B$. In light of Theorem~\ref{thm:simulation},
\[
\Exp_{B_\epsilon} \left[ \left\| D_{M} - D_{B_\epsilon M} \right\|_{\operatorname{t.v.}}\right] \leq 2 k \sqrt{\epsilon}
\]
and it follows by Markov's inequality that
\[
\Pr_{B_\epsilon}[\text{$B_\epsilon$ is $\gamma$-divergent}] = \Pr_{B_\epsilon}[\| D_{M} -
D_{BM} \|_{\operatorname{t.v.}} \geq \gamma] \leq
{2k\sqrt{\epsilon}}/{\gamma} \, .
\]

Fixing $\gamma \leq 1/2 - \delta$ for $\delta > 0$, note that if $B$
is neither $\gamma$-evasive nor $\gamma$-divergent then
\[
\Pr_{\mathcal{A}}[\text{$\mathcal{A}^M$ outputs an element of $B$}] \geq 1 - \gamma\,,
\]
(associated with the distribution $D_{M}$), and hence
\[
\Pr_{\mathcal{A}}[\text{$\mathcal{A}^{BM}$ outputs an element of $B$}] \geq 1 - 2\gamma \geq 2\delta\,.
\]
Finally, note that the probability that $B$ is $(1/2-\delta)$-evasive or $(1/2-\delta)$-divergent is no more than
\[
\frac{1}{1/2 - \delta}\underbrace{\left[ e^{-c \epsilon k^2} + 2 k \sqrt{\epsilon}\right]}_{(\dagger)}\,.
\]
Then it is clear that one can choose the constants $\delta$ and $c$,
and the blinding probability $\epsilon = \Theta(1/k^2)$, so that this
quantity is a constant bounded away from one. (For example, set
$\delta = 1/6$. Then, with $\epsilon = 1/(144 k^2)$ the second term of
$(\dagger)$ above is no more than $1/6$; setting $c = 288$ guarantees
the first term is likewise no more than $e^{-2} < 1/6$ and the entire
expression is a constant less than one. One can achieve better constants with more care, but the quadratic dependence on $\epsilon$ in \expref{Theorem}{thm:simulation} dictates the quadratic gap between $k$ and the number of forgeries achieved by this simple method of proof.)

Finally, we create a BU adversary for $M$ by running the \BZ adversary,
blinded as above with $\epsilon = \Theta(1/k^2)$, and selecting one of
the $c k^2$ output values at random.

\section{Blind-unforgeable schemes}
\label{sec:schemes}
\subsection{Random schemes}
\label{ssec:ro}

We now show that suitable random and pseudorandom function families satisfy our notion of unforgeability.

\begin{theorem}\label{thm:random-oracle}
  Let $R: X \rightarrow Y$ be a uniformly random function such that
  $1/{|Y|}$ is negligible in $n$. Then $R$ is a blind-forgery secure
  MAC. 
\end{theorem}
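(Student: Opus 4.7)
The approach is a purely information-theoretic independence argument: once $R$ is blinded on $B_\epsilon$, the oracle reveals nothing about $R$ restricted to the blinded region, so the adversary's guess at $R(m)$ for any $m \in B_\epsilon$ can be correct with probability at most $1/|Y|$.

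First I would condition on a realization of the blinding set $B = B_\epsilon$, and sample the uniformly random function $R : X \to Y$ as a pair of independent uniformly random functions $R_{\bar B} : \bar B \to Y$ and $R_B : B \to Y$, where $\bar B = X \setminus B$. The blinded-oracle unitary $\ket x \ket y \mapsto \ket x \ket {y \oplus B_\epsilon R(x)}$ depends only on $B$ and $R_{\bar B}$: by the blinding convention, for $x \in B$ the oracle outputs the fixed blinding symbol, so $R(x)$ is never consulted. Consequently, the joint state of the adversary and its workspace immediately before the final measurement --- and therefore the induced distribution $p_{B, R_{\bar B}}(m,t)$ of the output $(m,t)$ --- is a function of $B$, $R_{\bar B}$, and the adversary's internal randomness alone, and is statistically independent of $R_B$.

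Second, I would compute the winning probability by averaging over $R_B$ last. For any $m \in B$, the value $R(m) = R_B(m)$ is uniform on $Y$ and independent of $(m,t)$, so $\Pr_{R_B}[R(m) = t] = 1/|Y|$. Hence
\[
\Pr[\win] \;=\; \Exp_B \Exp_{R_{\bar B}} \sum_{(m,t) : m \in B} p_{B, R_{\bar B}}(m,t) \cdot \frac{1}{|Y|} \;\leq\; \frac{1}{|Y|},
\]
which is negligible by hypothesis. The argument puts no bound on the adversary's query or computational resources, so it establishes \BU security (\expref{Definition}{def:MAC-BU}) and $q$-\BU security (\expref{Definition}{def:q-MAC-BU}) at once. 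I do not anticipate significant obstacles; the one subtlety is articulating the independence of $R_B$ from the adversary's final state in the quantum setting, but this follows immediately from the explicit form of the oracle unitary. In particular, no recourse to the perturbation bound of \expref{Theorem}{thm:blinded-algo} is required, since we are exploiting an exact independence structure rather than comparing two nearby oracles.
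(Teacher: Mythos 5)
Your proposal is correct and is essentially identical to the paper's own proof: the paper likewise observes that the blinded oracle $B_\epsilon R$ is determined entirely by $B_\epsilon$ and the restriction of $R$ to the complement of $B_\epsilon$, so that conditioned on $B_\epsilon$ the output $(m,t)$ is independent of $R$ restricted to $B_\epsilon$, giving a success probability of at most $1/|Y|$. Your added remark that the argument is information-theoretic and hence yields $q$-\BU security as well is consistent with how the paper later uses this theorem.
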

\begin{proof}
  For simplicity, we assume that the function is length-preserving; the proof generalizes easily. Let $\mathcal{A}$ be an efficient quantum adversary. The oracle
  $B_\epsilon R$ supplied to $\mathcal{A}$ during the blind-forgery
  game is determined entirely by $B_\epsilon$ and the restriction of
  $R$ to the complement of $B_\epsilon$. On the other hand, the
  forgery event
  \[ \mathcal{A}^{B_\epsilon F_k}(1^n) = (m,t) \wedge |m| \geq n
    \wedge F_k(m) = t \wedge B_\epsilon F_k(m) = \bot \] depends
  additionally on values of $R$ at points in $B_\epsilon$. To reflect
  this decomposition, given $R$ and $B_\epsilon$ define
  $R_\epsilon: B_\epsilon \rightarrow Y$ to be the restriction
  of $R$ to the set $B_\epsilon$ and note that---conditioned on
  $B_\epsilon R$ and $B_\epsilon$---the random variable $R_\epsilon$
  is drawn uniformly from the space of all (length-preserving)
  functions from $B_\epsilon$ into $Y$. Note, also, that for
  every $n$ the purported forgery
  $(m,t) \leftarrow \mathcal{A}^{B_\epsilon R}(1^n)$ is a (classical)
  random variable depending only on $B_\epsilon R$. In particular,
  conditioned on $B_\epsilon$, $(m,t)$ is independent of
  $R_\epsilon$. It follows that, conditioned on $m \in B_\epsilon$,
  that $t = R_\epsilon(m)$ with probability no more than $1/2^n$ and
  hence $\phi(n,\epsilon) \leq 2^{-n}$, as desired.
 \end{proof}

Next, we show that a $\qPRF$ is a blind-unforgeable MAC.

\begin{corollary}\label{cor:qPRF-MAC}
  Let $m$ and $t$ be $\poly(n)$, and
  $F : \bits^n \times \bits^m \rightarrow \bits^t$ a $\qPRF$. Then $F$
  is a blind-forgery-secure fixed-length MAC (with length $m(n)$).
\end{corollary}
\begin{proof}
For a contradiction, let $\algo A$ be a QPT which wins the blind forgery game for a certain blinding factor $\varepsilon(n)$, with running time $q(n)$ success probability $\delta(n)$. We will use $\algo A$ to build a quantum oracle distinguisher $\algo D$ between the \qPRF $F$ and the perfectly random function family $\mathcal F_m^t$ with the same domain and range.

First, let $k = q(n)$ and let $\mathcal H$ be a family of $(4k+1)$-wise independent functions with domain $\bits^m$ and range $\{0, 1, \dots, 1/\varepsilon(n)\}$. The distinguisher $\algo D$ first samples $h \inrand \mathcal H$. Set $B_h := h^{-1}(0)$. Given its oracle $\algo O_f$, $\algo D$ can implement the function $B_h f$ (quantumly) as follows:
\begin{align*}
\ket{x}\ket{y} 
\mapsto &\ket{x}\ket{y}\ket{H_x}\ket{\delta_{h(x), 0}}
\mapsto \ket{x}\ket{y}\ket{H_x}\ket{\delta_{h(x), 0}}\ket{f(x)}\\
\mapsto &\ket{x}\ket{y \oplus f(x) \cdot (1 - \delta_{h(x), 0})}\ket{H_x}\ket{\delta_{h(x), 0}}\ket{f(x)}\\
\mapsto &\ket{x}\ket{y \oplus f(x) \cdot (1 - \delta_{h(x), 0})}\,.
\end{align*}
Here we used the CCNOT (Toffoli) gate from step 2 to 3 (with one control bit reversed), and uncomputed both $h$ and $f$ in the last step. After sampling $h$, the distinguisher $\algo D$ will execute $\algo A$ with the oracle $B_h f$. If $\algo A$ successfully forges a tag for a message in $B_h$, $\algo A'$ outputs ``pseudorandom''; otherwise ``random.''

Note that the function $B_h f$ is perfectly $\epsilon$-blinded if $h$ is a perfectly random function. Note also that the entire security experiment with $\algo A$ (including the final check to determine if the output forgery is blind) makes at most $2k$ quantum queries and $1$ classical query to $h$, and is thus (by \expref{Theorem}{thm:k-wise}) identically distributed to the perfect-blinding case. 

Finally, by \expref{Theorem}{thm:random-oracle}, the probability that $\algo D$ outputs ``pseudorandom'' when $f \inrand \mathcal F_m^t$ is negligible. By our initial assumption about $\algo A$, the probability that $\algo D$ outputs ``pseudorandom'' becomes $\delta(n)$ when $f \inrand F$. It follows that $\algo D$ distinguishes $F$ from perfectly random.
 \end{proof}

Next, we give a information-theoretically secure $q$-time MACs (\expref{Definition}{def:q-MAC-BU}).

\begin{theorem}
Let $\calH$ be a $(4q+1)$-wise independent function family with range $Y$, such that $1/|Y|$ is a negligible function. Then $\calH$ is a $q$-time BU-secure MAC.
\label{thm:twise}
\end{theorem}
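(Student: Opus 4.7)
The plan is to reduce the $q$-time blind forgery experiment against the $(4q+1)$-wise independent family $\calH$ to the same experiment against a truly uniformly random function, and then invoke \expref{Theorem}{thm:random-oracle}. The key observation is that the entire BU experiment (sampling the blinding set, running the adversary against its simulated blinded oracle, and performing the final verification check) can be cast as a single oracle algorithm $\algo B$ that queries the underlying function $h \in \calH$ a bounded number of times.

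Concretely, I would describe $\algo B$ as follows: it first samples a truly random blinding set $B_\veps \subseteq X$ (which is free, since we are in the information-theoretic regime and $\algo B$'s running time is unconstrained); it then emulates $\algo A^{B_\veps h}$, implementing each of $\algo A$'s quantum queries to $B_\veps h$ by (i)~writing $\chi_{B_\veps}(x)$ into an ancilla, (ii)~querying $h$ to compute $h(x)$ into a fresh register, (iii)~XORing the answer into $\algo A$'s output register (or the blinding symbol, controlled on the ancilla, using the $\bot = 0^\ell\|1$ convention introduced earlier), and (iv)~uncomputing $h(x)$ with a second query to $h$ and uncomputing the ancilla. Finally, upon receiving the purported forgery $(m,t)$, $\algo B$ makes one classical query to $h$ to check whether $t = h(m)$ and whether $m \in B_\veps$. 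This yields $2q$ quantum queries and $1$ classical query to $h$ in total.

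By \expref{Theorem}{thm:k-wise} applied with $2q$ quantum and $1$ classical query, the required independence is $2(2q)+1 = 4q+1$, which is precisely what is assumed on $\calH$. Hence the output distribution of $\algo B$ when $h \inrand \calH$ is identical to its output distribution when $h$ is drawn uniformly from $\mathcal F_X^Y$. Combining this with \expref{Theorem}{thm:random-oracle}, whose argument is information-theoretic and bounds the winning probability by $1/|Y|$ regardless of the adversary's query count, yields a winning probability of at most $1/|Y|$ against $\calH$ as well, which is negligible by assumption.

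The only nontrivial point is the query-counting in the simulation of $B_\veps h$: one needs two quantum queries to $h$ (compute and uncompute) per adversary query, rather than one, which is precisely what produces the factor of $4$ (rather than $2$) in the independence parameter $4q+1$. I do not foresee any real obstacle beyond careful bookkeeping.
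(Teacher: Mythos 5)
Your proposal is correct and follows essentially the same route as the paper's proof: both cast the entire blinded experiment (simulation of $B_\veps h$ via compute/uncompute, plus the final verification query) as a single oracle algorithm making $2q$ quantum and $1$ classical query to $h$, invoke \expref{Theorem}{thm:k-wise} to equate the output distributions for $\calH$ and a truly random function, and conclude via \expref{Theorem}{thm:random-oracle}. The only cosmetic difference is that you sample the blinding set truly at random while the paper implements it with a $(4q+1)$-wise independent function as in \expref{Corollary}{cor:qPRF-MAC}; this does not affect the query count to $h$ and both are valid.
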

\begin{proof}
Let $(\algo A, \epsilon)$ be an adversary for the $q$-time game $\blindforge^q_{\algo A, h}(n, \epsilon(n))$, where $h$ is drawn from $\calH$. We will use $\algo A$ to construct a distinguisher $\algo D$ between $\calH$ and a random oracle. Given access to an oracle $\calO$, $\algo D$ first runs $\algo A$ with the blinded oracle $B \calO$, where the blinding operation is performed as in the proof of \expref{Corollary}{cor:qPRF-MAC} (i.e., via a $(4q+1)$-wise independent function with domain size $1/\epsilon(n)$). When $\algo A$ is completed, it outputs $(m,\sigma)$. Next, $\algo D$ queries $\calO$ on the message $m$ and outputs 1 if and only if $\calO(m) = \sigma$ and $m \in B$. Let $\gamma_\calO$ be the probability of the output being 1. 

We consider two cases: (i.) $\calO$ is drawn as a random oracle $R$, and (ii.) $\calO$ is drawn from the family $\calH$. By \expref{Theorem}{thm:k-wise}, since $\algo D$ makes only $2q$ quantum queries and one classical query to $\calO$, its output is identical in the two cases. Observe that $\gamma_R$ (respectively, $\gamma_\calH$) is exactly the success probability of $\algo A$ in the blind-forgery game with random oracle $R$ (respectively, $\cal H$). We know from \expref{Theorem}{thm:random-oracle} that $\gamma_R$ is negligible; it follows that $\gamma_\calH$ is as well. 
 \end{proof}

Several domain-extension schemes, including
NMAC (a.k.a.\ encrypted cascade), HMAC, and AMAC, can transform a
fixed-length \qPRF~to a \qPRF~that takes variable-length
inputs~\cite{SY17} . As a corollary, starting from a \qPRF,
we also obtain a number of quantum blind-unforgeable variable-length MACs.

\subsection{Lamport one-time signatures}\label{sec:Lamport}

The Lamport signature scheme \cite{Lam79} is a EUF-1-CMA-secure signature scheme, specified as follows.%
\begin{construction}[Lamport signature scheme, \cite{Lam79}]\label{con:lamport}
	For the Lamport signature scheme using a hash function family $h: \{0,1\}^n\times\bits^n\to\bits^n$, the algorithms $\KeyGen, \Sign$ and $\Ver$ are specified as follows. $\KeyGen$, on input $1^n$, outputs a pair $(\pk,\sk)$ with 
	\begin{align}
	\sk &=(s_i^j)_{i\in\{1,...,n\}, j=0,1},\text{ with }s_i^j\in_R\bits^n\text{, and}\\
	\pk &=\left(k,\left(p_i^j\right)_{i\in\{1,...,n\}, j=0,1}\right), \text{ with }k\in\bits^n \text{ and } p_i^j=h_k\left(s_i^j\right).
	\end{align}
	The signing algorithm is defined by $\Sign_{\sk}(x)=(s_i^{x_i})_{i\in\{1,...,n\}}$
	where $x_i$, $i=1,...,n$ are the bits of $x$. The verification procedure checks  the signature's consistency with the public key, i.e.,	$\Ver_{\pk}(x,s)=0$ if $p_i^{x_i}=h_k(s_i)$ and  $\Ver_{\pk}(x,s)=0$ otherwise.
\end{construction}

We now show that the Lamport scheme is 1-\BU secure in the quantum random oracle model.

\begin{theorem}\label{thm:lamport}
  \expref{Construction}{con:lamport} is 1-\BU secure if $h$ is modeled as
  a quantum-accessible random oracle.
\end{theorem}

\begin{proof}
	We implement the random oracle $h$ as a superposition oracle with register $F$. In the 1-\blindforge experiment we execute the sampling part of the key generation by preparing a superposition as well. More precisely, we can just prepare $2n$ $n$-qubit registers $S_i^j$ in a uniform superposition, with the intention of measuring them to sample $s_i^j$ in mind. We are talking about a classical one-time signature scheme, and all computation that uses the secret key is done by an honest party, and is therefore classical. It follows that the measurement that samples $s_i^j$ commutes with all other operations which are implemented as quantum-controlled operations controlled on the secret key registers, i.e., we can postpone it to the very end of the 1-\blindforge experiment, just like the measurement that samples an actual random oracle using a superposition oracle. The joint state $\ket{\psi_0}$ with oracle register F and secret key register $\skreg =(S_i^j)_{i\in\{1,...,n\}, j=0,1}$ is now in a uniform superposition, i.e.,
	\begin{equation}
	\ket{\psi_0}_{\skreg F}=\ket{\phi_0}^{\otimes 2n}_{\skreg }\otimes \ket{\phi_0}^{\otimes 2^n}_F.
	\end{equation}
	To subsequently generate the public key, the superposition oracle for $h$ is queried on each of the $S_i^j$ with an empty outrput register $P_i^j$, producing the state $\ket{\psi_1}_{\skreg\pkreg F}$ equal to
	\[
	2^{-2n^2}\sum_{\substack{s_i^j\in\bits^n\\p_i^j\in\bits^n\\i\in\{1,...,n\}, j=0,1}}\left(\bigotimes_{\substack{i\in\{1,...,n\}\\j=0,1}}\ket{s_i^j}_{S_i^j}\right)\otimes \left(\bigotimes_{\substack{i\in\{1,...,n\}\\j=0,1}}\ket{p_i^j}_{P_i^j}\right)\otimes\ket{f_{\sk,\pk}}_F,
	\]
	where $\ket{f_{\sk,\pk}}_F$ is the superposition oracle state where $F_{s_i^j}$ is in state $\ket{p_i^j}$ and all other registers are still in state $\ket{\phi_0}$.
	Then the registers $P_i^j$ are measured to produce an actual, classical, public key that can be handed to the adversary. Note that there is no hash function key $k$ now, as it has been replaced by the random oracle. Treating the public key as classical information from now on and removing the registers $\pkreg$, the state takes the form
	\begin{align}
	\ket{\psi_2(\pk)}_{\skreg F}=&2^{-n^2}\sum_{\substack{s_i^j\in\bits^n\\i\in\{1,...,n\}, j=0,1}}\left(\bigotimes_{\substack{i\in\{1,...,n\}\\j=0,1}}\ket{s_i^j}_{S_i^j}\right)\otimes\ket{f_{\sk,\pk}}_F.
	\end{align}
	Now the interactive phase of the 1-\blindforge experiment can begin, and we provide both the random oracle $h$ and the signing oracle (that can be called exactly once) as superposition oracles using the joint oracle state $\ket{\psi_2(\pk)}$ above. The random oracle answers queries as described in \expref{Section}{sec:superoracles}. The signing oracle, when queried with registers $XZ$ with $Z=Z_1...Z_n$, applies $\CNOT^{\otimes n}_{S_i^{x_i}:Z_i}$, $i=1,...,n$ controlled on $X$ being in the state $x\notin B_\varepsilon$.
	
	Now suppose \adver, after making at most one query to $\Sign$ and an arbitrary polynomial number of queries to $h$, outputs a candidate message signature pair $(x^0, z^0)$ with $z^0=z^0_1\|\cdots \|z^0_n$. If $x^0\notin B_\varepsilon$, \adver has lost. Suppose therefore that $x^0\in B_\varepsilon$. We will now make a measurement on the oracle register to find an index $i$ such that $S_i^{x^0_i}$ has not been queried. To this end we first need to decorrelate $\skreg$ and $F$. This is easily done, as the success test only needs computational basis measurement results from the register \skreg, allowing us to perform any controlled operation on $F$ controlled on $\skreg$. Therefore we can apply the operation $\bigoplus p_i^j$ followed by $H^{\otimes n}$ to the register $F_{s_i^j}$ controlled on $S_i^j$ being in state $\ket{s_i^j}$, for all $i=1,...,n$ and $j=0,1$. For an adversary that does not make any queries to $h$, this has the effect that all $F$-registers are in state $\ket{\phi_0}$ again now.
	
	We can equivalently perform this restoring procedure before the adversary starts interaction, and answer the adversary's $h$-queries as follows. Controlled on the adversary's input being equal to one of the parts $s_i^j$ of the secret key, answer with the corresponding public key, otherwise use the superposition oracle for $h$.
	
	For any fixed secret key register $S_i^j$, the unitary that is applied upon an $h$-query can hence be written as
	\begin{align}
	U_h'&=U_\bot+\sum_{x\in\bits^n}(U_x-U_\bot)\proj x_X\proj x_{S_i^j}\\
	&=U_\bot+\sum_{x\in\bits^n}\proj x_X\proj x_{S_i^j}(U_x-U_\bot),
	\end{align}
	where $U_\bot$ acts trivially on $S_i^j$ and the second equality follows because the unitaries $U_\bot$ and $U_x$ are controlled unitaries with $X$ and $S_i^j$ part of the control register. Using the above equation we derive a bound on the operator norm of the commutator of this unitary and the projector onto $\ket{\phi_0}$,
	\begin{align*}
	\left\|\left[U_h',\proj{\phi_0}_{S_i^j}\right]\right\|_\infty =&2^{-n/2}\left\|\sum_{x\in\bits^n}\!\!\left((U_x-U_\bot)\proj x_X\ketbra{x}{\phi_0}_{S_i^j}-\proj x_X\ketbra{\phi_0}{x}_{S_i^j}(U_x-U_\bot)\right)\right\|_\infty\\
	=&2^{-n/2}\!\!\max_{x\in\bits^n}\left\|\left((U_x-U_\bot)\proj x_X\ketbra{x}{\phi_0}_{S_i^j}-\proj x_X\ketbra{\phi_0}{x}_{S_i^j}(U_x-U_\bot)\right)\right\|_\infty\\
	\le& 4\cdot 2^{-n/2},
	\end{align*}
	where the second equality follows again because $U_\bot$ and $U_x$ are controlled unitaries with $X$ and $S_i^j$ part of the control register.
	
	It follows that a query to $h$ does not decrease the number of registers $S_i^j$ that are in state $\ket{\phi_0}$, except with negligible amplitude.
	
	As we assume that $x^0$ is blinded, we have that for any message $x\notin B_\varepsilon$, there exists an $i\in\{1,...,n\}$ such that $x_i\neq x^0_i$. But \adver interacts with a blinded signing oracle, i.e., controlled on his input being not blinded, it is forwarded to the signing oracle, otherwise $\bot$ is XORed into his output register. Therefore only non-blinded queries have been forwarded to the actual signing oracle, so the final state is a superposition of states in which the register $\skreg$ has at least $n$ subregisters $S_i^j$ are in state $\ket{\phi_0}$, and at least one of them is such that $x^0_i=j$. We can therefore apply an $n$-outcome measurement to the oracle register to obtain an index $i_0$ such that $S_{i_0}^{x^0_{i_0}}$ is in state $\ket{\phi_0}$.  By \expref{Lemma}{lem:add-measure}, this implies that \adver's forgery is independent of $s_{i_0}$, so \adver's probability of succeeding in \blindforge is negligible.
 \end{proof}

A simple proof of the \BZ-security of a random function can be given using a similar idea, see \expref{Theorem}{thm:Rand-BZ} in the appendix.

\subsection{Hash-and-MAC}
\label{ssec:hash-and-mac}

To authenticate messages of arbitrary length with a fixed-length MAC, it is common practice to first compress a long message by a
\emph{collision-resistant} hash functon and then apply the MAC. This
is known as Hash-and-MAC. However, when it comes to
\BU-security, collision-resistance may not be sufficient. We therefore
propose a new notion, \bfilter, generalizing collision-resistance in
the quantum setting, and show that it is sufficient for Hash-and-MAC with
\BU security. Recall that, given a subset $B$ of a set $X$, $\chi_B : X \to \bits$
denotes the characteristic function of $B$.

\begin{definition}[\bfilter]\label{def:bphash}
Let $\mathcal H : X \to Y$ be an efficiently computable function family. Define the following distributions on subsets of $X$:
\begin{enumerate}
\item $\mathcal B_\epsilon$ : generate $B_\epsilon \subseteq X$ by placing $x \in B_\epsilon$ independently with probability $\epsilon$. Output $B_\epsilon$.
\item $\mathcal B_\epsilon^\mathcal H$ : generate $C_\epsilon \subseteq Y$ by placing $y \in C_\epsilon$ independently with probability $\epsilon$. Sample $h \in \mathcal H$ and define $B_\epsilon^h := \{x \in X : h(x) \in C_\epsilon\}$. Output $B_\epsilon^h$.
\end{enumerate}
We say that $\mathcal H$ is a \bfilter if for all adversaries $(\algo A, \epsilon)$,
$$
\Bigl|\Pr_{B \from \mathcal B_\epsilon} \left[ \algo A^{\chi_B}(1^n) = 1\right]
- \Pr_{B \from \mathcal B_\epsilon^\mathcal H} \left[ \algo A^{\chi_B}(1^n) = 1\right]\Bigr|
\leq \negl\,.
$$
\end{definition}

The motivation for the name \bfilter is simply that selecting
$\mathcal B_\epsilon$ can be viewed as a Bernoulli process taking
place on the set $X$, while $\mathcal B_\epsilon^h$ can be viewed as
the pullback (along $h$) of a Bernoulli process taking place on $Y$.

We show that the standard, so-called ``Hash-and-MAC'' construction
will work w.r.t. to \BU security, if we instantiate the hash funtion
with a \bfilter. Recall that, given a MAC $\Pi = (\Mac_k, \Ver_k)$
with message set $X$ and a function $h: Z \to X$, there is a MAC
$\Pi^h := (\Mac_k^h, \Ver_k^h)$ with message set $Z$ defined by
$\Mac_k^h = \Mac_k \circ h$ and $\Ver_k^h(m, t) = \Ver_k(h(m), t)$.

\begin{theorem}[Hash-and-MAC with \bfilter]
  Let $\Pi = (\Mac_k, \Ver_k)$ be a \BU-secure MAC with
  $\Mac_k : X \to Y$, and let $h: Z \to X$ a \bfilter. Then $\Pi^h$ is
  a \BU-secure MAC.
\end{theorem}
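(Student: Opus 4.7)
The plan is to reduce the blind-unforgeability of $\Pi^h$ to that of $\Pi$ in two steps, using the \bfilter property of $h$ as a bridge. Suppose $(\algo A, \epsilon)$ wins the blind forgery experiment for $\Pi^h$ with non-negligible probability $p$. In that experiment, the blinding set is drawn directly from $\mathcal B_\epsilon$ on $Z$. I will first switch to the distribution $\mathcal B_\epsilon^{\mathcal H}$ (where $B = \{m : h(m) \in C\}$ for $C \from \mathcal B_\epsilon$ on $X$), and then reduce the resulting experiment to a forgery against $\Pi$.

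For the switch, I build a distinguisher $\algo D$ for the \bfilter property. Given quantum oracle access to $\chi_B$, $\algo D$ samples $k \from \KeyGen(1^n)$ and simulates $\algo A^{B \Mac_k^h}(1^n)$. Each oracle call of $\algo A$ is implemented by (i) computing $\chi_B(m)$ into an ancilla (one query to $\chi_B$), (ii) computing $h(m)$ and then $\Mac_k(h(m))$ into further ancillae using the public hash and the locally known key, (iii) conditioned on $\chi_B(m)$, XORing either the blinding symbol $\bot$ or $\Mac_k(h(m))\|0$ into $\algo A$'s output register, and (iv) uncomputing all ancillae in reverse (one more query to $\chi_B$). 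When $\algo A$ outputs $(m,t)$, $\algo D$ makes one classical query to $\chi_B(m)$ and outputs $1$ iff $\chi_B(m)=1$ and $\Ver_k^h(m,t)=\acc$. Under $B \from \mathcal B_\epsilon$, this is exactly the $\blindforge$ win event for $(\algo A, \epsilon)$ on $\Pi^h$, occurring with probability $p$. By the \bfilter property, the same event under $B \from \mathcal B_\epsilon^{\mathcal H}$ occurs with probability at least $p - \negl(n)$.

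For the reduction, I construct an adversary $\algo B$ against $\Pi$ with blinding parameter $\epsilon$. Given quantum oracle access to $C_\epsilon \Mac_k$ on $X$, $\algo B$ simulates the oracle $B_\epsilon^h \Mac_k^h$ on $Z$ by composing with $h$: on query $\ket{m}\ket{y}$, $\algo B$ computes $h(m)$ into an ancilla $x$, queries its own oracle on $x$ with output register $y$, then uncomputes $x$. Since $h(m) \in C_\epsilon \iff m \in B_\epsilon^h$ and otherwise the oracle returns $\Mac_k(h(m)) = \Mac_k^h(m)$, this exactly implements the oracle seen by $\algo A$ in the pullback experiment. When $\algo A$ outputs $(m,t)$, $\algo B$ outputs $(h(m), t)$. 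Whenever $\algo A$ wins, we have $\Ver_k(h(m), t) = \Ver_k^h(m,t) = \acc$ and $h(m) \in C_\epsilon$, so $\algo B$ also wins, with success probability $\geq p - \negl(n)$. This contradicts the \BU security of $\Pi$, completing the argument.

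The main obstacle is ensuring the quantum oracle simulations in $\algo D$ and $\algo B$ are clean: all ancillae used to hold $\chi_B(m)$, $h(m)$, and $\Mac_k(h(m))$ must be uncomputed so that the simulated oracle acts as the intended unitary on $\algo A$'s state. This is routine given the blinding-symbol convention $\bot = 0^\ell\|1$ established earlier, which lets the blinded output be computed and erased without disturbing $\algo A$'s registers. A mild technicality is that the \bfilter definition only grants $\algo D$ oracle access to $\chi_B$, so all other dependencies must be local; this is fine because $h$ is public and $k$ is sampled by $\algo D$ itself.
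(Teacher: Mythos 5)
Your proposal is correct and follows essentially the same route as the paper's proof: the paper likewise builds the composed adversary $\algo A_0$ (your $\algo B$) that answers $\algo A$'s queries with $\Mac_k\circ h$ and outputs $(h(m),t)$, and bounds the gap between $\blindforge_{\algo A,\Pi^h}$ and $\blindforge_{\algo A_0,\Pi}$ via exactly your distinguisher $\algo D$ against the Bernoulli-preserving property, using the identity $\chi_{C_\epsilon}(h(m))=\chi_{B_\epsilon^h}(m)$. Splitting the argument into an explicit ``switch the blinding distribution, then reduce'' two-step is only a presentational difference.
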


\begin{proof}
Let $\algo A$ be an adversary against $\Pi^h$. We build an adversary $\algo A_0$ against $\Pi$ which (given oracle $f: X \to Y$) runs $\algo A$ and answers its queries with $f \circ h$, i.e., $\ket{m}\ket{t} \mapsto \ket{m}\ket{t \oplus f(h(m))}$. This can be implemented by first computing $h$ into an extra register, then invoking the oracle, and then uncomputing $h$. When $\algo A$ produces its final output $(m, t)$, $\algo A_0$ outputs $(h(m), t)$ and terminates. We claim that 
\[
\left|\Pr[\blindforge_{\algo A, \Pi^h}(n, \epsilon) = 1] - \Pr[\blindforge_{\algo A_0, \Pi}(n, \epsilon) = 1]\right| \leq \negl\,.
\]
Since the right-hand-side of the difference above is negligible by \BU-security of $\Pi$, establishing the claim will finish the proof.

We prove the claim by showing that the difference can be viewed as the success probability of a distinguisher $\algo D$ against the Bernoulli-preserving property of $h$. The distinguisher $\algo D$ receives an oracle for $\chi_B$ (where $B \subseteq Z$ is sampled according to either $\mathcal B_\epsilon$ or $\mathcal B^h_\epsilon$) and proceeds as follows:
\begin{enumerate}
\item generate a key $k$ for $\Pi$;
\item run $\algo A$, answering its oracle queries with
\[
\ket{m}\ket{t} 
\mapsto \ket{m}\ket{t}\ket{\chi_B(m)}\ket{\Mac_k(h(m))}\\
\mapsto \ket{m}\ket{t \oplus \chi_B(m) \cdot \Mac_k(h(m))}\ket{\chi_B(m)}
\]
where we invoked the oracle in the first step and CCNOT in the second;
\item when $\algo A$ outputs $(m, t)$, compute $b = \Ver^h_k(m, t) = \Ver_k(h(m), t)$. Query the oracle to compute $b' = \chi_B(m)$, and output $b \wedge b'$.
\end{enumerate}
It now remains to check that (i.)
if $B$ was sampled according to $\mathcal B_\epsilon$ (i.e., uniform blinding), then $\algo D$ is simulating the game $\blindforge_{\algo A, \Pi^h}(n, \epsilon)$, and (ii.) If $B$ was sampled according to $\mathcal B_\epsilon^h$ (i.e., hash-blinding), then $\algo D$ is simulating the game $\blindforge_{\algo A_0, \Pi}(n, \epsilon)$. Fact (i.) follows directly from the definition\footnote{Note that we have again used the convention that the blinding symbol $\bot$ is the string $0\dots01$; in our case, the final bit corresponds to the register containing $\chi_B(m)$. If one chooses a different convention, it may be necessary to adjust $\algo D$ to uncompute that register with an extra call to the oracle.} of the $\blindforge$ game.  To see fact (ii.), observe that the $\blindforge$ game against $\algo A_0$ samples a uniform blinding set $C_\epsilon \subseteq X$ and executes algorithm $\algo A$ with oracle 
$$
m \longmapsto \chi_{C_\epsilon}(h(m)) \cdot \Mac_k(h(m)) = \chi_{B_\epsilon^h}(m) \cdot \Mac_k(h(m))\,,
$$
precisely as in the execution of $\algo A$ by $\algo D$.
\end{proof}

In the next section, we provide a number of additional results about
\bfilter functions.

\section{Properties of \bfilter functions}
We explore the notion of \bfilter functions. These results can be
summarized as follows.
\begin{itemize}
\item If $H$ is a random oracle or a \qPRF, then it is a \bfilter.
\item If $H$ is $4q$-wise independent, then it is a \bfilter against $q$-query adversaries.
\item Under the \LWE assumption, there is a (public-key) family of \bfilter functions.
\item If we only allow classical oracle access, then the Bernoulli-preserving property is  equivalent to standard collision-resistance.
\item \bfilter functions are \emph{collapsing} (another quantum generalization of collision-resistance proposed in~\cite{Unruh16a}).
\end{itemize}

First, we show that random and pseudorandom functions are Bernoulli-preserving, and that this property is equivalent to collision-resistance against classical queries.
\begin{lemma}
Let $H : X \to Y$ be a function such that $1/|Y|$ is negligible. Then:
\begin{enumerate}
\item If $H$ is a random oracle or a \qPRF, then it is a \bfilter.
\item If $H$ is $4q$-wise independent, then it is a \bfilter against $q$-query adversaries.
\end{enumerate}
\end{lemma}
\begin{proof}
The claim for random oracles is obvious: by statistical collision-resistance, uniform blinding is statistically indistinguishable from hash-blinding. The remaining claims follow from the observation that one can simulate one quantum query to $\chi_{B_\epsilon^h}$ using two quantum queries to $h$ (see, e.g., the proof of \expref{Corollary}{cor:qPRF-MAC}).
 \end{proof}

\begin{theorem}
A function $h: \bits^*\to \bits^n$ is Bernoulli-preserving against classical-query adversaries if and only if it is collision-resistant.
\end{theorem}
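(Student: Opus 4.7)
I would prove the two directions separately, with the ``only if'' direction being an immediate use of a collision to distinguish, and the ``if'' direction being a coupling argument that converts any distinguishing advantage into a collision-finder.

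For the ``only if'' direction, suppose $h$ is not collision-resistant, so there is an efficient classical adversary $\algo C$ that, with noticeable probability $\gamma$, outputs a pair $(x,x')$ with $x\neq x'$ and $h(x)=h(x')$. I would build a classical-query Bernoulli-preserving distinguisher $(\algo A,\epsilon)$ with $\epsilon=1/2$: run $\algo C$ to obtain $(x,x')$, query the oracle $\chi_B$ at both points, and output $1$ iff the two answers agree. In the hash-blinded experiment the two answers are always equal (both are $\chi_C(h(x))$), while in the uniform experiment they are independent Bernoulli$(1/2)$ bits and agree with probability exactly $1/2$. The distinguishing advantage is at least $\gamma/2$, contradicting the Bernoulli-preserving property.

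For the ``if'' direction, fix any classical-query adversary $(\algo A,\epsilon)$ making $q=\poly(n)$ queries $x_1,\ldots,x_q$, where queries may be adaptive. In the hash-blinded experiment, the $i$-th response is $\chi_C(h(x_i))$ for a random $C\subseteq\bits^n$ with inclusion probability $\epsilon$. Let $E$ be the event that the hash values $h(x_1),\ldots,h(x_q)$ are pairwise distinct. Conditioned on $E$, the bits $\chi_C(h(x_1)),\ldots,\chi_C(h(x_q))$ are i.i.d.\ Bernoulli$(\epsilon)$, which is exactly the joint distribution of oracle responses in the uniform-blinding experiment. Therefore the two experiments can be perfectly coupled on $E$, so the distinguishing advantage is at most $\Pr[\bar E]$.

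To finish, I would reduce $\Pr[\bar E]$ to collision-finding: build $\algo C'$ that, on key/seed for $h$, simulates the hash-blinded experiment for $\algo A$ (sampling $C$ lazily, which can be done on the fly against a classical adversary), records all queries, and outputs any distinct pair $(x_i,x_j)$ with $h(x_i)=h(x_j)$ if one exists. Then $\algo C'$ succeeds with probability exactly $\Pr[\bar E]$, so collision-resistance forces $\Pr[\bar E]=\negl$, hence the distinguishing advantage of $(\algo A,\epsilon)$ is negligible. The main (minor) technical point is handling adaptivity in the coupling: since queries are classical, one can expose $h(x_i)$ right after $\algo A$ chooses $x_i$, and the conditional-on-$E$ equidistribution argument goes through one query at a time by induction on $i$.
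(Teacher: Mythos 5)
Your proposal is correct and follows essentially the same route as the paper: the first direction builds a $\epsilon=1/2$ distinguisher that tests whether the blinding bits at two colliding inputs agree, and the second direction observes that unless two queries collide under $h$, the uniform-blinded and hash-blinded oracles are identically distributed, so any distinguishing advantage yields a collision among the queries. Your write-up is somewhat more careful than the paper's (explicit coupling, lazy sampling, and the adaptivity remark), but the underlying argument is the same.
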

\begin{proof}
  First, the \bfilter property implies collision-resistance: testing whether two colliding inputs are either (i.) both not blinded or both blinded, or (ii.) exactly one of them is blinded, yields always outcome (i.) when dealing with a hash-blinded oracle and a uniformly random outcome for a blinded oracle and $\varepsilon=1/2$. On the other hand, consider an adversary $\adver$ that has inverse polynomial distinguishing advantage between blinding and hash-blinding, and let $x_1,...,x_q$ be it's queries. Assume for contradiction that with overwhelming probability $h(x_i)\neq h(x_j)$ for all $x_i\neq x_j$. Then with that same overwhelming probability the blinded and hash blinded oracles are both blinded independently with probability $\varepsilon$ on each $x_i$ and are hence statistically indistinguishable, a contradiction. It follows that with non-negligible probability there exist two queries $x_i\neq x_j$ such that $h(x_i)=h(x_j)$, i.e., $\adver$ has found a collision.
\end{proof}

\subsection{A \bfilter from LWE}\label{sec:lwe}
We have observed that any \qPRF is a \bfilter function,
which can be constructed from various quantum-safe computational
assumpiton (e.g., LWE). Nonetheless, \qPRF typically does not give
short digest, which would result in long tags, and it requires a secret key.\footnote{In practice, it
  is probably more convenient (and more reliable) to instantiate a
  \qPRF from block ciphers, which may not be ideal for message
  authentication.} 

Here we point out an alternative construction of a public \bfilter
function based on the quantum security of LWE. In fact, we show that
the collapsing hash function by Unruh~\cite{UnruhAC16} is also
\bfilter. This constructions relies on a lossy function family
$F: X\to Y$ and a universal hash function
$ G = \{g_k: Y \to Z \}_{k\in \mathcal{K}}$. A lossy function family
admits two types of keys: a lossy key
$s\gets \mathcal{D}_{\mathrm{los}}$ and an injective key
$s\gets \mathcal{D}_{inj}$, which are computationally
indistinguishable. $F_{s}: X \to Y$ under a lossy key $s$ is
compressing, i.e., $|\operatorname{im}(F_s)| \ll |Y|$; whereas under
an injective key $s$, $F_s$ is injective. We refer a formal definition
to~\cite[Definition 2]{UnruhAC16}, and an explicit construction based
on LWE to~\cite{PW08}. We will also use exist efficient constructions
for universal hash families~\cite{Vad12}. Then one constructs a hash
funciton family $H = \{h_{s,k}\}$ by $ h_{s,k} := g_k\circ F_s$ with
public parameters generated by
$s\gets \mathcal{D}_{\mathrm{los}}, k \gets \mathcal{K}$.

The proof of Bernoulli-preserving for this hash function is similar to Unruh's proof that $H$ is collapsing. We begin with a lemma.

\begin{lemma} Any injective function $f$ is \bfilter. Given any
  \bfilter $f : X\to Y$ and $g: Y\to Z$ that is \bfilter on $\operatorname{im}(f)$,
  then $h = g\circ f$ is also \bfilter.
  \label{lemma:bfilter}
\end{lemma}
\begin{proof} The first part follows by observing that a
  $\veps$-random subset in the codomain corresponds exactly to a
  $\veps$-random subset in the domain under inverse of the function.
  Let $\calO \approx \calO'$ denote that two oracles $\calO$ and
  $\calO$ are indistinguishable by any quantum poly-time algorithm.
  For the second part, we need to show that
  $\chi_{C: C \gets_\veps X} \approx \chi_{C: C=h^{-1}(C_Z),
    C_Z\gets_\veps Z}$, where $\gets_\veps $ indicates sampling a
  random subset of fraction $\veps$. Since $f$ is \bfilter, we have
  that
  \begin{equation*}
    \chi_{C: C \gets_\veps X} \approx \chi_{C: C=f^{-1}(C_Y),
      C_Y\gets_\veps Y} \equiv  \chi_{C: C=f^{-1}(C'_Y), C'_Y
      \gets_\veps \operatorname{im}(f)}\, .
  \end{equation*}
  The second equivalence holds by observing that for any
  $C_Y\subseteq Y$, $f^{-1}(C_Y) = f^{-1}(C_Y\cap \operatorname{im}(f))$. Then
  because $g$ is Bernoulli-preserving on $\operatorname{im}(f)$,
  \begin{equation*}
    \chi_{C'_Y: C'_Y  \gets_\veps \operatorname{im}(f)} \approx     \chi_{C'_Y: C'_Y =
      g^{-1}(C_Z), C_Z \gets_\veps Z}  \, .
  \end{equation*}
  Therefore, we conclude that
  \begin{equation*}
    \chi_{C: C \gets_\veps X} \approx \chi_{C:
      C=f^{-1}(g^{-1}(C_Z)), C_z\gets_\veps Z} = \chi_{C:
      C= h^{-1}(C_Z), C_z\gets_\veps Z} \, . \qedhere\qquad
  \end{equation*}
\end{proof}

\begin{theorem} $H$ is \bfilter if LWE holds against any efficient
  quantum distinguisher.
  \label{thm:lwebp}
\end{theorem}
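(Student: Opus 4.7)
The plan is a three-hybrid argument exploiting the lossy-versus-injective indistinguishability of $F$ under LWE, followed by a universal-hashing analysis. Define $\mathsf{H}_0$ to be the real hash-blinded oracle (with $s \from \mathcal{D}_{los}$, $k \from \mathcal{K}$, $C_\epsilon \subseteq Z$ drawn as a Bernoulli-$\epsilon$ subset, and blinding set $B_\epsilon^h = h^{-1}(C_\epsilon)$ for $h = g_k \circ F_s$); let $\mathsf{H}_1$ be the same oracle but with $s \from \mathcal{D}_{inj}$; and let $\mathsf{H}_2$ be the uniform blinding oracle with $B_\epsilon \subseteq X$.

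First I would show $\mathsf{H}_0 \approx \mathsf{H}_1$ by a direct reduction to the LWE-based key-indistinguishability of the lossy function family: a quantum distinguisher, on input a lossy-versus-injective challenge key $s$, can sample $k$ itself and represent $C_\epsilon$ compactly via a sufficiently-independent hash function (as in the proof of \expref{Corollary}{cor:qPRF-MAC}), and thereby simulate the corresponding hash-blinded oracle quantumly. Any distinguisher between $\mathsf{H}_0$ and $\mathsf{H}_1$ then yields a distinguisher between $\mathcal{D}_{los}$ and $\mathcal{D}_{inj}$.

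Next I would establish $\mathsf{H}_1 \approx \mathsf{H}_2$ via the universal-hash property of $g_k$. With $s \from \mathcal{D}_{inj}$, $F_s$ is injective, so for distinct $x, x' \in X$ we have $F_s(x) \neq F_s(x')$, whence $\Pr_k[g_k(F_s(x)) = g_k(F_s(x'))] \leq 1/|Z|$. This pairwise collision bound controls the correlation of the blinding pattern: the joint blinding probability on any pair of points differs from the independent-Bernoulli value $\epsilon^2$ by only $O(\epsilon/|Z|)$. Choosing $|Z|$ superpolynomial and extending this pairwise bound to low-order correlations shows that the restriction of $\chi_{B_\epsilon^h}$ to any polynomial-size classical query set is statistically indistinguishable from an i.i.d. $\epsilon$-Bernoulli function on that set.

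The main obstacle is lifting this last step from classical query supports to the quantum oracle setting, since the adversary's queries are superpositions over $X$ rather than a fixed classical set. I anticipate handling this via the compressed-oracle approach described in \expref{Section}{sec:superoracles}: represent both the hash-blinded and the uniform-blinding oracles by lazy-sampled simulators, and show that their internal states remain within negligible trace distance throughout a $\poly(n)$-query execution, with the universal-hash collision bound controlling per-query correction terms that are then summed by a hybrid argument in the spirit of \expref{Theorem}{thm:blinded-algo}.
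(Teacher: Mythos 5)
Your first hybrid ($\mathsf{H}_0 \approx \mathsf{H}_1$, switching the lossy key to an injective one by LWE) is fine, but it points you in the wrong direction, and the second step is where the proposal breaks down. Having made $F_s$ injective, the composed hash $h = g_k\circ F_s$ maps the exponentially large domain $X$ injectively into $Y$ and then compresses via $g_k$ into $Z$, so $h$ necessarily has exponentially many collisions, and the pullback $B_\epsilon^h$ is \emph{not} globally close to a Bernoulli-$\epsilon$ subset of $X$. Your plan to salvage this rests on two unproved claims. First, pairwise universality of $g_k$ only bounds two-point correlations; "extending this pairwise bound to low-order correlations" does not follow from $2$-universality, and in this construction the parameters $k$ and $s$ are \emph{public} (this is a public-key hash), so an adversary is not restricted to exploiting only the collision statistics --- it may be able to compute colliding pairs of $h$ directly from $k$ and thereby distinguish $\mathsf{H}_1$ from $\mathsf{H}_2$ outright, exactly as in the classical-query characterization (a \bfilter must in particular be collision-resistant). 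Second, the quantum lifting you defer to a compressed-oracle argument is the entire difficulty: $\chi_{B_\epsilon^h}$ for fixed public $(k,s)$ and random $C_\epsilon$ is not a uniformly random function, so \expref{Section}{sec:superoracles} does not apply off the shelf, and the bound of \expref{Theorem}{thm:blinded-algo} is $2T\sqrt{\epsilon}$, which is vacuous for the constant values of $\epsilon$ (e.g.\ $\epsilon = 1/2$) that the \bfilter definition must cover.

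The paper's proof avoids all of this by never leaving the lossy mode for the analysis of $g_k$. It first observes that any \emph{injective} function is perfectly \bfilter (the pullback of a Bernoulli process along an injection is again a Bernoulli process --- an information-theoretic statement with no query model to worry about), so $F_s$ with an injective key is \bfilter, and hence so is $F_s$ with a lossy key, by key indistinguishability. Then, crucially, for a \emph{lossy} key the image $\mathrm{im}(F_s)$ is so small that the universal hash $g_k$ is globally injective on it with overwhelming probability over $k$; thus $g_k$ is \bfilter on $\mathrm{im}(F_s)$ again by the injectivity observation, and a simple composition lemma (\expref{Lemma}{lemma:bfilter}) finishes the proof. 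The lossiness is what buys global injectivity of $g_k$ on the relevant domain and eliminates the collision structure your hybrid $\mathsf{H}_1$ is forced to grapple with; I would restructure your argument along those lines rather than trying to make the compressed-oracle step work.
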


\begin{proof} We proceed in three steps (with the help of
  \expref{Lemma}{lemma:bfilter} above):

  \begin{enumerate}[label=\arabic*)]
  \item Since $F_s$ is injective under an injective key, it is clearly
    \bfilter. As a result, $F_s, s\gets \mathcal{D}_{\mathrm{los}}$ must be
    \bfilter too, because a lossy key is indistinguishable from an
    injective key by definition.
  \item Then $g_k$ is chosen properly so that it is injective when
    restricted to $\operatorname{im}(F_s)$ of lossy key $s$. Therefore $g_k$ is
    \bfilter too.
  \item Finally, $H_{k,s}$ is \bfilter by the composition of \bfilter
    functions $g_k$ and $F_s$. 
  \end{enumerate}
\end{proof}

\subsection{Relationship to collapsing}

Here we relate \bfilter to the collapsing property, which is another
quantum generalization of classical collision-resistance. We first
describe the collapsing property (slightly adapting Unruh's original
definition~\cite{Unruh16a}) as follows. Let $h: X \to Y$ be a hash
function, and let $\mathcal S_X$ and $\mathcal S_{XY}$ be the set of
quantum states (i.e., density operators) on registers corresponding to
the sets $X$ and $X \times Y$, respectively. We define two channels
from $\mathcal S_X$ to $\mathcal S_{XY}$. First, $\calO_h$ receives
$X$, prepares $\ket{0}$ on $Y$, applies
$\ket{x}\ket{y} \mapsto \ket{x}\ket{y \oplus h(x)}$, and then measures
$Y$ fully in the computational basis. Second, $\calO'_h$ first applies
$\calO_h$ and then also measures $X$ fully in the computational basis.
\begin{align*}
  \calO_h: \quad & \ket{x}_{X} \overset{h}{\longmapsto} \ket{x, h(x)}_{X,Y} \overset{\text{~~measure } Y\text{~~~}}{\longmapsto}
                   (\rho_{X}^y, y)\, , \\
  \calO'_h: \quad & \ket{x}_{X} \overset{h}{\longmapsto}  \ket{x, h(x)}_{X,Y}
                    \overset{\text{measure } X\& Y}{\longmapsto} (x, y)\, .
\end{align*}
If the input is a pure state on $X$, then the output is either a superposition over a fiber $h^{-1}(s) \times \{s\}$ of $h$ (for $\calO_h$) or a classical pair $(x, h(x))$ (for $\calO_h'$) . 

\begin{definition}[Collapsing] A hash function $h$ is collapsing if for any single-query QPT $\adver$, it holds that
$    
\bigl|\Pr[\adver^{\calO_h}(1^n) =1 ]- \Pr[\adver^{\calO_h'}(1^n) = 1]
    \bigr| \leq \negl \, .
$
\label{def:collapsingalt}  
\end{definition}

To prove that \bfilter implies collapsing, we need a technical fact. Recall that any subset $S \subseteq \bits^n$ is associated with a two-outcome projective measurement $\{\Pi_S, \one - \Pi_S\}$ on $n$ qubits defined by $\Pi_S = \sum_{x \in S} \outerprod{x}{x}$. We will write $\Xi_S$ for the channel (on $n$ qubits) which applies this measurement. 

\begin{lemma}\label{lemma:partial-measure}
  Let $S_1, S_2, \dots, S_{cn}$ be subsets of $\bits^n$, each of size
  $2^{n-1}$, chosen independently and uniformly at
  random. 
  Let $\Xi_{S_j}$ denote the two-outcome measurement defined by $S_j$,
  and denote their composition
  $\tilde \Xi : = \Xi_{S_{cn}} \circ \Xi_{S_{cn-1}} \circ \cdots \circ
  \Xi_{S_1}$. Let $\Xi$ denote the full measurement in the
  computational basis. Then\\
$
\Pr\bigl[\tilde \Xi = \Xi \bigr] \geq 1 - 2^{-\veps n} \, ,
$
whenever $c \geq 2 + \veps$ with $\veps > 0$,
\end{lemma}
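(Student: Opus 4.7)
The plan is to reduce the question of channel equality to a purely combinatorial condition on the random subsets, and then apply a counting argument with a union bound. Since every $\Xi_{S_j}$ is a two-outcome projective measurement whose projectors $\Pi_{S_j}$ and $\one - \Pi_{S_j}$ are diagonal in the computational basis, all of these measurements commute, and the composition $\tilde\Xi$ is equivalent to a single PVM indexed by $\vec b \in \bits^{cn}$ with projectors $\Pi_{T(\vec b)}$, where $T(\vec b) := \bigcap_{j=1}^{cn} S_j^{b_j}$ and we write $S_j^1 = S_j$, $S_j^0 = \bits^n \setminus S_j$. The channel $\tilde\Xi$ coincides with $\Xi$ as a CPTP map if and only if every nonempty $T(\vec b)$ is a singleton, which is in turn equivalent to the following combinatorial condition: for every pair of distinct elements $x \neq x'$ in $\bits^n$, there exists some index $j \in \{1,\ldots,cn\}$ with $|S_j \cap \{x, x'\}| = 1$---that is, some $S_j$ \emph{separates} the pair.

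The remaining steps are to bound the probability that this separation condition fails. For a single uniformly random subset $S_j$ of size exactly $2^{n-1}$ and a fixed pair $\{x, x'\}$, a direct count gives
\[
    \Pr[S_j \text{ separates } \{x, x'\}] \;=\; \frac{2 \binom{2^n - 2}{2^{n-1} - 1}}{\binom{2^n}{2^{n-1}}} \;=\; \frac{2^{n-1}}{2^n - 1} \;>\; \tfrac{1}{2}\,,
\]
so the probability of non-separation is strictly less than $1/2$. By the independence of $S_1, \ldots, S_{cn}$, the probability that no $S_j$ separates a given pair is less than $2^{-cn}$. A union bound over all $\binom{2^n}{2} < 2^{2n-1}$ unordered pairs then yields
\[
    \Pr[\tilde\Xi \neq \Xi] \;<\; 2^{2n-1} \cdot 2^{-cn} \;=\; 2^{(2-c)n - 1} \;\leq\; 2^{-\veps n}
\]
whenever $c \geq 2 + \veps$, which is the stated bound.

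The main conceptual step is the first one---recognizing that channel equality reduces to pairwise separation---after which the argument is just a routine counting estimate combined with a union bound. I do not expect any substantial obstacle; the only place that requires minor care is verifying that the sequentially-applied commuting two-outcome measurements really do compose to the PVM with projectors $\Pi_{T(\vec b)}$, but this follows immediately from the fact that products of commuting projectors are themselves projectors and that $\sum_{\vec b} \Pi_{T(\vec b)} = \one$.
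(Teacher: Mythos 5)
Your proof is correct and follows essentially the same route as the paper's: reduce channel equality to the condition that every pair $x\neq x'$ is separated by some $S_j$, bound the per-set non-separation probability by $1/2$, and finish with independence and a union bound over $\binom{2^n}{2}$ pairs. Your exact computation of the separation probability $\tfrac{2^{n-1}}{2^n-1}>\tfrac12$ for subsets of size exactly $2^{n-1}$ is in fact slightly more careful than the paper, which simply asserts the non-separation probability equals $1/2$.
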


\begin{proof} We give a combinatorial proof. Consider an arbitrary
  mixed state of density matrix
  $\rho = (\rho_{x,y})_{x,y \in \bits^n}$, the full measurement $\Xi$
  on $\rho$ gives

  \begin{equation*}
    \Xi(\rho) = \sum_{x \in \bits^n} \ketbra{x}{x} \rho \ketbra{x}{x}
    = \sum_{x\in \bits^n} \rho_{x,x} \ketbra{x}{x}\,.
  \end{equation*}
  Given a set $S \subseteq \bits^n$, the projective measurement
  $\Xi_S$ on $\rho$ operates as
  \begin{align*}
    \Xi_S (\rho) & = \sum_{x,y\in S}\ketbra{x}{x} \rho \ketbra{y}{y}
                   +  \sum_{x,y\notin S}\ketbra{x}{x} \rho \ketbra{y}{y} \\
                 & =
                   \sum_{x,y \in S} \rho_{x,y} \ketbra{x}{y} +
                   \sum_{x,y \notin S} \rho_{x,y} \ketbra{x}{y} \, .
  \end{align*}
  Namely, $\Xi_S$ will zero-out the entries $\rho_{x,y}$ in $\rho$,
  where $(x\in S, y\notin S)$ or $(x\notin S, y \in S)$. It is easy to
  verify that the same effect occurs when $\Xi$ and $\Xi_S$ are
  applied to a subsystem of a bipartite state.

  Now, for any $c = 2+\veps$ with $\veps>0$, consider sampling
  $S_1,S_2, \ldots, S_{cn}$ independently at random, each of size
  $2^{n-1}$, and define a few random events:
  \begin{align*}
    & E^i_{x,y}:  x\in S_i \wedge y \in S_i, \text{ or } x\notin S_i
      \wedge y \notin S_i \, ; \\
    & E_{x,y}: \forall i \in \{1, \ldots, cn\} \text{ s.t. } E^i_{x,y}
      \, ; \\
    & \bad: \exists x,y \in \bits^n, x\neq y \text{ s.t. } E_{x,y} \, .
  \end{align*}
  Observe that if $\bad$ does not occur, it implies that for any
  $x\neq y$, the off-diagonal entry $\rho_{x,y}$ is eliminated by one
  of $\Xi_{S_i}$, and as a result
  $\tilde \Xi = \Xi_{S_{cn}} \circ \ldots \circ \Xi_{S_1}$ will be
  identical to $\Xi$.

  Fix a pair $(x,y)$ with $x\neq y$, clearly $\Pr[E^i_{x,y}] =
  1/2$. Since each $S_i$ is chosen independently,
  \begin{equation*}
    \Pr[E_{x,y}] = \Pi_i \Pr[E^{i}_{x,y}] = 1/{2^{cn}} \, .
  \end{equation*}
  By the union bound,
  \begin{equation*}
    \Pr[\bad] \leq \binom{2^n}{2} \cdot \Pr[E_{x,y}] \leq
    2^{2n}/{2^{cn}} = 2^{-\veps n} \, .
  \end{equation*}
  Therefore we conclude that
  \begin{equation*}
    \Pr[\tilde \Xi = \Xi] \geq \Pr[\tilde \Xi = \Xi \mid \overline{\bad}]
    \cdot \Pr[\overline{\bad}] \geq 1 - 2^{-\veps n}\, . \qedhere
  \end{equation*}
\end{proof}

We remark that to efficiently implement each $\Xi_S$ with a random
subset $S$, we can sample $h_i: [M] \to [N]$ from a
pairwise-independent hash family (sampling an independent $h_i$ for
each $i$), and then define $x\in S$ iff $h(x) \leq N/2$. For any
input state $\sum_{x,z}\alpha_{x,z} \ket{x,z}$, we can compute
\[ 
\sum_{x,z}\alpha_{x,z} \ket{x,z} 
\mapsto \sum_{x,z} \ket{x,z} \ket{b(x)}, \quad \text{ where } b(x) : = h(x) \overset{?}{\leq}
  N/2\,,
\] 
and then measure $\ket{b(x)}$. Pairwise independence is sufficient by \expref{Theorem}{thm:k-wise} because only one quantum query is made.

\begin{theorem}
If $h: X \to Y$ is Bernoulli-preserving, then it is collapsing. 
\label{thm:bf2collapsing}  
\end{theorem}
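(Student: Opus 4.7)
The strategy is to interpolate between $\calO_h$ and $\calO'_h$ via a sequence of random half-space measurements on the $X$-register, using \expref{Lemma}{lemma:partial-measure} together with the Bernoulli-preserving hypothesis. Recall that $\calO_h$ outputs a state of the form $\sum_y p_y\, \sigma_y \otimes \outerprod{y}{y}$ with each $\sigma_y$ supported on the fiber $h^{-1}(y)$, while $\calO'_h$ outputs $\sum_y p_y\, \Xi(\sigma_y) \otimes \outerprod{y}{y}$ with $\Xi$ the full computational-basis measurement on $X$. Choosing $c > 2$, \expref{Lemma}{lemma:partial-measure} implies that $\Xi$ is $2^{-\Omega(n)}$-close to $\tilde\Xi^u := \Xi_{S_{cn}} \circ \cdots \circ \Xi_{S_1}$, where the $S_j \subseteq X$ are independently drawn from $\mathcal B_{1/2}$.

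The key observation is that the ``dual'' sequence $\tilde\Xi^h := \Xi_{h^{-1}(C_{cn})} \circ \cdots \circ \Xi_{h^{-1}(C_1)}$, for independent $C_j$ drawn from $\mathcal B_{1/2}$ on $Y$, acts as the identity on each $\sigma_y$: the bit $\chi_{h^{-1}(C_j)}(x) = \chi_{C_j}(h(x)) = \chi_{C_j}(y)$ is constant on the fiber $h^{-1}(y)$, so coherently computing it into an ancilla and measuring that ancilla just produces a deterministic outcome and leaves $\sigma_y$ untouched. Consequently, applying $\tilde\Xi^h$ to the $\calO_h$ output yields exactly the $\calO_h$ output.

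Now define hybrids $H_0, \ldots, H_{cn}$, where $H_k$ applies to the $\calO_h$ output the measurements $\Xi_{h^{-1}(C_1)}, \ldots, \Xi_{h^{-1}(C_k)}, \Xi_{S_{k+1}}, \ldots, \Xi_{S_{cn}}$ in that order (with all ancillas discarded). Then $H_{cn}$ exactly equals the $\calO_h$ output, and $H_0$ is $2^{-\Omega(n)}$-close to the $\calO'_h$ output. Given a putative collapsing distinguisher $\adver$ with advantage $\nu$, build a Bernoulli-preserving distinguisher $\advb$ that samples $k \in \{0, \ldots, cn-1\}$ uniformly, internally samples the $C_j$ for $j \leq k$ and the $S_j$ for $j \geq k+2$ via pairwise-independent hashes, simulates the entire experiment except for the $(k+1)$-th measurement, and implements that single measurement by one coherent query to the challenge oracle $\chi_B$ followed by measurement of the answer ancilla. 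When $B \leftarrow \mathcal B_{1/2}$ the resulting view is $H_k$; when $B \leftarrow \mathcal B_{1/2}^{\mathcal H}$ it is $H_{k+1}$. Averaging over $k$ via the triangle inequality shows that $\advb$'s distinguishing advantage is at least $(\nu - 2^{-\Omega(n)})/(cn)$, which must be negligible by the Bernoulli-preserving hypothesis; hence $\nu$ is negligible, as required.

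The main bookkeeping point is verifying that $\advb$ lies within the allowed class of adversaries: it makes only one quantum query to $\chi_B$, simulates the other $cn-1$ measurements itself, and is efficient because $h$ is efficient and the pairwise-independent hashes used to sample each $C_j$ and $S_j$ admit efficient coherent evaluation (invoking \expref{Theorem}{thm:k-wise} to argue that a single quantum query to each such pairwise-independent $\chi$ is identically distributed to one against a perfectly random subset). Beyond this, the argument is a clean combination of two ingredients: \expref{Lemma}{lemma:partial-measure} trades the one strong measurement defining $\calO'_h$ for a long sequence of weak ones, and the Bernoulli-preserving property lets us swap each weak measurement for one that is trivial on the fibers of $h$, thereby closing the gap back to $\calO_h$.
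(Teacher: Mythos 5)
Your proof is correct and follows essentially the same route as the paper's: \expref{Lemma}{lemma:partial-measure} to replace the full measurement in $\calO'_h$ by $cn$ random two-outcome measurements, the observation that hash-pullback measurements act trivially on the fibers $h^{-1}(y)$ after $Y$ has been measured, and a hybrid argument yielding a one-query distinguisher against the Bernoulli-preserving property. The only difference is cosmetic: you interpolate by swapping uniform measurements for hash-pullback ones one position at a time, whereas the paper's hybrids simply accumulate uniform measurements and use the challenge oracle to append one more; both reductions are otherwise identical.
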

\begin{proof}
Let $\algo A$ be an adversary with inverse-polynomial distinguishing power in the collapsing game. Choose $n$ such that $X = \bits^n$. We define $k = cn$ hybrid oracles $H_0, H_1, \dots, H_k$, where hybrid $H_j$ is a channel from $\mathcal S_X$ to $\mathcal S_{XY}$ which acts as follows:
(1.) adjoin $\ket{0}_Y$ and apply the unitary $\ket{x}_X\ket{y}_Y \mapsto \ket{x}_X\ket{y \oplus h(x)}_Y$; (2.) measure the $Y$ register in the computational basis; (3.) repeat $j$ times: (i.) select a uniformly random subset $S \subseteq X$ of size $2^{n-1}$; (ii.) apply the two-outcome measurement $\Xi_S$ to the $X$ register; (4.) output registers $X$ and $Y$.

Clearly, $H_0$ is identical to the $\calO_h$ channel in the collapsing game. By \expref{Lemma}{lemma:partial-measure}, $H_k$ is indistinguishable from the $\calO_h'$. By our initial assumption and the triangle inequality, there exists a $j$ such that
\begin{equation}\label{eq:collapse-hybrid}
\left|\Pr[\algo A^{H_j}(1^n) = 1] - \Pr[\algo A^{H_{j+1}}(1^n) = 1]\right| \geq 1/\poly\,.
\end{equation}

We now build a distinguisher $\algo D$ against the Bernoulli-preserving property (with $\epsilon = 1/2$) of $h$. It proceeds as follows:
(1.) run $\algo A(1^n)$ and place its query state in register $X$; (2.) simulate oracle $H_j$ on $XY$ (use 2-wise independent hash to select sets $S$); (3.) prepare an extra qubit in the $\ket{0}$ state in register $W$, and invoke the oracle for $\chi_B$ on registers $X$ and $W$; (4.) measure and discard register $W$; (5.) return $XY$ to $\algo A$, and output what it outputs.

We now analyze $\algo D$. After the first two steps of $H_j$ (compute $h$, measure output register) the state of $\algo A$ (running as a subroutine of $\algo D$) is given by
$$
\sum_z \sum_{x \in h^{-1}(s)} \alpha_{xz} \ket{x}_X \ket{s}_Y \ket{z}_Z\,.
$$
Here $Z$ is a side information register private to $\algo A$. Applying
the $j$ measurements (third step of $H_j$)  results in
a state of the form
$
\sum_z \sum_{x \in M} \beta_{xz} \ket{x} \ket{ s} \ket{z}\,,
$
where $M$ is a subset of $h^{-1}(s)$. Applying the oracle for $\chi_B$ into an extra register now yields
$$
\sum_z \sum_{x \in M} \beta_{xz} \ket{x} \ket{ s} \ket{z}
\ket{\chi_B(x)}_W\,.
$$
Now consider the two cases of the Bernoulli-preserving game. 

First, in the ``hash-blinded'' case, $B = h^{-1}(C)$ for some set $C \subseteq Y$. This implies that $\chi_B(x) = \chi_C(h(x)) = \chi_C(s)$ for all $x \in M$. It follows that $W$ simply contains the classical bit $\chi_C(s)$; computing this bit, measuring it, and discarding it will thus have no effect. The state returned to $\algo A$ will then be identical to the output of the oracle $H_j$. Second, in the ``uniform blinding'' case, $B$ is a random subset of $X$ of size $2^{n-1}$, selected uniformly and independently of everything else in the algorithm thus far. Computing the characteristic function of $B$ into an extra qubit and then measuring and discarding that qubit implements the channel $\Xi_B$, i.e., the measurement $\{\Pi_B, \one - \Pi_B\}$. It follows that the state returned to $\algo A$ will be identical to the output of oracle $H_{j+1}$.

By \eqref{eq:collapse-hybrid}, it now follows that $\algo D$ is a
successful distinguisher in the \bfilter game for $h$. Hence $h$ is
not a \bfilter.
\end{proof}

\section{The problem with \BZ-unforgeability}\label{sec:BZisbroken}

Our search for a new definition of unforgeability for quantum-secure authentication is partly motivated by concerns about the \BZ security notion~\cite{BZ13a}. In this section, we make these concerns concrete by pointing out a significant security concern not addressed by this definition. Specifically, we demonstrate a MAC which is readily broken with an efficient attack, and yet is \BZ secure. The attack queries the MAC with a superposition over a particular subset $S$ of the message space, and then forges a valid tag for a message lying outside $S$.

One of the intuitive issues with \BZ is that it might rule out adversaries that have to measure, and thereby destroy, one or more post-query states to produce an interesting forgery. Constructing such an example seems not difficult at first. For instance, let us look at one-time \BZ, and construct a MAC from a \qPRF $f$ by sampling a key $k$ for $f$ and a superpolynomially-large prime $p$, and setting
\begin{equation}\label{eq:periodic-mac}
\Mac_{k, p}(m) = 
\begin{cases}
0^{n} &\text{ if } m = p,\\
(f_k(m \bmod p))  &\text{ otherwise.}
\end{cases}
\end{equation}
This MAC is forgeable: a quantum adversary can use a single query to perform period-finding on the MAC, and then forge at $0^n$. Intuitively, it seems plausible that the MAC is $1$-\BZ secure as period-finding uses a full measurement. This is incorrect for a somewhat subtle reason: identifying the hidden symmetry does not necessarily consume the post-query state completely, so an adversary can learn the period and a random input-output-pair of the MAC simultaneously. As shown in \expref{Lemma}{lem:double-spending} in \expref{Appendix}{sec:double-spending}, this is a special case of a fairly general situation, which makes establishing a proper \BZ ``counterexample'' difficult.

\subsection{A counterexample to \BZ}\label{sec:bz-kill}

Another intuitive problem with \BZ is that using the contents of a register can necessitate \emph{uncomputing} the contents of another one. We exploit this  insufficiency in the counterexample below.
Consider the following MAC construction.

\begin{construction}\label{con:real-BZ-killer}
	Given $k = (p, f, g,h)$ where $p\in\bits^{n}$ is a random period and $f,g,h: \bits^{n}\to\bits^{n}$ are random functions, define $M_k : \bits^{n+1} \to \bits^{2n}$ by
	\vspace{-.01cm}
	$$
	M_k(x) =
	\begin{cases}
	g(x' \bmod p)\|f(x')& x=1\|x'\,,\\
	0^n\|h(x')& x=0\|x', \ x'\neq p\,,\\
	0^{2n}& x=0\|p\,.
	\end{cases}
	$$
\end{construction}

Consider an adversary that queries as follows
\begin{equation}
\sum_{x, y} \ket{1, x}_X \ket{0^n}_{Y_1} \ket{y}_{Y_2}
\longmapsto 
\sum_{x, y} \ket{1, x}_X \ket{g_p(x)}_{Y_1} \ket{y \oplus f(x)}_{Y_2}\,,
\end{equation}
and then discards the first qubit and the $Y_2$ register; this yields $\sum_x \ket{x}\ket{g_p(x)}$. The adversary can extract $p$ via period-finding from polynomially-many such states, and then output $(0\|p,0^{2n})$. This attack only queries the MAC on messages starting with $1$ (e.g., ``\textsf{from Alice}''), and then forges at a message which starts with $0$ (e.g., ``\textsf{from Bob}.'') We emphasize that the forgery was never queried, not even with negligible amplitude. It is thus intuitively clear that this MAC does not provide secure authentication. And yet, despite this obvious and intuitive vulnerability, this MAC is in fact \BZ-secure.

\begin{theorem}\label{thm:real-BZ-killer}
	The MAC from \expref{Construction}{con:real-BZ-killer} is \BZ-secure.
\end{theorem}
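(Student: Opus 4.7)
The strategy is to model the two random functions $f$ and $g$ as independent Fourier oracles in the sense of \expref{Lemma}{lem:Zhancrement}, track how the adversary's knowledge is encoded in the two FO databases, and then split the analysis of the $q+1$ output pairs into two cases depending on whether the ``shortcut'' pair $(0\|p,0^{2n})$ appears.

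First I would rewrite $\Mac_k$ as a query protocol that uses independent oracles for $f$ and $g$, together with a register storing $p$: on input $(b,x')$, compute $f(x')$; if $b=1$, additionally compute $g(x' \bmod p)$; if $(b,x')=(0,p)$, overwrite the output with $0^{2n}$. A single $\Mac_k$-query thus triggers at most one $f$-query and at most one $g$-query. Passing to the Fourier-oracle picture for $f$ and $g$ separately, and applying \expref{Lemma}{lem:Zhancrement} to each, the joint adversary/oracle state after $q$ $\Mac_k$-queries lies in the subspace where the number operators satisfy $N_f\leq q$ and $N_g\leq q$. Call these two ``databases'' $D_f$ and $D_g$, each containing at most $q$ entries.

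Now let the adversary output $q+1$ distinct pairs $(m_i,t_i)$. Call a pair \emph{ordinary} if $m_i\neq 0\|p$ and \emph{special} if $m_i=0\|p$; at most one pair is special. For any ordinary pair, validity requires that the second half of $t_i$ equals $f(x'_i)$ (and, if $b_i=1$, that the first half equals $g(x'_i \bmod p)$). Hence if no special pair is present, then $q+1$ ordinary pairs force $q+1$ correct $f$-values at distinct inputs; but $D_f$ contains at most $q$ recorded positions, so at least one $f$-value must be correctly predicted at an unrecorded input. Conditioned on the databases, the value of $f$ at any such input is uniform over $\bits^n$ (this is exactly the ``random at unrecorded positions'' content of the FO representation), so by a union bound this case contributes at most $(q+1)\cdot 2^{-n}$.

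The remaining, and harder, case is when one pair is special, $(0\|\tilde p, 0^{2n})$, and the other $q$ are ordinary. The ordinary pairs now saturate $D_f$ (up to a negligible loss from guessed $f$-values, controlled as above), and the task reduces to bounding the probability that the adversary's output $\tilde p$ equals the secret $p$. I would isolate this as a separate lemma: an adversary making $q$ queries to $\Mac_k$ outputs $p$ correctly with probability $\mathrm{poly}(q)/2^n$. To prove it, condition on $D_f$ and on the adversary's internal randomness; the only information about $p$ remaining is carried by $D_g$. Because $g$ is an independent random function and $D_g$ has at most $q$ positions (with values that are uniform on the Fourier side conditioned on the database), the posterior distribution of $p$ remains statistically close to uniform over $\bits^n$, since any fixed $(D_f,D_g)$ is consistent with a $(1-\mathrm{poly}(q)/2^n)$-fraction of values of $p$. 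Combining the two cases and summing over choices of which pair is special yields the desired negligible bound.

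\textbf{Main obstacle.} The technical heart is the special-pair lemma, i.e.\ formalizing the claim that $q$ Fourier-oracle entries of $x'\mapsto g(x' \bmod p)$ leak no more than $\mathrm{poly}(q)/2^n$ of information about $p$. The subtlety is that period-finding works with one query classically-equivalent call, so one must argue not about query count in the usual sense but about database size in the FO; the proof must exploit that values of $g$ at unrecorded positions are still uniform (even after the adversary has extracted ``periodic structure'' from recorded positions), which is exactly where Zhandry's invariant in \expref{Lemma}{lem:Zhancrement} is essential. I expect the cleanest route is to define a projector onto states consistent with a fixed $p$, bound its commutator with the joint $f$-$g$ FO query unitary, and iterate in the spirit of Zhandry's analysis in~\cite{Zhandry2018}.
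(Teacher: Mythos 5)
Your skeleton (Fourier oracle for $f$, counting database entries, splitting on whether the special pair $(0\|p,0^{2n})$ is output) points in the right direction, but the ``special-pair lemma'' you isolate is false as stated, and the justification you sketch for it cannot be repaired without the idea that is the actual heart of the paper's proof. You claim that an adversary making $q$ queries outputs $p$ correctly with probability $\mathrm{poly}(q)/2^n$ because ``$D_g$ has at most $q$ positions'' and hence the posterior on $p$ is near-uniform. But the construction's own forgery attack refutes this: the query $\sum_{x,y}\ket{1,x}\ket{0^n}_{Y_1}\ket{y}_{Y_2}$ touches at most one cell of any database for $g$ per query, yet polynomially many such queries recover $p$ by period-finding. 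The information about $p$ is not carried by the \emph{values} stored in $D_g$ but by the \emph{collision structure} of $x\mapsto x\bmod p$, i.e., by which inputs $x$ in the adversary's coherent superposition share a database cell; a counting bound on $|D_g|$ says nothing about this. So ``$q$ recorded positions of $g_p$ leak only $\mathrm{poly}(q)/2^n$ about $p$'' is simply not true, and your proposed route (bounding a commutator of a ``consistent with $p$'' projector with the joint query unitary) does not obviously see the mechanism that makes the conditional version true.

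What the paper proves instead (\expref{Lemma}{lem:g-projected}) is that the two oracles are \emph{coupled through the shared input register}: if the $f$-Fourier-oracle is saturated with exactly $q$ non-zero cells at positions $K$, then every query's $X$-register is forced (by inserting the projectors $\tilde P_K$ and $P^{\neq 0}_{Y_2}$ after each query) to be supported on $K$, so the adversary's effective access to $g_p$ is only to $g_p|_K$ --- the coherent superposition over a coset that period-finding needs is destroyed. Only after this collapse does a counting argument apply, and it is a number-theoretic one, not an information-theoretic one about $|D_g|$: unless $p$ divides $x-x'$ for some $x,x'\in K$ (the event $p\in\mathcal P^{\mathrm{bad}}_K$, bounded via the divisor-counting estimate), the restriction $g_p|_K$ is a uniformly random function independent of $p$, so $p$ stays hidden. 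Your conditioning on $D_f$ appears in the statement but is never used in your argument, which is why your argument ``proves too much.'' Relatedly, the clean case split is not on whether the special pair appears but on whether the $f$-database has $<q$ or exactly $q$ non-zero entries: the period-finding attack leaves $D_f$ empty (the $Y_2$ register in uniform superposition writes $0$ into the database), which is exactly why it lands in the $<q$ case and cannot supply $q+1$ forgeries. Your case $<q$ counting is essentially right (modulo care when two output messages share the same $x'$), but without the restriction-to-$K$ lemma the special-pair case does not go through.
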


We briefly summarize the proof idea before presenting the details.
The superposition oracle technique outlined in
\expref{Section}{sec:superoracles} achieves something that naively seems
impossible due to the quantum no-cloning theorem: it records on which
inputs the adversary has made non-trivial queries.\footnote{For the
  standard unitary oracle for a classical function, a query has no
  effect when the output register is initialized in the uniform
  superposition of all strings.} The information recorded in this way
cannot, in general be utilized in its entirety---after all, the
premise of the superposition oracle is that the measurement
$\mathcal M_F$ that samples the random function is delayed until after
the algorithm has finished, but it still has to be performed. Any
measurement $\mathcal M'$ that does not commute with $\mathcal M_F$
and is performed before $\mathcal M_F$, can disturb the outcome of
$\mathcal M_F$. If however, $\mathcal M'$ only has polynomially many
possible outcomes, that disturbance is at most inverse polynomial
according to \expref{Lemma}{lem:add-measure}.

Here, we sample the random function $f$ using a superposition oracle, and we chose to use a measurement $\mathcal M'$ to determine the \emph{number} of nontrivial queries that the adversary has made to $f$, which is polynomial by assumption. %
Random functions are \BZ-secure~\cite{BZ13a}, so the only way to break \BZ security is to output $(0\|p, 0^{2n})$ and $q$ other input-output-pairs. Querying messages that start with $0$ clearly only yields a negligible advantage in guessing $p$ by the Grover lower bound, so we consider an adversary querying only on strings starting with $1$. We distinguish two cases, either the adversary makes exactly $q$ non-trivial queries to $f$, or less than that. In the latter case, the success probability is negligible by the \BZ-security of $f$ and $h$. In the former case, we have to analyze the probability that the adversary guesses $p$ correctly. $f$ is not needed for that, so the superposition oracle register can be used to measure the set of $q$ queries that the adversary made. Using an inductive argument reminiscent of the hybrid method~\cite{BBBV} we show that this set is almost independent of $p$, and hence the period is equal to the difference of two of the queried inputs only with negligible probability. But if that is not the  case, the periodic version of $g$ is indistinguishable from a random function for that adversary which is independent of $p$.

\begin{proof}
  Let \adver be an adversary that makes $q$ quantum queries and
  outputs $q+1$ distinct candidate forgeries (where $q$ is selected by
  $\adver$ at runtime). We let this adversary interact with a mixed
  oracle, where $g, h$ and $p$ are treated as random variables, and
  $f$ is represented as a Fourier Oracle as in
  \expref{Section}{sec:superoracles}. We denote the relevant quantum
  registers as follows. First, the quantum oracle for $\Mac_k$ is a
  unitary operator on four registers: (i.) the $(n+1)$-qubit input
  register $X$, (ii.) the $n$-qubit output register $Y_1$ into which
  $g_p: x \mapsto g(x \mod p)$ is computed, and (iii.) the $n$-qubit
  output register $Y_2$ which, if the input $x$ starts with a $1$,
  interacts with the Fourier Oracle, which has (iv.) an $(n\cdot2^n)$-qubit register denoted by $F$, with the
  subregister corresponding to input $x\in\bits^{n}$ denoted by
  $F_x$. We set $Y = Y_1Y_2$. Finally, the workspace of $\adver$ is a $\poly$-qubit
  register denoted by $E$.

By the \BZ-unforgeability of random functions, any \BZ-adversary needs to output $(0\|p, 0^{2n})$ when successful, except with negligible probability. Indeed, suppose an adversary $\adver$ output $q+1$ input-output pairs of $M_k$, none of which is equal $(0\|p, 0^{2n})$ with noticeable probability. Then  we can use that adversary to construct a \BZ-adversary against $\tilde M_k$ defined as
$$
\tilde M_k(x) =
\begin{cases}
f(x')& x=1\|x'\,,\\
0^n\|h(x')& x=0\|x'
\end{cases}
$$
by simulating an  $M_k$-oracle for $\adver$ using the oracle for $\tilde M_k$. To learn $p$, an adversary that makes a polynomial number of queries needs to use messages starting with 1, as the lower bound for unstructured search \cite{BBBV} implies that querying messages starting with $0$ only provides negligible advantage for learning $p$. We will thus prove in the following that an adversary whose queries are entirely supported on the space of messages starting with $1$ cannot succeed. The proof for a general adversary is similar, if more laborious. We thus omit $h$ from the description in the following, focusing on the task of outputting $q$ input-output pairs of $f$ and the period $p$.

	Let $\ket\psi_{XYEF}$ denote the final state of $\adver$ and the Fourier Oracle, after the $q+1$ candidate forgeries have been measured, but prior to any other measurements. Recall that each ``number projector'' $P_l$ from \expref{Section}{sec:superoracles} projects $F$ to the subspace spanned by basis states with exactly $l$ non-zero entries. We apply to $\ket{\psi}$ the two-outcome measurement defined by $P_{<q}=\sum_{l=0}^{q-1}P_l$ and its complementary projector $P_{\ge q}=\one-P_{<q}$, effectively measuring whether $F$ contains fewer than $q$ non-zero entries (i.e., registers $F_x$ containing a state other than $0^n$); note that it cannot contain more than $q$ by \expref{Lemma}{lem:Zhancrement}. By \expref{Lemma}{lem:add-measure}, applying this measurement decreases the success probability of $\algo A$ at any particular task by a factor $1/2$. We handle the two possible outcomes ($<q$ and $q$) separately.
	
	\vspace{4pt}\noindent\textbf{Case $<q$:}~Let $\ket{\psi^{<q}}_{XYEF}:=P_{<q}\ket\psi_{XYEF}$ be the post-measurement state. Note that $P_l \ket{\psi^{<q}} = 0$ for all $l \geq q$, i.e., each basis component of $\ket{\psi^{<q}}$ has fewer than $q$ non-zero entries in $F$. On the other hand, the output of $\adver$ contains at least $q$ candidate input-output pairs $(x_i, y_i)$ of $f$ (since $(0\|p, 0^{2n})$ is the only input-output pair of $\Mac_k$ that does not also contain an input-output pair of a random function). We apply the $q$-outcome measurement to $F$ which asks: ``among the registers $\{F_{x_i}\}_{i=1}^q$, which is the first one to contain $0^n$?'' This measurement is defined by projectors
	\[
	\Pi_j := \bigotimes_{i=1}^j \left(\one - \proj{0^n}\right)_{F_{x_i}} \otimes \proj{0^n}_{F_{x_j}}\,.
	\]
	Adding this measurement to $\adver$ ensures that $F_{x_j}$ is in the state $0^n$ for some $j$, at the cost of multiplying $\adver$'s success probability by $1/q$ (by \expref{Lemma}{lem:add-measure}). Recalling that, in the Fourier Oracle picture, $f(x_j)$ is the result of QFT-ing and then fully measuring $F_{x_j}$, we see that $f(x_j)$ is now uniformly random and independent of $y_j$. The original $\adver$ (i.e., without the measurement $\{\Pi_j\}_j$) thus succeeded with probability at most $q \cdot 2^{-n}$. \footnote{This argument amounts to an alternative proof that random functions are \BZ-secure.}
	
	\vspace{4pt}\noindent\textbf{Case $q$:}~We will denote the post-measurement state in this case by $\ket{\psi^q_{g_p}} := P_q \ket{\psi}$, emphasizing that the state was produced by interacting with the oracle $g_p$. By the \BZ-security of $f$ (\expref{Theorem}{thm:Rand-BZ}) it suffices to show that the correct period $p$ is output by \adver (by measuring, say, some designated subregister of $E$ of the state $\ket{\psi^q_{g_p}}$) with at most negligible probability. %
	Since testing success at outputting $p$ does not involve the register $F$, we are free to apply any quantum channel to the $F$ register of $\ket{\psi^q_{g_p}}$. We choose to measure which $q$ subregisters of $F$ are in a non-zero state. This projective measurement is defined by projectors
	\begin{equation}
	P_K= \bigotimes_{x \in K} \left(\one - \proj{0^n}\right)_{F_x} \otimes \bigotimes_{x \notin K} \proj {0^n}_{F_x}
	\qquad \text{and} \qquad
	P_{\text{rest}}=\one-\sum_K P_K\,,
	\end{equation}
	where $K\subset\bits^n$ with $|K|=q$. Note that $P_\text{rest} = \one - P_q$, so the outcome ``rest'' never occurs for $\ket{\psi^q_{g_p}}$.  In the following we denote by $\mathbf K$ the random variable obtained from this measurement. We also set some other random variables in boldface to better distinguish them from particular values they can take.
	
	Now consider the preparation of the state $\ket{\psi^q}$ (by $\adver$ and the Fourier Oracle) with an arbitrary choice of oracle function $h: \bits^n \to \bits^n$ in place of $g_p$.  We will denote this state by $\ket{\psi_h^q}$. We now show that, conditioned on a particular measurement outcome $K$, we can arbitrarily relabel the values of $h$ outside $K$, without affecting the output state of the algorithm.

	\begin{lemma}\label{lem:g-projected}
		Let $K \subset \bits^n$ with $|K| = q$ and $h, h': \bits^n \to \bits^n$ a pair of functions satisfying $h(x) = h'(x)$ for all $x \in K$. Then $P_K\ket{\psi_{h}^q} = P_K \ket{\psi_{h'}^q}$.
	\end{lemma}
	
	\begin{proof}
		Let $W^{(j)}_{XYEF} := V^{(j)}_{XYE}U^{(h)}_{XY_1}U^{\mathrm{FO}}_{XY_2F}$, where $V^{(j)}$ is \adver's $j$-th internal unitary, $U^{(h)}$ is the standard oracle unitary for $h$, and $U^{\mathrm{FO}}$ is the Fourier Oracle unitary as described in \expref{Section}{sec:superoracles}. The intermediate states are
		\begin{equation}
		\ket{\varphi_{h,k}}_{XYEF} := W^{(k)} \cdots W^{(1)} V^{(0)} \ket{0}_{XYEF}\,,
		\end{equation}
		and the final state is $\ket{\psi_h} := \ket{\varphi_{h,q}}$. By \expref{Lemma}{lem:Zhancrement}, $P_l\ket{\varphi_{k,h}}=0$ for all $l>k$, so 
		$$
		\ket{\psi^q_h} = P_q \ket{\psi_h} = P_q W^{(q)} \cdots W^{(k+1)} \ket{\varphi_{k,h}}
		= \sum_{l=0}^k P_q W^{(q)} \cdots W^{(k+1)} P_l \ket{\varphi_{k,h}}\,.
		$$
		For the $l$ term in the sum above, the unitary applies $q-k$ queries to $P_l\ket{\varphi_{k,h}}$; by \expref{Lemma}{lem:Zhancrement} this term is thus zero unless $l = k$. We can therefore insert a $P_k$ after the $k$-th query for free when projecting with $P_q$ in the end. Explicitly,
		\begin{equation}
		\ket{\psi^q_h} = P_q W^{(q)} P_{q-1} W^{(q-1)} P_{q-2} \cdots P_1 W^{(1)} V^{(0}) \ket{0}_{XYEF}\,.
		\end{equation}
		
		We first show that we can apply
		$$
		\tilde P_K : = \bigotimes_{x \in K} \one_{F_x} \otimes \bigotimes_{x \in K^c} \proj {0^n}_{F_x}
		$$
		after every query of \adver. 
		
		We are interested in the state $P_K\ket{\psi}_{XYEF}=P_KP_q\ket{\psi}_{XYEF}$. We can make a similar argument as above to show that we can project with $\tilde P_K$ after every query as well. As the FO-unitary is the only one that acts on $F$, and because $\tilde P_K\ket 0^{\otimes n2^n}=\ket 0^{\otimes n2^n}$, we can even apply the projector $\tilde P_K$ before and after each query. We write $N=N_K+N_{K^c}$, where
		\begin{equation}
		N_K=\sum_{x\in K}(\one-\proj 0)_{F_x}\otimes \one^{\otimes (2^n-1)},
		\end{equation}
		i.e., $N_K$ and $N_{K^c}$ measure the number of non-zero entries inside and outside $K$, respectively. \expref{Lemma}{lem:Zhancrement} applies to $N_K$ and $N_{K^c}$ separately, and $ P_KN_K\ket{\psi}_{XYEF}=N_KP_K\ket{\psi}_{XYEF}=qP_K\ket{\psi}_{XYEF}$. Therefore we have, defining
		\begin{equation}
		U_{>k}=V^{(q)}_{XYE}U^{(h)}_{XY_1}U^{\mathrm{FO}}_{XY_2F}V^{(q-1)}_{XYE}U^{(h)}_{XY_1}U^{\mathrm{FO}}_{XY_2F}...V^{(k+1)}_{XYE}U^{(h)}_{XY_1}U^{\mathrm{FO}}_{XY_2F}V^{(k)}_{XYE}
		\end{equation}
		and using the same argument as above, that 
		\begin{equation}
		P_KU_{>k}N\ket{\psi^k}=P_KU_{>k}N_K\ket{\psi^k}=kP_KU_{>k}\ket{\psi^k},
		\end{equation}
		and hence
		\begin{equation}
		P_KU_{>k}N_{K^c}\ket{\psi^k}=P_KU_{>k}N\ket{\psi^k}-P_KU_{>k}N_{K}\ket{\psi^k}=0,
		\end{equation}
		implying $N_{K^c}\ket{\psi^k}=0$. But the projector onto the zero-eigenspace of $N_{K^c}$ is $\tilde P_K$, so $\tilde P_K\ket{\psi^k}=\ket{\psi^k}$.
		
		With an even simpler argument we can insert a projector $P^{\neq 0}_{Y_2}=\left(\one-\proj 0\right)_{Y_2}$ before every query. This is because $U^{\mathrm{FO}}\ket 0_{Y_2}\ket{\gamma}_{XF}=\ket 0_{Y_2}\ket{\gamma}_{XF}$, and therefore the number operator eignenvalue does not increase.
		
		To show that $U^{(h)}\tilde P_KU^{\mathrm{FO}}\left(P^{\neq 0}_{Y_2}\otimes (\tilde P_K)_F\right)$ is independent of the values outside $K$, we observe that for all $x\notin K$, $y\in\bits^n\setminus\{ 0^n\}$ and for all states $\ket\gamma_{Y_1EF}$, we have 
		\begin{align}
		&U^{(g,p)}\tilde P_KU^{\mathrm{FO}}\left(P^{\neq 0}_{Y_2}\otimes (\tilde P_K)_F\right)\ket x_X\otimes \ket{\phi_y}_{Y_2}\otimes \ket\gamma_{Y_1EF}\nonumber\\
		&=U^{(g,p)}\ket x_X\otimes\left(\tilde P_K\left(H^{\otimes n}\right)_{Y_2}\CNOT_{Y_2:F_x} \ket y_{Y_2}\otimes \tilde P_K\ket\gamma_{Y_1EF}\right)\nonumber\\
		&=U^{(g,p)}\ket x_X\otimes\left(\tilde P_K\proj 0_{F_x}\left(H^{\otimes n}\right)_{Y_2}\CNOT_{Y_2:F_x}\proj 0_{F_x} \ket y_{Y_2}\otimes \tilde P_K\ket\gamma_{Y_1EF}\right)\nonumber\\
		&=U^{(g,p)}\ket x_X\otimes\left(\tilde P_K\proj 0_{F_x}\ketbra{y}{0}_{F_x}\otimes \ket{\phi_y}_{Y_2}\otimes \tilde P_K\ket\gamma_{Y_1EF}\right)\nonumber\\
		&=0,\label{eq:basic-calculation}
		\end{align}
		where we have used that for all $x\notin K$ it holds that $\proj 0_{F_x}\tilde P_K=\tilde P_K$. This implies that our artificial oracle $U^{(g,p)}\tilde P_KU^{\mathrm{FO}}\left(P^{\neq 0}_{Y_2}\otimes (\tilde P_K)_F\right)$ (together with a renormalization) only gives \adver access to $g(x\mod p)$ for inputs $x\in K$.%
		
		This concludes the proof of \expref{Lemma}{lem:g-projected}.
	\end{proof}
	
	We now continue with the ``\textbf{case $q$}'' proof of the theorem. We bound \adver's success probability separately for each outcome $K$. Indeed, it suffices to show that for all $K\subset\bits^n$, $|K|=q$ the probability that the output contains a pair $(0\|p,0^{2n})$ is negligible if \adver continues with 
	\begin{equation}
	\ket{\psi^{q, K}}:=\frac{P_K\ket{\psi^{q}}}{\left\|P_K\ket{\psi^{q}}\right\|_2}
	\end{equation}
	in place of $\ket\psi$.

	We show that the periodic oracle can be replaced by a non-periodic one, except with negligible probability. More precisely, if $p'$ is \adver's output, there exists an event $E$ such that $\Pr[E]=1-\negl$ and $\Pr[p'=p_0|E, p=p_0]=\Pr[p'=p_0|E, p=0]$ for all $p_0\in \bits^n$. 
	In the following, let us denote the oracle for the MAC of \expref{Construction}{con:real-BZ-killer-summary} with functions $f$ and $g$ and period $p$ by $\mathcal O_{f,g_p}$. We define
	\begin{equation}
	\mathcal P^{\text{bad}}_K=\left\{p\in\bits^n\Big|\exists x,x'\in K: \,p|x-x'\right\}.
	\end{equation}
	For $K\subset \bits^n$ and $p\in\bits^n$, if $p\notin\mathcal P^{\text{bad}}_K$, let $T_{K,p}\subset\bits^n$ be a transversal for $p$ (i.e., a maximal set such that for $x,y\in T_{K,p}$ it holds that $x\neq y\mod p$) such that $T_{K,p}\cap K=K$. Using this transversal, we can define for each $K$ a random periodic function $g^{(K)}_p$ that is identically distributed with $g_p$, as follows. 
	\begin{itemize}
		\item If $p \in\mathcal P^{\text{bad}}_K$, we set $g_p^{(K)}(x) = g(x \mod p)$. 
		\item If $p\notin\mathcal P^{\text{bad}}_K$, we set $g^{(K)}_{p}(x) = g(y)$ for $y \in T_{K,p}$ such that $x=y\mod p$. 
	\end{itemize}
	
	For a unitary algorithm $\tilde \adver$ that makes $\ell$ queries to an oracle $\mathcal O_{f, g_p}$, we define  the following procedures:
	\smallskip
	
	\noindent
	\textbf{Procedure 0}
	\begin{enumerate}
		\item Sample $f$, $g$ and $p$.
		\item Run $\tilde \adver$ with oracle $\mathcal O_{f,g_p}$ resulting in a final adversary-oracle state $\ket{\hat\psi}$. Apply the measurement $\{P_{\ge \ell}, P_{<\ell}\}$ to $F$. If outcome is $<\ell$, output ``fail.''
		\item Measure $K$. If $p\in\mathcal P^{\text{bad}}_K$, output ``bad.'' Otherwise, let $\ket \psi$ be the post-measurement state of adversary and oracle, i.e., $\ket \psi=P_KP_{\ge \ell}\ket{\hat\psi}=P_K\ket{\hat\psi}$.
		\item Output $(K,p,\ket \psi)$.
	\end{enumerate}
	\textbf{Procedure $\mathbf{0}_{\mathbf{K}}$}\vspace{7pt}
	
	Same as Procedure 0, except with oracle $\mathcal O_{f, g^{(K)}_p}$ instead of $\mathcal O_{f, g_p}$.\\
	\textbf{Procedure 1}
	\begin{enumerate}
		\item Sample $f$ and $g$.
		\item Run \adver with an oracle $\mathcal O_{f,g_0}$ resulting in a final adversary-oracle state $\ket{\hat\psi}$. Apply the measurement $\{P_{\ge \ell}, P_{<\ell}\}$ to $F$. If outcome is $<\ell$, output ``fail.''
		\item Measure $K$ and sample $p$. If $p\in\mathcal P^{\text{bad}}_K$, output ``bad.'' Otherwise, let $\ket \psi$ be the post-measurement state of adversary and oracle, i.e., $\ket \psi=P_KP_{\ge \ell}\ket{\hat\psi}=P_K\ket{\hat\psi}$.
		\item Output $(K,p,\ket \psi)$.
	\end{enumerate}
	We first observe that for all $K$, the outputs of procedures $0$ and $0_K$ are identically distributed because $g_p$ and $g_{p,K}$ are. Note that for any fixed $K$, $P_K P_q = P_K$; this, together with \expref{Lemma}{lem:g-projected}, implies that
	\begin{equation}
	\Pr\left[(K,p,\ket \psi)\leftarrow\mathrm{Procedure}\ 0_K\right]=		\Pr\left[(K,p,\ket \psi)\leftarrow\mathrm{Procedure}\ 1\right].
	\end{equation}
	It follows that, still for a fixed $K$,
	\begin{equation}\label{eq:procedures-equal}
	\Pr\left[(K,p,\ket \psi)\leftarrow\mathrm{Procedure}\ 0\right]=		\Pr\left[(K,p,\ket \psi)\leftarrow\mathrm{Procedure}\ 1\right].
	\end{equation}
	This implies also that in any of the three procedures, conditioned on the event that the output is neither ``fail'' nor ``bad'' and on a fixed first output $K$, $p$ is uniformly distributed on $\bits^n\setminus\mathcal P_{\mathrm{bad}}$. In other words,
	\begin{equation}\label{eq:stilluniform}
	\Pr\bigl[\mathbf p=p\,|\,\mathbf K=K\wedge \mathbf p\notin \mathcal P^{\text{bad}}_K\bigr]=\begin{cases}
	\left(2^n-|\mathcal P^{\text{bad}}_K|\right)^{-1}& p\notin \mathcal P^{\text{bad}}_K,\\
	0&\text{else}.
	\end{cases}
	\end{equation}
	Let us denote the event that a procedure outputs a triple $(K,p,\ket \psi)$ by ``good.'' 
	
	In what follows, we fix a particular period $p$, an outcome of the period-sampling step (step 1 in Procedures $0$ and $0_K$ and step 3 in Procedure $1$). Given a number $\ell$ of queries we identify three subspaces of $\Hi_F$ corresponding to the three outcomes ``good,'' ``bad'' and ``fail'' of the procedures above:
	\begin{align}
	S^{\ell}_{\mathrm{fail}}&=\mathrm{range}(P_{<\ell})\,,\\
	S^{\ell}_{\mathrm{bad}}&=\mathrm{span}\left\{\mathrm{range}\left(P_K\right)\Big|K\subset \bits^n,\ |K|=\ell,\ \exists x,y\in K: p|x-y\right\}, \text{ and}\\
	S^{\ell}_{\mathrm{good}}&=\left(S^{\ell}_{\mathrm{fail}}\right)^\perp\cap\left(S^{\ell}_{\mathrm{bad}}\right)^\perp.
	\end{align}
	We emphasize that the decomposition defined by these subspaces depends on the aforementioned period $p$. We let $P_i^\ell$ for $i\in \{\mathrm{good}, \mathrm{bad}, \mathrm{fail}\}$ denote the projectors onto these subsets.

	By the above reasoning we know that for any algorithm that makes $\ell$ queries to an oracle $\mathcal O$ and has final state $\ket{\psi_{\mathcal O}^\ell}_{AF}$, it holds that $P^\ell_{\mathrm{good}}\ket{\psi_{\mathcal O_{f,g_p}}^\ell}_{AF}=P^\ell_{\mathrm{good}}\ket{\psi_{\mathcal O_{f,g_0}}^\ell}_{AF}$. It is easy to see that when another query is made, i.e., the $\ell+1$st query of some algorithm, some transitions from $S^{\ell}_i$ to $S^{\ell+1,p}_j$ are impossible. We only need one impossibility, namely that according to \expref{Lemma}{lem:Zhancrement}, $P_i^{\ell+1} U^{\mathrm{FO}}P_{\mathrm{fail}}^\ell=0$ for all $i\neq\mathrm{fail}$. In words, once an adversary has fallen behind his $q$-query plan of making one non-trivial query to $f$ in every query, he can never catch up. Also note that for $\ell=0$, $S^{\ell}_{\mathrm{fail}}=S^{\ell}_{\mathrm{bad}}=0$. It is now easy to show by induction that for a $q$-query adversary \adver with final adversary-oracle state $\ket\phi$ it holds that
	\begin{equation}\label{eq:bad-decomposition}
	\left\|P^q_{\mathrm{bad}}\ket{\phi}\right\|_2\le\sum_{\ell=1}^q\left\|P^\ell_{\mathrm{bad}}U^{\mathrm FO}P^{\ell-1}_{\mathrm{good}}\ket{\phi_\ell}\right\|_2,
	\end{equation}
	where $\ket{\phi_\ell}$ is the adversary oracle state before the $\ell$th query. The induction step is proven as follows. Assume the above formula is true for $q$. Then we have for a $(q+1)$-query adversary \adver with final adversary-oracle state $\ket{\phi}$
	\begin{align}
	\left\|P^{q+1}_{\mathrm{bad}}\ket\phi\right\|_2&=\left\|P^{q+1}_{\mathrm{bad}}\ket{\psi_{q+1}}\right\|_2\\
	&\le \left\|P^{q+1}_{\mathrm{bad}}U^{\mathrm FO}P^{q}_{\mathrm{good}}\ket{\phi_{q+1}}\right\|_2+\left\|P^{q+1}_{\mathrm{bad}}U^{\mathrm FO}P^{q}_{\mathrm{bad}}\ket{\phi_{q+1}}\right\|_2\\
	&\quad+\left\|P^{q+1}_{\mathrm{bad}}U^{\mathrm FO}P^{q}_{\mathrm{fail}}\ket{\phi_{q+1}}\right\|_2\\
	&=\left\|P^{q+1}_{\mathrm{bad}}U^{\mathrm FO}P^{q}_{\mathrm{good}}\ket{\phi_{q+1}}\right\|_2+\left\|P^{q+1}_{\mathrm{bad}}U^{\mathrm FO}P^{q}_{\mathrm{bad}}\ket{\phi_{q+1}}\right\|_2\\
	&\le\left\|P^{q+1}_{\mathrm{bad}}U^{\mathrm FO}P^{q}_{\mathrm{good}}\ket{\phi_{q+1}}\right\|_2+\left\|U^{\mathrm FO}P^{q}_{\mathrm{bad}}\ket{\phi_{q+1}}\right\|_2\\
	&=\left\|P^{q+1}_{\mathrm{bad}}U^{\mathrm FO}P^{q}_{\mathrm{good}}\ket{\phi_{q+1}}\right\|_2+\left\|P^{q}_{\mathrm{bad}}\ket{\psi_{q}}\right\|_2.
	\end{align}
	Here we have used the unitary invariance of the Euclidean together with the observation that the state $\ket{\phi}$ is obtained from the state $\ket{\psi_{q+1}}$ right after the $(q+1)$-st query of \adver by a unitary acting on the adversary's space only and which therefore commutes with $P^{q}_{\mathrm{bad}}$ in the first, the triangle inequality in the second line, the observation that $P_i^{\ell+1} U^{\mathrm{FO}}P_{\mathrm{fail}}^\ell=0$ in the third line, and the fact that $\|P\|_\infty\le 1$ for any projector $P$ in the fourth line. In the fifth line we use the same argument as in the first line, just for $\ket{\phi_{q+1}}$ and $\ket{\psi_{q}}$. This proves Equation \eqref{eq:bad-decomposition}.
	
	It remains to bound
        \[
          \left\|P^\ell_{\mathrm{bad}}U^{\mathrm FO}P^{\ell-1}_{\mathrm{good}}\ket{\phi_\ell}\right\|_2\,.
        \]
        To this end, suppose that we measure the $X$-register of
        $\ket{\phi_\ell}$ in the computational basis with outcome
        $ \mathbf X_\ell$, as well as $\mathbf K^{(\ell-1)}$ the set
        of nonzero registers in $F$. According to Equations
        \eqref{eq:procedures-equal} and \eqref{eq:stilluniform}, we
        have that $\mathbf X_\ell$ and $\mathbf p$ are independent and
        $\mathbf p$ is uniformly distributed on
        $\bits^n\setminus \mathcal P^{\mathrm{bad}}_{K}$ conditioned
        on $\mathbf p\notin\mathcal P^{\mathrm{bad}}_{K}$ and
        $\mathbf K=K$ for a fixed $(\ell-1)$-element set $K$. It
        follows that
	\begin{align}
	&\Pr\left[\mathbf p\in\mathcal P^{\mathrm{bad}}_{\mathbf K\cup \{\mathbf X_\ell\}} \Big|\mathbf K=K\wedge \mathbf p\notin P^{\mathrm{bad}}_{K}\right]\\
	&=\Pr\left[\exists y\in K: \mathbf p|(\mathbf X_\ell-y)\Big|\mathbf K=K\wedge\mathbf p\notin P^{\mathrm{bad}}_{K}\right]\\
	&\le \frac{(\ell-1)2^{\frac{c' n}{\log n}}}{2^n-\frac{(\ell-1)(\ell-2)}{2}2^{\frac{c' n}{\log n}}}\le (\ell-1)2^{-n\left(1-\frac{c }{\log n}\right)}\,.\label{eq:bound-given-stuff}
	\end{align}
	Here the last inequality holds for some $0<c<c'$ and large enough $n$, and we have used in the third line that there exists a constant $c'>0$ such that the number of divisors of an integer $M$ is bounded by $2^{c\frac{\log M}{\log\log M}}$ which also implies
	\begin{equation}
	\left|\mathcal{P}^{\text{bad}}_K\right|\le \frac{(\ell-1)(\ell-2)}{2}2^{c\frac{n}{\log n}}
	\end{equation}
	for all $K\subset\bits^n$, $|K|=\ell$.
	We would now like to relate the above probability to
	\[
	\mathbb E\left[\left\|P^\ell_{\mathrm{bad}}U^{\mathrm FO}P^{\ell-1}_{\mathrm{good}}\ket{\phi_\ell}\right\|_2^2\right]\,.
	\]
	To this end we analyze how the operator $P^\ell_{\mathrm{bad}}U^{\mathrm FO}P^{\ell-1}_{\mathrm{good}}$ behaves on states of the form $\ket x_X\otimes\ket{\phi_y}\otimes  \ket \zeta_{EF}$ such that $\left(P_{K}\right)_F\ket \zeta_{EF}=\ket \zeta_{EF}$ for some fixed $K\not \ni x$ and $p\in\bits^n$ such that $p\not\in\mathcal{P}^{\text{bad}}_{K}$ . We calculate
	\begin{align}
	&U^{\mathrm FO}P^{\ell-1}_{\mathrm{good}}\ket x_X\otimes \ket{\phi_y}\otimes  \ket \zeta_{EF}\\
	&=U^{\mathrm FO}\ket x_X\otimes \ket{\phi_y}\otimes  \ket \zeta_{EF_{K}}\otimes \ket{0^{n(2^n-\ell+1)}}_{F_{K^c}}\\
	&=\left(H^{\otimes n}\right)_Y\CNOT_{Y:F_x}\ket x_X\otimes \ket{y}\otimes  \ket \zeta_{EF_{K}}\otimes \ket{0^{n(2^n-\ell+1)}}_{F_{K^c}}\\
	&=\left(H^{\otimes n}\right)_Y\ket x_X\otimes \ket{y}\otimes  \ket \zeta_{EF_{K}}\otimes\ket{y}_{F_x} \otimes \ket{0^{n(2^n-\ell)}}_{F_{\left({K}\cup \{x\}\right)^c}}\\
	&=\ket x_X\otimes \ket{\phi_y}\otimes  \ket \zeta_{EF_{K}}\otimes\ket{y}_{F_x} \otimes \ket{0^{n(2^n-\ell)}}_{F_{\left({K}\cup \{x\}\right)^c}}.
	\end{align}
	In the first equation we have use the assumptions that $\left(P_{K}\right)_F\ket \zeta_{EF}=\ket \zeta_{EF}$ and $p\not\in\mathcal{P}^{\text{bad}}_{K}$; the rest of the calculation is analogous to Equation \eqref{eq:basic-calculation}. This implies that
	\begin{align}
	P_{K\cup\{x\}}U^{\mathrm FO}P^{\ell-1}_{\mathrm{good}}\ket x_X\otimes \ket{\phi_y}\otimes  \ket \zeta_{EF}=U^{\mathrm FO}P^{\ell-1}_{\mathrm{good}}\ket x_X\otimes \ket{\phi_y}\otimes  \ket \zeta_{EF}
	\end{align}
	and therefore 
	\begin{align}
	&P^\ell_{\mathrm{bad}}U^{\mathrm FO}P^{\ell-1}_{\mathrm{good}}\ket x_X\otimes \ket{\phi_y}\otimes  \ket \zeta_{EF}\\
	&=\begin{cases}
	U^{\mathrm FO}P^{\ell-1}_{\mathrm{good}}\ket x_X\otimes \ket{\phi_y}\otimes  \ket \zeta_{EF}& \text{ if }\exists x'\in K: p|(x-x')\,,\\
	0&\text{ otherwise.}
	\end{cases}\label{eq:Pbad-simplify}
	\end{align}
	We therefore calculate for a fixed $p$,
	\begin{align*}
	\left\|P^\ell_{\mathrm{bad}}U^{\mathrm FO}P^{\ell-1}_{\mathrm{good}}\ket{\phi_\ell}\right\|_2^2
	&=\Bigg\|\sum_{\substack{K\subset\bits^n\\ |K|=\ell-1\\p\not\in\mathcal{P}^{\text{bad}}_{K}}}\sum_{x\in\bits^n}P^\ell_{\mathrm{bad}}U^{\mathrm FO}\left(\proj x_X\otimes P_{K}\right)\ket{\phi_\ell}\Bigg\|_2^2\\
	&=\Bigg\|U^{\mathrm FO}\sum_{\substack{K\subset\bits^n\\ |K|=\ell-1\\p\not\in\mathcal{P}^{\text{bad}}_{K}}}\sum_{\substack{x\in\bits^n\setminus K\\ \exists x'\in K: p|(x-x')}}\left(\proj x_X\otimes P_{K}\right)\ket{\phi_\ell}\Bigg\|_2^2\\
	&=\sum_{\substack{K\subset\bits^n\\ |K|=\ell-1\\p\not\in\mathcal{P}^{\text{bad}}_{K}}}\sum_{\substack{x\in\bits^n\setminus K\\ \exists x'\in K: p|(x-x')}}\left\|\left(\proj x_X\otimes P_{K}\right)\ket{\phi_\ell}\right\|_2^2\\
	&=\Pr\left[p\notin\mathcal{P}^{\text{bad}}_{\mathbf K}\wedge p\in \mathcal{P}^{\text{bad}}_{\mathbf K\cup \{\mathbf X_\ell\}}\Big|\mathbf p=p\right]\,.
	\end{align*}
	Using Equation \eqref{eq:bound-given-stuff} we can bound
	\begin{align*}
	&\mathbb E_{p\leftarrow \bits^n}\left[\Pr\left[p\notin\mathcal{P}^{\text{bad}}_{\mathbf K}\wedge p\in \mathcal{P}^{\text{bad}}_{\mathbf K\cup \{\mathbf X_\ell\}}\right]\right]\\
	&=\Pr\left[\mathbf p\notin\mathcal{P}^{\text{bad}}_{\mathbf K}\wedge \mathbf p\in \mathcal{P}^{\text{bad}}_{\mathbf K\cup \{\mathbf X_\ell\}}\right]\\
	&=\sum_{K\subset \bits^n}\Pr\left[\mathbf p\notin\mathcal{P}^{\text{bad}}_{K}\wedge \mathbf p\in \mathcal{P}^{\text{bad}}_{K\cup \{\mathbf X_\ell\}}\Big|\mathbf K=K_0\right]\Pr[\mathbf K=K_0]\\
	&=\sum_{K\subset \bits^n}\Pr\left[\mathbf p\in \mathcal{P}^{\text{bad}}_{K\cup \{\mathbf X_\ell\}}\Big|\mathbf K=K_0\wedge \mathbf p\notin\mathcal{P}^{\text{bad}}_{\mathbf K}\right]\Pr\left[\mathbf p\notin \mathcal{P}^{\text{bad}}_{K}\wedge \mathbf K=K\right]\\
	&\le \Pr\left[\mathbf p\notin \mathcal{P}^{\text{bad}}_{\mathbf K}\right](\ell-1)2^{-n\left(1-\frac{c }{\log n}\right)}\\
	&\le (\ell-1)2^{-n\left(1-\frac{c }{\log n}\right)}\,.
	\end{align*}
	Here we have used Equation \eqref{eq:bound-given-stuff} in the first inequality. The probability in the first line is taken over a run of the adversary with a fixed period and random $g$ and $f$, and in the other lines the period is picked uniformly at random from $\bits^n$ as for a properly generated key in \expref{Construction}{con:real-BZ-killer-summary}. The last two equations together imply
	\begin{align}
	\mathbb E\left[\left\|P^\ell_{\mathrm{bad}}U^{\mathrm FO}P^{\ell-1}_{\mathrm{good}}\ket{\phi_\ell}\right\|_2^2\right]&\le (\ell-1)2^{-n\left(1-\frac{c }{\log n}\right)}.
	\end{align}
	Plugging this into Equation \eqref{eq:bad-decomposition} yields
	\begin{align}
	\Pr\left[\mathbf p\in \mathcal P^{\text{bad}}_{\mathbf K}\right]&=\mathbb E\left[\left\|P^q_{\mathrm{bad}}\ket{\phi}\right\|_2^2\right]\nonumber\\
	&\le  \mathbb E\left[\left(\sum_{i=1}^q\left\|P^\ell_{\mathrm{bad}}U^{\mathrm FO}P^{\ell-1}_{\mathrm{good}}\ket{\phi_\ell}\right\|_2\right)^2\right]\nonumber\\
	&\le q \sum_{i=1}^q \mathbb E\left[\left\|P^\ell_{\mathrm{bad}}U^{\mathrm FO}P^{\ell-1}_{\mathrm{good}}\ket{\phi_\ell}\right\|_2^2\right]\nonumber\\
	&\le\left(\sum_{\ell=1}^q\sqrt{(\ell-1)2^{-n\left(1-\frac{c }{\log n}\right)}}\right)^2\nonumber\\
	&\le \frac{q^2(q-1)}{2}2^{-n\left(1-\frac{c }{\log n}\right)}\label{eq:badbound}
	\end{align}
	using the Cauchy-Schwartz inequality in the second line.
	This finally implies that the adversary's guess $p'$ is equal to $p$ and the measurement $<q$ vs. $\ge q$ returns $\ge q$ with probability at most
	\begin{align}
	&	\Pr[\mathbf p=\mathbf p'\wedge \text{``$\ge q$''}]\\
	\le& \Pr\left[\mathbf p\in\mathcal{P}^{\text{bad}}_{\mathbf K}\wedge \text{``$\ge q$''} \right]+ \Pr\left[\mathbf p\notin\mathcal{P}^{\text{bad}}_{\mathbf K} \wedge \mathbf p=\mathbf p'\wedge \text{``$\ge q$''}\right]\\
	\le&\Pr\left[\mathbf p\in\mathcal{P}^{\text{bad}}_{\mathbf K}\wedge\text{``$\ge q$''}\right]+ \Pr\left[\mathbf  p=\mathbf p'\big|\mathbf p\notin\mathcal{P}^{\text{bad}}_{\mathbf K}\wedge\text{``$\ge q$''}\right]\\
	\le&\frac{q^2(q-1)}{2}2^{-n\left(1-\frac{c }{\log n}\right)}+\left(2^n-\frac{(\ell-1)(\ell-2)}{2}2^{\frac{c' n}{\log n}}\right)^{-1}\\
	\le& \negl.
	\end{align}
	Here we have used Equation \eqref{eq:badbound} and the uniformity of $\mathbf p $ conditioned on $\mathbf p\notin\mathcal{P}^{\text{bad}}_{\mathbf K}$ and $\mathbf K=K$ in the  last line.
\end{proof}

\paragraph{Remark.}  
It's not hard to see that the MAC from \expref{Construction}{con:real-BZ-killer} is not \GYZ-secure. Indeed, observe that the forging adversary described above queries on messages starting with $0$ only, and then forges successfully on a message starting with $1$. If the scheme was \GYZ secure, then in the accepting case, the portion of this adversary between the query and the final output would have a simulator which leaves the computational basis invariant. Such a simulator cannot change the first bit of the message from $0$ to $1$, a contradiction.

By  \expref{Theorem}{thm:buinsecure}, this \BZ-secure MAC is also not \BU-secure.

\begin{corollary}\label{cor:real-BZ-killer-BU}
	The MAC from \expref{Construction}{con:real-BZ-killer} is \BU-insecure.
\end{corollary}

\appendix
\section{Technical proofs}

\subsection{The Fourier Oracle number operator}\label{sec:proof-zhancrement}

We now restate and prove \expref{Lemma}{lem:Zhancrement}.
\numberoplem*

\begin{proof}
	Let $\ket\psi_{XYEF}$ be an arbitrary query state, where $X$ and $Y$ are the query input and output registers, $E$ is the algorithm's internal register and $F$ is the FO register. We expand the state in the computational basis of $X$,
	\begin{equation}
		\ket\psi_{XYEF}=\sum_{x \in \bits^n}p(x)\ket x_X\ket{\psi_x}_{YEF}.
	\end{equation}
	Now observe that
\[
          U^{\mathrm{FO}}_{XYF}\ket x_X\ket{\psi_x}_{YEF}=\ket x_X\left(\widetilde{\CNOT}^{\otimes m}\right)_{Y:F_x}\ket{\psi_x}_{YEF}
\]
        with $\widetilde{\CNOT}_{A:B}=H_A\CNOT_{A:B}H_A$, and
        therefore
	\begin{align*}
          \Bigl[N_F,U_{XYF}\Bigr]\ket x_X\ket{\psi_x}_{YEF}
          &=\ket x_X\left[N_F,\left(\widetilde{\CNOT}^{\otimes m}\right)_{Y:F_x}\right]\ket{\psi_x}_{YEF}\nonumber\\
                                                            &=\ket x_X\left[(\one-\proj 0)_{F_x},\left(\widetilde{\CNOT}^{\otimes m}\right)_{Y:F_x}\right]\ket{\psi_x}_{YEF}.
	\end{align*}
	It follows that
	\begin{align}
          \Bigl\|\Bigl[N_F,&U_{XYF}\Bigr]\ket\psi_{XYEF}\Bigr\|_2\\
          &=\sum_{x \in \bits^n}p(x)\left\|\left[N_F,U_{XYF}\right]\ket{\psi_x}_{YEF}\right\|_2\nonumber\\
                                                                  &=\sum_{x \in \bits^n}p(x)\left\|\left[(\one-\proj 0)_{F_x},\left(\widetilde{\CNOT}^{\otimes m}\right)_{Y:F_x}\right]\ket{\psi_x}_{YEF}\right\|_2\nonumber\\
                                                                  &\le \left\|\left[(\one-\proj 0)_{F_{0^n}},\left(\widetilde{\CNOT}^{\otimes m}\right)_{Y:F_{0^n}}\right]\right\|_\infty,
	\end{align}
	where we have used the definition of the operator norm and the normalization of $\ket\psi_{XYEF}$ in the last line. For a unitary $U$ and a projector $P$, it is easy to see that $\|[U,P]\|_\infty\le 1$, as
$
		[U,P]=PU(\one-P)-(\one-P)UP
$
	is a sum of two operators that have orthogonal support and singular values smaller or equal to $1$. We therefore get 
$
	\left\|\left[N_F,U_{XYF}\right]\ket\psi_{XYEF}\right\|_2 \le 1,
$
	and as the state $\ket\psi$ was arbitrary, this implies
$
	\bigl\|\left[N_F,U_{XYF}\right]\bigr\|_\infty\le 1.
$
	The example from equation \eqref{eq:example1} shows that the above is actually an equality. The observation that $P_l\eta_F=0$ for all $l>0$ and an induction argument proves the second statement of the lemma.
\end{proof}

\subsection{A simulation theorem: The effect of random blinding}\label{sec:simulation-theorem-proof}

We now restate \expref{Theorem}{thm:blinded-algo} and provide a full proof.
\randomblinding*

\begin{proof}
	For a function $Q: \{0,1\}^n \rightarrow \{0,1\}^m$, we let $\mathcal{O}_Q$ denote the unitary map $\ket{x}\ket{y} \mapsto \ket{x}\ket{y \oplus Q(x)}$. Recall that $\algo A$ is specified by a fixed initial state $\ket{\phi_0}$ in some finite-dimensional Hilbert space, a sequence of $T$ unitary ``computation'' operators  $C_1, \ldots, C_k$, and a POVM $\{P_i : i \in I\}$. The distribution (on $I$) resulting from the algorithm applied to the oracle $\mathcal O_Q$ is given by applying the POVM to the state
$
	\ket{\phi^{Q}} := C_T \mathcal{O}_Q C_{T-1} \cdots \mathcal{O}_Q C_0 \ket{\phi_0}\,.
$
	Recall that if the trace distance between two such states satisfies
$
	\delta\bigl(\ket{\phi^{Q_1}}, \ket{\phi^{Q_2}}\bigr) := \sqrt{1 - |\langle\phi^{Q_1}| \phi^{Q_2}\rangle|^2} \leq \epsilon
$
	then the distance in total variation between the distributions produced by \emph{any} POVM on these two states is no more than $\epsilon$. In our case, we are interested in controlling $\Exp_B[\delta(\phi^F, \phi^{P \oplus F})].$ Define $F'=F\oplus P$. In preparation for a standard hybrid argument, define
$$
\ket{\phi_k} = \underbrace{C_T \mathcal{O}_{F'} \cdots \mathcal{O}_{F'}}_{(\dagger)} C_k \underbrace{\mathcal{O}_{F} \ldots \mathcal{O}_F C_{0}}_{(\ddag)} \ket{\phi_0} 
\qquad
\ket{\phi_k^F} = C_k \underbrace{\mathcal{O}_{F} \ldots \mathcal{O}_F C_{0}}_{(\ddag)} \ket{\phi_0}\,,
$$
	so that all oracle invocations in $(\dagger)$ are answered according to $\mathcal{O}_{F'}$ and all those in $(\ddag)$ are answered according to $\mathcal{O}_{F}$. Since $\delta$ is a metric on pure states, we have
	\[
	\Exp \delta(\ket{\phi^F}, \ket{\phi^{P \oplus F}}) \leq \Exp \sum_{k=1}^T \delta(\ket{\phi_k}, \ket{\phi_{k-1}}) = \sum_{k=1}^T \Exp \delta(\ket{\phi_k}, \ket{\phi_{k-1}})\,.
	\]
Note that $\delta$ is invariant under (simultaneous) unitary action, and hence for any $F$, $B$, and $P$,
\begin{align*}
	 &{\phantom{=~}}\delta(\ket{\phi_k}, \ket{\phi_{k-1}}) \nonumber\\
	 & = \delta(C_T \mathcal{O}_{F'} \cdots \mathcal{O}_{F'} C_k \mathcal{O}_{F} \ldots \mathcal{O}_F C_{0} \ket{\phi_0}, C_T \mathcal{O}_{F'} \cdots \mathcal{O}_{F'} C_{k-1} \mathcal{O}_{F} \ldots \mathcal{O}_F C_{0} \ket{\phi_0})\\
	& = \delta(\mathcal{O}_{F} C_{k-1} \ldots \mathcal{O}_F C_{0} \ket{\phi_0},   \mathcal{O}_{F'} C_{k-1} \ldots \mathcal{O}_F C_{0} \ket{\phi_0})\\
	& = \delta(\mathcal{O}_{F} \ket{\phi^{F}_{k-1}},   \mathcal{O}_{F'} \ket{\phi^{F}_{k-1}})
	=  \delta(\ket{\phi^{F}_{k-1}},   \mathcal{O}_F \mathcal{O}_{F'} \ket{\phi^{F}_{k-1}})
	=  \delta(\ket{\phi^{F}_{k-1}},   \mathcal{O}_{P} \ket{\phi^{F}_{k-1}})\,.
\end{align*}
For pure states $\ket{\psi}$ and $\ket{\psi'}$,
$\delta(\ket{\psi},\ket{\psi'}) \leq \| \ket{\psi} - \ket{\psi'} \|$.
Note that
$\ket{\psi} = \Pi_B \ket{\psi} + (I - \Pi_B) \ket{\psi}$, and $O_P$
operates identically on $(I - \Pi_B) \ket{\psi}$. Therefore
\begin{align*}
  \Exp_B[\delta(\phi^F, \phi^{P \oplus F})] &\leq T
                                              \max_{\ket{\phi}}
                                              \Exp \|
                                              \ket{\phi} -
                                              \calO_P\ket{\phi}\| \\
                                            & = T\max_{\ket{\phi}} \Exp_B \|\Pi_B
                                              \ket{\phi}  -
                                              \calO_P \Pi_B
                                              \ket{\phi} + (1
                                              - \calO_P)(I -
                                              \Pi_B)\ket{\phi} 
                                              \| \\
                                            & \leq
                                              T \max_{\ket{\phi}}
                                              \Exp_B
                                              (\|\Pi_B\ket{\phi}\|
                                              +
                                              \|\calO_P \Pi_B
                                              \ket{\phi}\|) \\
                                            & = 2 T \max_{\ket{\phi}}
                                              \Exp_B \|\Pi_B
                                              \ket{\phi}\| \\
                                            & \leq
                                              2T
                                              \max_{\ket{\phi}}\sqrt{\Exp_B
                                              |\bra{\phi}
                                              \Pi_B
                                              \ket{\phi}|}
                                              \quad
                                              \text{(Jensen's
                                              inequality)}\, .
        \end{align*}
        Let $\pi$ be a uniformly random element of the symmetric group
        on $\{0,1\}^n$ and $U_\pi$ be the unitary operator associated
        with the permutation $\pi$.  We have that
        \[ \Exp_B \Bigl[|\langle \phi | \Pi_{B} |\phi\rangle|\Bigr] =
          \Exp_B \Exp_\pi \Bigl[|\langle \phi| U_\pi \Pi_{B}
          U_\pi^{-1} |\phi\rangle|\Bigr] =2^{-n} \Exp_B[\tr{\Pi_B}] =
          \epsilon\, . \]
        Thus we conclude that
        $
        \Exp_B[\delta(\phi^F, \phi^{P \oplus F})] \leq 2T\sqrt \veps
        $.
        \end{proof}

\section{SUPPLEMENTARY MATERIAL}
\subsection{Non-adaptive quantum queries and ``double spending''}\label{sec:double-spending}

The following lemma shows that if there exists a non-adaptive quantum algorithm $\adver$ making $q$ queries to a function $f:\{0,1\}^n\to \{0,1\}^m$ that learns a certain property $p(f)$, then with inverse polynomial probability, there exists another non-adaptive $q$-query algorithm that learns $p(f)$ and $q$ input-output-pairs with inverse polynomial probability. For this to hold, we need to assume that $\adver$ makes its queries using a blank output register (i.e., initialized in the $\ket{0}$ state). This is the case, e.g., in period-finding and Simon's algorithm.

In the following, denote the set of $n$-bit-to-$m$-bit functions by $\mathcal F(n,m)$.
\begin{lemma}[Double spending lemma]\label{lem:double-spending}
Let $F\subseteq \mathcal F(n,m)$ be a set of functions, $P$ a set, $p: F\to P$ a function, and $D$ a probability distribution on $F$. Suppose there exists a quantum query algorithm $\adver$ which makes $q$ non-adaptive quantum queries to $\mathcal O_f$ with blank output register for $f \from D$ and outputs $p(f)$ with $1/\poly$ probability. Then there also exists an algorithm $\adver'$ which makes $q$ non-adaptive quantum queries to $\mathcal O_f$ for $f \from D$ and outputs both $p(f)$ and $q$ input-output pairs of $f$ with $1/\poly$ probability.
\end{lemma}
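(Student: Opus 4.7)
The plan is to construct $\adver'$ from $\adver$ by running $\adver$ to extract $p(f)$, uncomputing, and then measuring the query registers of the same execution to harvest input-output pairs. By the principle of deferred measurement, I may assume $\adver$ is a purely unitary algorithm that prepares an initial state on query input register $X$, query output register $Y$, and workspace $E$ (with a designated output subregister $P\subseteq E$); makes its $q$ non-adaptive queries in parallel using blank $Y$ register; applies a final unitary $U$ on $(X,Y,E)$; and outputs the result of measuring $P$ in the computational basis. The key structural observation is that, by non-adaptivity and the blank-output-register hypothesis, the post-query state has the form
\[
\ket{\psi}=\sum_{\bar x,e}\beta_{\bar x,e}\ket{\bar x}_X\ket{f(\bar x)}_Y\ket{e}_E\, ,
\]
and in particular satisfies $P_{\mathrm{good}}\ket{\psi}=\ket{\psi}$, where $P_{\mathrm{good}}$ is the projector on $(X,Y)$ onto computational basis states $\ket{\bar x,\bar y}$ with $\bar y=f(\bar x)$.

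The new algorithm $\adver'$ proceeds as follows: (i) replicate $\adver$'s state preparation and non-adaptive queries; (ii) apply $\adver$'s final unitary $U$; (iii) measure $P$ in the computational basis to obtain a guess $p'$; (iv) apply $U^\dagger$; (v) measure $(X,Y)$ in the computational basis to obtain candidate pairs $((x_i,y_i))_{i=1}^q$; (vi) output $p'$ together with these pairs. By construction, $\adver'$ makes the same $q$ non-adaptive queries as $\adver$, and all remaining operations are internal.

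For the analysis, fix a function $f$ and let $\delta_f$ denote the probability that $\adver^{\mathcal O_f}$ outputs $p(f)$. Let $V_{p(f)}$ denote the projector onto $P$ being in computational basis state $\ket{p(f)}$, and set $\Pi:=U^\dagger V_{p(f)} U$; then $\Pi$ is a projection with $\bra{\psi}\Pi\ket{\psi}=\delta_f$. Conditioned on the step (iii) measurement returning $p'=p(f)$, an event of probability $\delta_f$, the state immediately before step (v) equals $\ket{\chi}:=\Pi\ket{\psi}/\sqrt{\delta_f}$, so the conditional probability that step (v) returns valid pairs is $\bra{\chi}P_{\mathrm{good}}\ket{\chi}=\|P_{\mathrm{good}}\Pi\ket{\psi}\|^2/\delta_f$. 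Using $P_{\mathrm{good}}\ket{\psi}=\ket{\psi}$ together with Cauchy-Schwarz,
\[
\delta_f=\bra{\psi}\Pi\ket{\psi}=\bra{\psi}P_{\mathrm{good}}\Pi\ket{\psi}\le\|P_{\mathrm{good}}\Pi\ket{\psi}\|\, ,
\]
which gives $\bra{\chi}P_{\mathrm{good}}\ket{\chi}\ge\delta_f$. Hence $\adver'$ succeeds on $f$ with probability at least $\delta_f^2$, and Jensen's inequality then yields $\Exp_{f\from D}[\delta_f^2]\ge(\Exp_{f\from D}[\delta_f])^2\ge 1/\poly$.

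The main obstacle is the structural setup: deferring all measurements past the queries without disrupting the non-adaptive query pattern, and verifying that the resulting post-query state lies in the support of $P_{\mathrm{good}}$. Both follow from the non-adaptivity and blank-output-register hypotheses but should be stated carefully. Once in place, the core of the argument is the Cauchy-Schwarz bound above, which makes precise the informal ``overlap $\sqrt{\delta}$ between post-measurement and pre-measurement states'' observation the paper makes in the intuition section preceding this lemma.
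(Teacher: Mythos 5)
Your proof is correct and follows essentially the same route as the paper's: both rest on the observation that the post-query state is supported entirely on valid input-output pairs, and that conditioning on successful extraction of $p(f)$ leaves a state with overlap at least $\sqrt{\delta_f}$ with that support, yielding an overall success probability of at least $\delta_f^2$. The only differences are cosmetic---the paper keeps the general POVM and bounds $\braket{\psi_1}{\psi_2^{p(f)}}$ via $\sqrt{E}\succeq E$, whereas you purify the measurement, uncompute with $U^\dagger$, and apply Cauchy--Schwarz against $P_{\mathrm{good}}$; your explicit handling of the workspace register and the Jensen step over $f \from D$ are minor tightenings of the same argument.
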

\begin{proof}
  Let $\mathcal X=\{0,1\}^n$, $\mathcal Y=\{0,1\}^m$ and
  $\Hi_Z=\mathbb C \mathcal Z$ for $Z=X,Y$. Set $\adver^\mathcal O (1^n)=\mathcal E\left( \mathcal O^{\otimes q}\ket
    \psi_{X^q}\otimes \ket 0_{Y^q}\right)$ where $\ket \psi$ is some
  input state and $\mathcal E=\{E_p\}_{p\in P}$ is a POVM on
  $\Hi_X^{\otimes q}\otimes \Hi_Y^{\otimes q}$ with outcomes labelled
  by the possible properties of $f$. Let
  $\ket{\psi_1}=\mathcal O^{\otimes q}\ket \psi_{X^q}\otimes \ket
  0_{Y^q}$. $\adver$ outputs $p(f)$ with inverse polynomial
  probability, say with probability
  $p_{\mathrm{succ}}=\bra{\psi_1}E_{p(f)}\ket{\psi_1}$. It follows that
  the post-measurement state conditioned on the outcome
  $p(f)$,
  $$\ket{\psi_2^{p(f)}}=\frac{\sqrt{E_{p(f)}}\ket{\psi_1}}{\sqrt{\bra{\psi_1}E_{p(f)}\ket{\psi_1}}},$$
  has inverse polynomial overlap with $\ket{\psi_1}$,
	\begin{align}
	\braket{\psi_1}{\psi_2^{p(f)}}&=\frac{\bra{\psi_1}\sqrt{E_{p(f)}}\ket{\psi_1}}{\sqrt{\bra{\psi_1}E_{p(f)}\ket{\psi_1}}}\nonumber\\
	&\ge\sqrt{\bra{\psi_1}E_{p(f)}\ket{\psi_1}}.
	\end{align}
	This implies immediately that measuring $\ket{\psi_2^{p(f)}}$ in the computational basis will yield $q$ input output pairs of $f$ with inverse polynomial success probability.
\end{proof}

We remark that the distribution of input-output pairs is at most $1 - 1/\poly$ far from the distribution one would get by simply measuring immediately after the query of $\algo A$. This means that, in the case of period-finding and Simon's algorithm (where the queries are uniform), the input-output pairs will be distinct with non-negligible probability.

\subsection{Alternative proof that random functions are \BZ-secure}

Using \expref{Lemma}{lem:Zhancrement}, we can give a simple proof of the fact that a random function is \BZ-secure. Because of its simplicity, and because much of it can be reused to prove a separation between \BZ and \BU, we provide this proof below.

\begin{theorem}[\cite{BZ13a}]\label{thm:Rand-BZ}
	An algorithm making $q$ quantum queries to a random oracle $f:\bits^n\to\bits^m$ produces $q+1$ input-output pairs of $f$ with probability at most
	\begin{equation}
		\frac{2^{\lceil\log(q+1)\rceil}}{2^m}.
	\end{equation}
\end{theorem}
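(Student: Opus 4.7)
My plan is to use the Fourier Oracle (FO) representation introduced in Section~\ref{sec:superoracles}, together with Lemmas~\ref{lem:Zhancrement} and~\ref{lem:add-measure}. Let $\algo A$ make $q$ quantum queries to $f$ and output $q+1$ input-output pairs $(x_1,y_1),\ldots,(x_{q+1},y_{q+1})$ with distinct $x_i$'s. Because the FO reproduces the output distribution of any random-oracle algorithm exactly, it suffices to bound the success probability in the FO picture.

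By Lemma~\ref{lem:Zhancrement}, after $q$ queries the joint state of $\algo A$ and the FO register lies in $\ker P_l$ for all $l > q$, so at most $q$ of the subregisters $F_x$ are in a nonzero state. Since $\algo A$ outputs $q+1$ distinct locations $x_1,\ldots,x_{q+1}$, the pigeonhole principle guarantees that at least one $x_j$ satisfies $F_{x_j} = \ket{0^n}$. I would then insert a measurement that identifies the smallest such index $j \in \{1,\ldots,q+1\}$, implemented as a sequence of $\lceil \log(q+1)\rceil$ binary measurements on the $F$ register (each simply asking whether some half of the candidate registers contains a nonzero subregister). By iterating Lemma~\ref{lem:add-measure}, this additional measurement reduces $\algo A$'s success probability by a factor of at most $2^{\lceil\log(q+1)\rceil}$.

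Finally, conditioned on a particular outcome $j$, the subregister $F_{x_j}$ is in the state $\ket{0^n}$, and the success of $\algo A$ is determined by whether $y_j = f(x_j)$. In the FO picture, however, the value $f(x_j)$ is obtained by applying $H^{\otimes m}$ to $F_{x_j}$ and then measuring in the computational basis; since $H^{\otimes m}\ket{0^n}$ is a uniform superposition, this yields a uniformly random $m$-bit string, independent of the rest of the adversary-oracle state and in particular independent of $y_j$. Hence $\Pr[y_j = f(x_j)] = 2^{-m}$, which combined with the loss from the inserted measurement yields the claimed bound $2^{\lceil\log(q+1)\rceil}/2^m$.

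The main obstacle, which is really a matter of careful bookkeeping rather than a deep difficulty, is justifying that measuring $F_{x_j}$ in the Hadamard basis after the adversary has completed its queries truly produces an output independent of $y_j$. This hinges on (i) the fact that $\algo A$'s output is a classical register that can be measured before the Hadamard-and-measure on $F_{x_j}$ is performed, so $y_j$ is fixed before $f(x_j)$ is sampled, and (ii) the observation that the FO register factor $F_{x_j}$ is literally in the product state $\ket{0^n}$ conditional on the inserted measurement, so nothing in the adversary's workspace is entangled with it.
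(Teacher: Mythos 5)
Your proposal is correct and follows essentially the same route as the paper's own proof: the Fourier-oracle picture, the pigeonhole argument via \expref{Lemma}{lem:Zhancrement}, a binary search of $\lceil\log(q+1)\rceil$ two-outcome measurements to locate a register $F_{x_j}$ in the all-zero state (paying the $2^{\lceil\log(q+1)\rceil}$ factor via \expref{Lemma}{lem:add-measure}), and the observation that a product-state $\ket{0^m}$ register yields a uniformly random, independent value of $f(x_j)$ after the Hadamard-and-measure step. The only nitpick is the initial phrasing that pigeonhole ``guarantees'' some $F_{x_j}$ is zero before the measurement is inserted --- as the paper notes, this holds only branch-by-branch in superposition --- but your subsequent conditioning on the inserted measurement handles this exactly as the paper does.
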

Note that the probability bound is within a factor of $2$ of the one obtained in \cite{BZ13a}, and matches it for $q+1=2^k$, $k\in\N$.
\begin{proof}
	Let $\adver$ be an adversary that, when provided with the quantum random oracle $f$, outputs $q+1$ candidate input-output pairs. Formally, let $\rho_{(X,Y)^{q+1}F}$ be the joint cq-state of the adversary and the FO, where the classical registers $(X,Y)^{q+1}$ contain $\adver$'s output and $F$ is the FO's register. If we wanted to determine the success of $\adver$ at this point, we would apply the Fourier transform to $F$, and then measure $F$ and check if the outcome for $F_{x_i}$ is $y_i$ for each $(x_i,y_i)$ output by $\adver$.  
	
	Note that $P_l\rho=0$ for all $l>q$ by \expref{Lemma}{lem:Zhancrement}, i.e., there are at most $q$ entries of $F$ that are nonzero. This implies that the entry corresponding to at least one of the inputs that $\adver$ has output is, in fact, equal to $0^m$. However, this is only true in superposition: different branches of the superposition may have different entries in the state $\ket{0^m}$. We will deal with this issue by thinking about a new algorithm $\algo B$, which will simulate the entire execution of $\adver$ (including the oracle) and then perform a small number of additional measurements prior to the success check. The additional measurements will find a pair $(x_{i_0},y_{i_0})$ in $(X, Y)^{q+1}$ such that $F_{x_{i_0}}$ is actually in the state $\ket{0^m}$ (in every branch of the superposition). The probability that $y_{i_0} = f(x_{i_0})$ (in the execution of $\algo B$) will then be $2^{-m}$. We will then apply \expref{Lemma}{lem:add-measure} to show that the success probability of $\adver$ is not much better.
	
	 We now describe $\algo B$ in detail. Initially, $\algo B$ simulates both $\adver$ and the oracle. After $\adver$ has finished, but before the success check is performed, $\algo B$ (which is in the state $\rho$) applies binary search to the $q+1$ inputs that $\adver$ has output. The goal is to find an input $x_{i_0}$ such that $F_{x_{i_0}}$ is in state $\ket{0^m}$. We do this using binary measurements that ask ``are any of the registers $F_{x_{i_1}},...,F_{x_{i_k}}$ in the state $\ket{0^m}$?'' We split up the set $S_0=\{x_1,...,x_{q+1}\}$ into two subsets $S_0^\textsf{L}=\{x_1,...,x_{\lfloor  (q+1)/2\rfloor}\}$ and $S_0^\textsf{R}=\{x_{\lfloor (q+1)/2\rfloor+1},...,x_{q+1}\}$, and measure whether $F_x$ is in a state different from $\ket{0^n}$ for all $x\in S_0^\textsf{L}$. This is done using the binary measurement given by
	 \begin{equation}
	 	P_1=(\one-\proj 0)_{x_1}\otimes ...\otimes(\one-\proj 0)_{x_{\lfloor  (q+1)/2\rfloor}}\otimes \one^{\otimes(2^n-\lfloor  (q+1)/2\rfloor)}
	 \end{equation} 
	 and its complementary projector $P_0=\one-P_1$.
	 If the outcome is no, we set $S_1=S_0^\textsf{L}$, if it is yes then we set $S_1=S_0^\textsf{R}$. This makes sure that we continue with a set that contains an input such that the corresponding FO register is in state $\ket{0^n}$. Now we repeat the described steps using $S_1$ in place of $S_0$ and continue recursively until we encounter a set $S_l$ with only one element, say $w$. Continuing with the success check, we now know that $F_w$ is in the state $\ket{0^m}$, which implies that $f(w)$ is uniformly random and independent of $\adver$'s output. Indeed, a register that is in a pure state is automatically in product with the rest of the universe, and $f(w)$ is determined by applying $H^{\otimes m}$, which transforms $\ket{0^m}$ into $\ket{\phi_0}$, and measuring, which yields a uniformly random outcome. Therefore $\adver$'s success probability is at most $2^{-m}$. The total number of binary measurements for the binary search procedure is upper-bounded by $\lceil\log(q+1)\rceil$, so an application of \expref{Lemma}{lem:add-measure} finishes the proof.
\end{proof}

\subsection{A useful lemma on \bfilter}\label{app:more-bfilter}

Recall that blinding a function $f: \bits^n \to \bits^t$ on a set
$B \subseteq \bits^n$ results in the blinded function $Bf$ defined by
$B f (x) = \bot = (0^t, 1)$ for $x \in B$ and $B f (x) = (f(x), 0)$
for $x \notin B$. The following lemma, which is implicit in the proof
of Hash-and-MAC construction, could be useful in security reductions
involving \bfilter. 

\begin{lemma}
Let $h: \bits^n \to \bits^m$ be a \bfilter and $f:\bits^n \to \bits^t$ an efficiently computable function. Then for all oracle QPTs $(\algo A, \epsilon)$, we have
$$
\left|\Pr_{B \from \mathcal O_\epsilon} \left[ \algo A^{B f}(1^n) = 1\right]
- \Pr_{B \from \mathcal O_\epsilon^h} \left[ \algo A^{B f}(1^n) = 1\right]\right|
\leq \negl\,.
$$
\end{lemma}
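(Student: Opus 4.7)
The plan is to reduce this to the \bfilter property of $h$ via a direct simulation argument. Given an oracle QPT $(\algo A, \epsilon)$ that distinguishes the uniform-blinded $Bf$ from the hash-blinded $Bf$ with noticeable advantage, I will construct a distinguisher $\algo D$ that separates $\mathcal B_\epsilon$ from $\mathcal B_\epsilon^h$ using only oracle access to $\chi_B$, contradicting the \bfilter property of $h$ (which is assumed to hold for all adversaries, including ones that may internally use any efficiently computable function such as our $f$).

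The distinguisher $\algo D$, on input $1^n$ and with oracle $\chi_B$, runs $\algo A$ as a subroutine and answers each of its quantum oracle queries as follows. On a query register state $\ket{x}_X \ket{y}_Y$ (where $Y$ has $t+1$ qubits to accommodate the blinding symbol $\bot = 0^t\|1$), $\algo D$ adjoins two fresh ancilla registers $A$ (one qubit) and $C$ ($t$ qubits), queries $\chi_B$ to obtain $\ket{x}\ket{y}\ket{\chi_B(x)}_A\ket{0}_C$, computes $f$ into $C$ to get $\ket{x}\ket{y}\ket{\chi_B(x)}_A\ket{f(x)}_C$, then applies the reversible map
\[
\ket{y}_Y\ket{b}_A\ket{v}_C \longmapsto \ket{y \oplus (\bar b \cdot (v\|0) \,\oplus\, b\cdot (0^t\|1))}_Y\ket{b}_A\ket{v}_C,
\]
and finally uncomputes $f$ (by recomputing it into $C$) and $\chi_B$ (by a second query to the $\chi_B$ oracle). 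The resulting net action on $XY$ is exactly $\ket{x}\ket{y}\mapsto\ket{x}\ket{y\oplus Bf(x)}$, i.e., a perfect simulation of a single query to $Bf$. After $\algo A$ halts, $\algo D$ outputs $\algo A$'s output bit.

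Because $\algo D$'s simulation of the $Bf$-oracle is perfect and uses only (twice as many) queries to $\chi_B$ together with efficient unitaries and ancillas, the output distribution of $\algo D^{\chi_B}(1^n)$ is identical to that of $\algo A^{Bf}(1^n)$ for the same sampled $B$. Hence
\[
\Bigl|\Pr_{B\from\mathcal B_\epsilon}[\algo A^{Bf}(1^n)=1] - \Pr_{B\from\mathcal B^h_\epsilon}[\algo A^{Bf}(1^n)=1]\Bigr|
= \Bigl|\Pr_{B\from\mathcal B_\epsilon}[\algo D^{\chi_B}(1^n)=1] - \Pr_{B\from\mathcal B^h_\epsilon}[\algo D^{\chi_B}(1^n)=1]\Bigr|,
\]
and the right-hand side is negligible by \expref{Definition}{def:bphash}. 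The only real step of substance is verifying that the simulation is exact and reversible; nothing subtle happens at the quantum-information level, since $\chi_B$ and $f$ are both classical functions and the blinded function $Bf$ is deterministically computable from them.
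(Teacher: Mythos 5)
Your proof is correct and takes essentially the same route as the paper: simulate each query to $Bf$ with two queries to $\chi_B$ plus computing and uncomputing $f$, then invoke the \bfilter property of $h$ directly. Your write-up is in fact slightly more careful than the paper's about the polarity of the control (outputting $\bot=0^t\|1$ when $\chi_B(x)=1$ and $f(x)\|0$ otherwise), which matches the stated blinding convention.
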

\begin{proof}
It suffices to observe that one can simulate the oracle for $B f$ using two calls to an oracle for $\chi_B$ and two executions of $f$, as follows.
\begin{align*}
\ket{x}\ket{y} \ket b
&\mapsto \ket{x}\ket{y}\ket b\ket{\chi_B(x)}\ket{f(x)}\\
&\mapsto \ket{x}\ket{y \oplus \chi_B(x) \cdot f(x)}\ket{b\oplus \chi_B(x)}\ket{\chi_B(x)}\ket{f(x)}\\
&\mapsto \ket{x}\ket{y \oplus \chi_B(x) \cdot f(x)}\ket{b\oplus \chi_B(x)}\\
&= \ket{x}\ket{y \oplus B f(x)}.
\end{align*}
In the second step, we applied the CCNOT (Toffoli) gate to the second
register, with the fourth and fifth register as the controls and a
CNOT to the third register with the fourth register as a control. With
this observation, it is straightforward to turn any distinguisher for
$B_\epsilon f$ vs. $B_\epsilon^h f$ into one for $\chi_{B_\epsilon}$
vs.  $\chi_{B_\epsilon^h}$.
\end{proof}

\end{document}